\documentclass[11pt]{article}

\usepackage[T5]{fontenc}

\usepackage{amsmath,amssymb,amsthm,algpseudocode,float,tkz-berge,caption,mathtools,thmtools,thm-restate,scalerel,stackengine,verbatim}
\usepackage{algorithm2e}
  
\stackMath
\newcommand\reallywidehat[1]{%
\savestack{\tmpbox}{\stretchto{%
  \scaleto{%
      \scalerel*[\widthof{\ensuremath{#1}}]{\kern-.6pt\bigwedge\kern-.6pt}%
          {\rule[-\textheight/2]{1ex}{\textheight}}
            }{\textheight}%
            }{0.5ex}}%
            \stackon[1pt]{#1}{\tmpbox}%
            }
            \parskip 1ex

\usepackage[margin=1in]{geometry}
\usepackage{multirow}
\usepackage{subcaption}
\usepackage{afterpage}
\usepackage{bm}
\usepackage[hidelinks]{hyperref}
\usepackage{tikz}
\usetikzlibrary{decorations.pathreplacing, positioning}

\usepackage{bbm}

\newcommand{\wt}[1]{\widetilde{#1}}
\newcommand{\Ot}{\widetilde{O}}
\newcommand{\Omegat}{\widetilde{\Omega}}

\newcommand{\abs}[1]{|#1|}
\newcommand{\eps}{\varepsilon}

\newcommand{\Z}{\mathbb{Z}}
\newcommand{\N}{\mathbb{N}}
\newcommand{\E}[2][]{\operatorname*{\mathbb{E}}_{#1 }\left\lbrack #2 \right\rbrack}
\newcommand{\Pb}[2][]{\operatorname*{\mathbb{P}}_{#1 }\left\lbrack #2 \right\rbrack}

\newcommand{\norm}[1]{\left\lVert#1\right\rVert}

\DeclareMathOperator{\var}{\text{Var}}

\DeclareMathOperator{\freq}{freq}
\DeclareMathOperator{\poly}{poly}
\DeclareMathOperator{\len}{len}
\DeclareMathOperator{\alg}{\mathcal{A}}

\DeclareMathOperator{\promise}{\mathtt{TrianglePromise}}
\newcommand{\Et}{E^\Delta}
\newcommand{\Vt}{V^\Delta}

\newtheorem{theorem}{Theorem}
\newtheorem*{theorem*}{Theorem}
\newtheorem{lemma}[theorem]{Lemma}
\newtheorem{definition}[theorem]{Definition}
\newtheorem*{definition*}{Definition}
\newtheorem*{lemma*}{Lemma}
\newtheorem{corollary}[theorem]{Corollary}
\newtheorem*{corollary*}{Corollary}

\newtheorem*{claim*}{Claim}

\title{Separations and Equivalences between Turnstile Streaming and
  Linear Sketching}
\author{John Kallaugher\thanks{This work was done in part while the authors were visiting the Simons Institute for the Theory of Computing.}\\UT Austin \and Eric Price\footnotemark[1]\\UT Austin}

{\makeatletter
	\gdef\xxxmark{%
		\expandafter\ifx\csname @mpargs\endcsname\relax 
		\expandafter\ifx\csname @captype\endcsname\relax 
		\marginpar{xxx}
		\else
		xxx 
		\fi
		\else
		xxx 
		\fi}
	\gdef\xxx{\@ifnextchar[\xxx@lab\xxx@nolab}
	\long\gdef\xxx@lab[#1]#2{{\bf [\xxxmark #2 ---{\sc #1}]}}
	\long\gdef\xxx@nolab#1{{\bf [\xxxmark #1]}}
}

%
%
%
%
%
%
%
%
\usepackage{etoolbox}

\makeatletter

\newcommand{\define}[4][ignore]{%
  \ifstrequal{#1}{ignore}{}{
  \@namedef{thmtitle@#2}{#1}}%
  \@namedef{thm@#2}{#4}%
  \@namedef{thmtypen@#2}{lemma}%
  \newtheorem{thmtype@#2}[theorem]{#3}%
  \newtheorem*{thmtypealt@#2}{#3~\ref{#2}}%
}

\newcommand{\state}[1]{%
  \@namedef{curthm}{#1}
  \@ifundefined{thmtitle@#1}{
  \begin{thmtype@#1}
    }{
  \begin{thmtype@#1}[\@nameuse{thmtitle@#1}]
  }
    \label{#1}
    \@nameuse{thm@#1}
  \end{thmtype@#1}
  \@ifundefined{thmdone@#1}{
  \@namedef{thmdone@#1}{stated}%
  }{}
}

\newcommand{\restate}[1]{%
  \@namedef{curthm}{#1}
  \@ifundefined{thmtitle@#1}{
    \begin{thmtypealt@#1}
    }{
  \begin{thmtypealt@#1}[\@nameuse{thmtitle@#1}]
  }
    \@nameuse{thm@#1}
  \end{thmtypealt@#1}
  \@ifundefined{thmdone@#1}{
  \@namedef{thmdone@#1}{stated}%
  }{}
}

\newcommand{\thmlabel}[1]{
  \@ifundefined{thmdone@\@nameuse{curthm}}{\label{#1}
    }{\tag*{\eqref{#1}}}
}
\makeatother

\begin{document}

\begin{titlepage}
  \maketitle
  \thispagestyle{empty}
  \begin{abstract}

    A longstanding observation, which was partially proven
    in~\cite{LNW14,AHLW16}, is that any turnstile streaming algorithm can
    be implemented as a linear sketch (the reverse is trivially true).
    We study the relationship between turnstile streaming and linear
    sketching algorithms in more detail, giving both new separations
    and new equivalences between the two models.

    It was shown in~\cite{LNW14} that, if a turnstile algorithm works
    for arbitrarily long streams with arbitrarily large coordinates at
    intermediate stages of the stream, then the turnstile algorithm is
    equivalent to a linear sketch.  We show separations of the
    opposite form: if either the stream length or the maximum value of
    the stream are substantially restricted, there exist problems
    where linear sketching is exponentially harder than turnstile
    streaming.

    A further limitation of the~\cite{LNW14} equivalence is that the
    turnstile sketching algorithm is neither explicit nor uniform, but
    requires an exponentially long advice string.  We show how to
    remove this limitation for deterministic streaming algorithms: we
    give an explicit small-space algorithm that takes the streaming
    algorithm and computes an equivalent module.
\end{abstract}
\end{titlepage}

\section{Introduction}
The study of streaming algorithms is concerned with the following question:
given a very large dataset that appears over time, what questions can one
answer about it without ever storing it in its entirety? Formally, one receives
$x \in \Z^n$ (e.g, the indicator vector for the set of edges in a graph) as a
series of updates $x_i \gets x_i + \Delta$ (e.g., edge insertions and
deletions).  One would like to estimate properties of the final vector $x$
while only ever using $o(n)$ space, ideally $\poly(\log n)$.  The space used
by the algorithm is the primary quantity of interest; other parameters such as
update or recovery time are often well-behaved as a matter of course for
small-space algorithms.  In this paper we focus on `turnstile' streams, where
$\Delta$ can be negative, as opposed to insertion-only streams, where it must
be positive.

The study of turnstile streaming has been very successful at revealing new
algorithmic techniques and insights.  It has found wide applicability,
with algorithms for a huge variety of problems.  Examples include norm
estimation in $\ell_2$~\cite{AMS96} or other
$\ell_p$~\cite{indyk2006stable,cormode2003comparing}; $\ell_0$
sampling~\cite{frahling2008sampling}; heavy
hitters~\cite{charikar2002finding,cormode2005improved}; coresets for
$k$-median~\cite{frahling2005coresets,indyk2011k}; and graph problems
such as finding spanning forests~\cite{AGM}, spectral
sparsifiers~\cite{KapralovLMMS14}, matchings~\cite{assadi2016maximum},
and triangle counting~\cite{TKMF09,PT12,KP17}.

Remarkably, for every single problem described above, the best known algorithm
is a \emph{linear sketch}, where the state of the algorithm at time $t$ is
given by a linear function of the updates seen to $x$ before time $t$.  And for
most of these problems, we know that the linear sketch is optimal.

Linear sketches have a number of other nice properties.  Their additivity means
that one can, for example, split a data stream across multiple routers and
sketch the pieces independently.  This has also made such sketches useful in
non-streaming applications such as distributed
computing~\cite{kapron2013dynamic}.  Their output depends only on the final
value of $x$, so they will work regardless of the length of the stream, the
order in which the stream arrives, and the intermediate states reached by the
stream. Their indifference to stream order means the randomness they use can
often be implemented with Nisan's
PRG~\cite{nisan1992pseudorandom,indyk2006stable}. 

They are also easier to prove lower bounds against, either using the
simultaneous message passing (SMP) model
(e.g.,~\cite{konrad2015maximum,assadi2016maximum,KKP18}) or additional
properties of linearity~\cite{price2012applications}.

So it would be nice if every turnstile streaming algorithm could be
implemented as a linear sketch.  And this \emph{is} true, as shown
in~\cite{LNW14}, but only subject to fairly strong limitations.  In
this paper, we explore the relationship in more detail.  First, we
show that some of the~\cite{LNW14} limitations are necessary: we
present natural problems with large, \emph{exponential} separations
between turnstile streaming and linear sketching with the limitations
removed.  Second, we show how to remove other~\cite{LNW14} limitations
for deterministic functions.

\paragraph{Separations between turnstile streaming and linear
  sketching.}  The result in~\cite{LNW14} requires that, in order for a
turnstile streaming algorithm to be equivalent to a linear sketch, the
streaming algorithm must be able to tolerate \emph{extremely} long
streams (longer than $2^{2^n}$) that reach correspondingly large
intermediate states.  In~\cite{AHLW16}, it was shown that this
equivalence can be extended to `strict' turnstile streams, where the
intermediate states never become negative but must still be allowed to
become extremely large in the positive direction.  However, the result
still leaves open the possibility of problems that require $\poly(n)$
space in linear sketching, but in turnstile streaming can be solved in
$O(\poly(\log n, \log \log L))$ space for length-$L$ streams, or in
$O(\poly(\log n, \log M))$ space for streams whose intermediate state
never leave $[-M, M]^n$ (a `box constraint').

Such a box constraint is particularly natural in graph streaming: if
the stream represents insertions and deletions of edges in a graph,
then the intermediate states $x$ should lie in
$\{0, 1\}^{\binom{n}{2}}$.  At the same time, graph streaming is where
a theorem on equivalence between streaming and sketching would be most
useful: most of the problems for which we have lower bounds on linear
sketches but not turnstile streaming involve graphs.  The~\cite{LNW14}
equivalence gives lower bounds for these problems, but only for
turnstile algorithms that are indifferent to stream length and
tolerate multigraphs at intermediate stages.

The conjunction here, where the box constraint is most relevant in
precisely the situations where we have no alternative lower bounds
to~\cite{LNW14}, suggests an opportunity: perhaps we have not found
direct turnstile streaming lower bounds for these problems because no
such lower bounds exist that respect the natural constraints of
graphs.  Maybe better algorithms exist, and we just haven't found them
because they require substantially different, nonlinear approaches to
turnstile sketching.

In this paper, we show that this can in fact be the case, presenting
natural assumptions on adversarially ordered turnstile streams for
which we can prove exponential separations between turnstile streaming
and linear sketching.  We give several different settings in which
there are problems that can be solved with an $O(\log n)$ space
streaming algorithm, but for which any linear sketch requires
$\Omegat(n^{1/3})$ space.

We first consider \emph{binary} streams: the data stream can have
arbitrary length, but must lie in $\{0, 1\}^n$ at all times.  We
present a problem that can be solved over such streams in $O(\log n)$
space, but requires $\Omega(n^{1/3} / \log^{1/3} n)$ space to solve in
linear sketching.

We then consider \emph{short} streams: the data stream can have arbitrary
intermediate states, but only $L = O(n)$ updates.  We show that for a natural
problem---triangle counting on bounded degree graphs with many triangles---an
$O(\log n)$ space streaming algorithm is possible, while any linear sketching
algorithm takes $\Omega(n^{1/3})$ space.  The streaming algorithm depends
polynomially on $L$, and a separation remains for any $L = o(n^{7/6})$.

The only previously known separation between turnstile streaming and linear
sketching is due to Jayaram and Woodruff~\cite{jayaram2018data}, which for
$\ell_1$ estimation with $L = O(n)$ gives a separation of $O(\log \log n)$ vs
$\Theta(\log n)$.  While that is also an exponential separation, it would be
consistent with, for instance, turnstile algorithms being convertible to linear
sketches with an additive $O(\log n)$ loss.

Section~\ref{sec:separationresults} describes these results more
formally, as well as two other similar results.

\paragraph{An explicit, computable reduction for deterministic algorithms.}
Another limitation of~\cite{LNW14}, as well as the earlier
work~\cite{ganguly2008lower} that applies to deterministic streaming
algorithms, is that the reduction is not explicit.  These reductions
show the \emph{existence} of a linear sketch, and corresponding
recovery algorithm, that are equivalent to the streaming algorithm;
they do not show that the sketch and recovery algorithm can be
computed at all, much less computed in small space.  The distinction
is analogous to that of $L/\text{poly}$ and $L$: they are linear
sketching algorithms with a very long advice string.  For an $s$-bit
linear sketching algorithm, the advice string needs $ns$ bits for the
sketch and $2^s$ bits for the recovery.  This is typically referred to
as a ``nonuniform'' result, but note that the advice string is much
longer than the algorithm is supposed to store: there does not
necessarily exist an $O(s)$-bit machine that computes the linear
sketch for each input size $n$ and space-$s$ streaming algorithm.

We show for deterministic streaming algorithms how to perform an
\emph{explicit} reduction: given an $s$-bit streaming algorithm, our
algorithm computes an equivalent $s$-bit linear sketching algorithm in
$O(s \log n)$ bits of space.  To do so, we generalize what a ``linear
sketch'' means from prior work:

\begin{definition}\label{def:linearsketch}
  A \emph{linear sketch} consists of a $\Z$-module homomorphism $\phi$ from
  $\Z^n$ to a module $M$.
\end{definition}

The ``standard'' linear sketch is $\phi(x) = Ax \bmod q$ for some
matrix $A \in \Z^{m \times n}$ and set of moduli $q \in \Z_+^m$; the
corresponding module $M$ is
$\Z_{q_1} \times \dotsb \times \Z_{q_m}$\footnote{Some descriptions of
  linear sketches, such as the introduction of~\cite{LNW14}, omit the
  moduli $q_i$.  But then the sketch would not have bounded space on
  all streams, so these works end up introducing moduli either
  explicitly (as in~\cite{LNW14}) or implicitly (by storing
  coordinates as $O(\log n)$-bit words with overflow).  Other authors,
  such as~\cite{AHLW16}, include the moduli.}.  But
Definition~\ref{def:linearsketch} captures the ways in which standard
linear sketching is useful: the sketch is linear
($\phi(x + y) = \phi(x) + \phi(y))$), and therefore mergeable and
indifferent to stream length and order.

In fact, according to the structure theorem for finitely generated
$\Z$-modules, every linear sketch to a finite module $M$ is equivalent
to a standard linear sketch with $q_1 | q_2 | \dotsb | q_m$ using the
same space (i.e., $\log \abs{M} = \log \prod q_i$).  However, we do
not know how to compute this transformation efficiently, and our
algorithm creates a linear sketch with $\phi$ and $M$ of a different
form.

\define{thm:detreductionpartial}{Theorem}{%
  Suppose there is a deterministic algorithm solving a streaming
  problem $P$ that works on streams of all lengths, uses $S$ space
  during updates and recovery, and uses $s$ space between
  updates. Then there is a linear sketching algorithm for $P$ that
  uses $O(S + s \log n)$ space during updates and recovery, and stores an
  $s$ space sketch.
}
\state{thm:detreductionpartial}

This reduction still has the stream length and box constraint
limitations discussed in the previous section, but they are actually
somewhat weaker than~\cite{LNW14,AHLW16}---the length required is
exponential in $s$, not doubly exponential.  As with these works, the
above theorem applies to streaming problems representing general
binary relations: any given input may have multiple valid outputs (as
in approximation algorithms) or even consider every output to be valid
(as in promise problems, where some inputs are invalid).  For the more
restrictive setting of \emph{total functions}, where every input has a
single valid output, we can remove the restriction on stream length:
the same result holds for algorithms that work on streams of length
$n + O(s)$.

Another advantage we believe our reduction has over prior ones is
that, because it is explicit, it is easier to understand---and to
understand the limitations of.  We hope that this makes it easier to
develop new turnstile algorithms that circumvent the limitations of
these lower bounds.

We now present the definitions required to state our results more
formally.

\subsection{Definitions}\label{sec:definitions}

\begin{definition}
\label{dfn:streamproblem}
  A \emph{data stream problem} is defined by a relation
  $P_n \subseteq \Z^n \times \Z$.  A \emph{turnstile data stream}
  $\sigma$ of length $L$ is a sequence of updates
  $\sigma_1, \dotsc, \sigma_L \in [n] \times \Z$.  The
  \emph{state} of a stream at time $t$ is given by
  \[
    x^{(t)} := \freq \sigma^{(t)} := \sum_{(i, \Delta) \in \{\sigma_1, \dotsc, \sigma_t\}} \Delta \cdot e_i.
  \]
  and the final state is $x = \freq \sigma^{(L)}$.
\end{definition}
We will also write $\len(\sigma)$ for $L$.

\begin{definition}
  A \emph{data stream algorithm} $\alg$ is defined by a random
  distribution on initial states $y$; a transition function that takes
  a state $y$ and a stream update $\sigma_i$ and returns a new state
  $y'$; and a (possibly randomized) post-processing function $g$ that
  takes the final state $\alg(\sigma)$ and returns an output
  $g(\alg(\sigma))$.

  We say that $\alg$ \emph{solves} a problem $P_n$ under condition $C$
  if, for all streams $\sigma \in C$, with 2/3 probability,
  $(\freq \sigma, g(y)) \in P_n$.  We say that $\alg$ uses $s$ space
  between updates if all states reached by $\alg$ while processing
  $\sigma$ can be represented in $S$ bits of space; we say it uses
  $S \geq s$ space during updates and recovery if the transition
  function and post-processing function use $S$ space.
\end{definition}

One very common stream condition considered in the literature is that
of `strict' turnstile streams, where $x^{(t)}_i \geq 0$ for all $i$
and $t$.  The goal of our separations is to describe relatively mild
stream conditions under which turnstile streaming is much easier than
linear sketching.  The goal of our equivalences is to bound $S$ as
well as $s$ in the reduction.

For the explicit problems we consider, which are decision and counting
problems, the set of valid outputs for each input forms an interval.
Therefore the success probability can always be amplified to
$1-\delta$ by taking the median of $O(\log 1/\delta)$ repetitions.

\begin{definition}
  A \emph{linear sketching algorithm} is a data stream algorithm where
  the state is $\phi(\freq \sigma)$, where $\phi$ is a linear sketch,
  along with the randomness used to choose $\phi$.
\end{definition}

We will at times need some ``standard'' streams constructed from
vectors or from other streams:
\begin{definition}
For any $x \in \Z^n$, the ``canonical'' stream $\kappa(x)$ is the stream that
inserts each of its coordinates in order, skipping any zero coordinates, so
$\len(\kappa(x)) = ||x||_0$.

For any stream $\sigma$, $\overline{\sigma}$ is the stream with the
same sequence of updates but the opposite sign on each update, so if
$\sigma_t = (i, \Delta)$, $\overline{\sigma}_t = (i, -\Delta)$.
\end{definition}

For certain reductions we will need to iterate through (subsets of) $\N^n$ in
``little-endian'' order, that is, $x < y$ if $x_n < y_n$, or $x_n = y_n$ and
$x_{n - 1} < y_{n-1}$, and so on.

\subsection{Our Results: Separations}\label{sec:separationresults}

\paragraph{Box-constrained streams.}  Our first result concerns binary streams,
in which we are promised that the partial stream states $x^{(t)}$ lie in $\{0,
1\}^n$ at all times.

\begin{definition}[Box constraint]
  $\Gamma_M$ is the set of streams such that for all times $t$,
  $\norm{x^{(t)}}_\infty \le M$. $\Gamma_{0,1}$ is the set of streams
  such that for all times $t$, $x^{(t)} \in \lbrace 0, 1\rbrace^n$.
\end{definition}

\define{thm:01-clean}{Theorem}{%
For every $n \in \mathbb{N}$, there exists a data stream problem $P_n
\subseteq \lbrace 0, 1\rbrace^n \times \lbrace 0,
1\rbrace$ such that:
\begin{enumerate}
\item Any linear sketching algorithm solving $P_n$ requires $\Omega(n^{1/3} /
\log^{1/3} n)$ bits of space.
\item There exists a turnstile streaming algorithm that solves $P_n$
  on $\Gamma_{0,1}$ in $O(\log n)$ space.
\end{enumerate}
}%
\state{thm:01-clean}%

Note that as the final state of a linear sketching algorithm depends only on
the final state of the stream, any linear sketching algorithm solving $P_n$
on $\Gamma_{0,1}$ would also solve $P_n$ for arbitrary streams.

One property of binary streams is that every update to a coordinate $i$
uniquely identifies the value of $x_i$ after the update.  Over larger domains,
this is no longer true.  We can still show a similar result for inputs of size
$m$, as long as intermediate results never exceed $2M-1$:

\define{thm:2m-clean}{Theorem}{%
For every $M, n \in \mathbb{N}$, there exists a data stream problem $P_n
\subseteq \lbrace -M, \dotsc, M\rbrace^n \times \lbrace 0,
1\rbrace$ such that:
\begin{enumerate}
\item Any linear sketching algorithm solving $P_n$ requires $\Omega(n^{1/3} /
\log^{1/3} n)$ bits of space.
\item There exists a turnstile streaming algorithm that solves $P_n$
  on $\Gamma_{2M-1}$ in $O(\log n \log M)$ space.
\end{enumerate}
}%
\state{thm:2m-clean}%

Interestingly, this $2M$ threshold matches one of the results
in~\cite{AHLW16}.  Recall that one requirement for~\cite{LNW14} to show an
equivalence between linear sketching and streaming is that the
streaming algorithm tolerate intermediate states of (more than) doubly exponential
size, i.e., $\Gamma_{2^{2^{n}}}$.  One result in~\cite{AHLW16} shows
that this can be relaxed to $\Gamma_{2M}$---as long as $M > 2^{ns}$,
where $s$ is the algorithm space.  That additional requirement is very
strong (e.g., one cannot store a single coordinate of the input) but
if it did not exist, the result would imply that our $2M-1$ threshold
cannot be increased.

\paragraph{Graph streams}
Our separations for binary and box-constrained streams are based on a somewhat
unnatural problem.  We also present separations for a more natural problem,
that of counting triangles in bounded-degree graphs.

In this problem, the final state $x \in \{0, 1\}^{\binom{n}{2}}$ represents a
graph of maximum degree $d$.  In the \emph{counting} version of the problem,
one would like to estimate the number of triangles $T$ in the graph to within a
multiplicative $1 \pm \eps$ factor with probability $2/3$; in the
\emph{decision} version, one would like to determine whether the number of
triangles is zero or at least $T$.

In the insertion-only model of computation, the counting problem can
be solved in $O(d \frac{m}{\eps^2 T} \log n)$ space~\cite{PTTW13},
where $m \le nd/2$ is the number of edges in the graph, while in the
linear sketching model it requires $\Omega(n/T^{2/3})$ space even for
the decision version with $d = 2$~\cite{KKP18}.  This leaves a natural
question: for constant $d$ and linear $T$, do turnstile streaming
algorithms require $\log n$ or $n^{1/3}$ space?  We show, under
natural conditions on the stream, that it is the former.

In our first result on this problem, we suppose that the stream represents a
bounded degree graph at all times, not just at the end of the stream.  In this
model, we can match the best known complexity in the insertion-only model for
\emph{constant-degree} graphs~\cite{JG05, PTTW13}.

\define{thm:maxd}{Theorem}{%
  There is a streaming algorithm for triangle counting in max-degree
  $d$ graphs, over streams with intermediate states of max degree $d$,
  that uses $O\left(\frac{d^2m}{\eps^2T}\log n\right)$ bits.
}
\state{thm:maxd}

When $T$ is $\Theta(n)$, this is $O(d^3 \log n)$: exponentially
smaller than the $\Omega(n^{1/3})$ lower bound for linear sketching
for constant degree graphs, and still separable up to small polynomial
degrees.

In our second result on this problem, we suppose that the total length
of the stream is $L$, but allow the intermediate states to be
arbitrary multigraphs.

\define{thm:boundedl}{Theorem}{%
  There is a streaming algorithm for triangle counting in max-degree
  $d$ graphs of length-$L$ streams using
  $O\left(\frac{d^3L^2}{\eps^2T^2}\log n\right)$ bits of space.
}
\state{thm:boundedl}

For constant degree graphs with $L$ and $T$ both $\Theta(n)$, this is again
$O(\log n)$ rather than the $\Omega(n^{1/3})$ required by linear sketching.
Note that in the graph setting, $n$ is the number of vertices (equivalently
edges, as the degree is constant), and so $L = O(n)$ is equivalent to saying
that at least a constant fraction of the insertions in the stream are never
followed by a corresponding deletion; this is a reasonable assumption for real
world graph streams such as the Facebook friends graph.

\subsection{Our Results: Equivalences}\label{sec:equivalenceresults}

Our main equivalence result is Theorem~\ref{thm:detreductionpartial}.
We also have a slightly stronger theorem for total functions:

\define{thm:detreductiontotal}{Theorem}{%
  Suppose there is a deterministic algorithm solving a streaming problem $P$
   that works on streams of length $n + 2s + 2$, uses $S$ space during updates
   and recovery, and uses $s$ space between updates. If $P$ corresponds to a
   \emph{total function} on $\Z^n$, there is a linear sketching algorithm for
   $P$ that uses $O(S + s \log n)$ space during updates and recovery, and
   stores an $s$ space sketch.
 }
\state{thm:detreductiontotal}

The advantage of this over Theorem~\ref{thm:detreductionpartial} is
that the stream length is short (i.e., $(1 + o(1))n$).  The downside
is that total functions are much more restrictive than binary
relations, excluding approximation and promise problems.

Relative to~\cite{LNW14,AHLW16}, the main benefit of
Theorem~\ref{thm:detreductionpartial} is that the reduction is
explicitly computable in small space.  The downside is that it only
applies to deterministic streaming algorithms, not randomized ones.
But note that even those reductions are limited in the extent to which
they apply to randomized algorithms: they assume that the randomness
is stored in the $s$ space used by the algorithm.  As a result, they
do not apply to algorithms that flip a coin on every update, or even
ones that sample a random update from the data stream: such algorithms
use $L$ and $\log L$ bits of randomness, respectively, which are much more
than $s$ for the streams considered in the reduction.

\section{Related Work}

\paragraph{Equivalences between streaming and linear sketching.} As described
above,~\cite{LNW14}, building on~\cite{ganguly2008lower}, proved that any
turnstile streaming algorithm can be implemented as a linear sketch, assuming
the streaming algorithm can tolerate arbitrarily long streams that feature
arbitrarily complicated intermediate states.  The followup work~\cite{AHLW16}
removed or relaxed some of the restrictions on this equivalence: for example,
they show that it still holds if the algorithm only
works in the `strict' turnstile model where all intermediate states are
non-negative. They also show that it holds if the algorithm only tolerates
exponentially large (in the space usage of the algorithm and the dimension of
the problem) intermediate values, rather than doubly exponentially large ones.

Another line of work on the problem has considered XOR streams or
other modular updates~\cite{kannan2016linear,hosseini2018optimality}.
XOR streams are like binary streams, except that insert and delete
updates are indistinguishable.  For such
streams,~\cite{hosseini2018optimality} shows that for total functions
the equivalence between streaming and linear sketching holds under
much more mild assumptions: as long as the algorithm works on streams
of length $\Ot(n^2)$.  As with all the other existing equivalences,
these are nonuniform: they do not show that the linear sketching
algorithm is efficiently computable\footnote{There is some discussion
  in~\cite{hosseini2018optimality} of generating the linear functions
  in small space, but this only refers to the space used to store the
  randomness; even in the deterministic setting, the construction is
  nonuniform.
}.

\paragraph{Lower bounds for linear sketches.}
The most common lower bound technique in streaming algorithms is the
construction of reductions to one-way communication complexity. One encodes a
hard one-way communication complexity problem into a stream by encoding Alice's
input into the first half of the stream, and Bob's input into the second half.
If a solution to the streaming problem yields a solution to the communication
problem, this yields a lower bound on the streaming algorithm's space.  The
hard instances created by this approach tend to be fairly nice: the stream
length is never more than $2n$, for example.

For linear sketching, lower bounds may also be proved by reductions to the more
restrictive simultaneous message passing (SMP) model.  Rather than Alice
sending a short message to Bob, Alice and Bob must both send a short message to
a referee, who adds their sketches to solve the problem.  (One may also have
more than two parties, which is typically more fruitful in the SMP model than
in the one-way communication model.)

These lower bounds translate into turnstile streaming lower bounds
using~\cite{LNW14,AHLW16}, but the instances become horrible, leading
to weak implications. In particular, this approach can never rule out
algorithms using either $O(\log \log L)$ or $O(\log M)$ space, for
length-$L$ streams with intermediate states that never leave the
$[-M, M]^n$ box.

Still, for a number of problems we only know how to get strong lower
bounds via linear sketching.  Examples include finding approximate
maximum matchings~\cite{konrad2015maximum,assadi2016maximum},
estimating the size of the maximum
matching~\cite{assadi2017estimating}, subgraph counting~\cite{KKP18},
and finding spanning forests~\cite{nelson2019optimal}.  Most such
problems are graph problems, but the translation of the lower bound
from linear sketching to streaming only applies if intermediate states
are allowed to be multigraphs.

\paragraph{Non-linear turnstile algorithms.}
We are aware of one case of a turnstile streaming algorithm that is not
implementable in linear sketching.  

Jayaram and Woodruff~\cite{jayaram2018data} consider problems on
data streams with a bounded ratio of deletions to insertions (this is similar
to our condition in Theorem~\ref{thm:boundedl}, as a long stream requires a
large ratio of deletions to insertions and vice versa).  The precise result
depends on the problem, but roughly speaking: if the final magnitude of the
vector is at least $\alpha < 1/2$ times the sum of the magnitudes of all the
updates, the space complexity can be improved over linear sketches by a factor
of $\log_{\alpha} n$.  In particular, for $\ell_1$ estimation, an exponential
separation can be obtained, but this is $O(\log n)$ vs. $O(\log\log n)$, so
even the harder case requires very little space. 

Furthermore, these results do not rule out~\cite{LNW14} being extended to short
streams, as~\cite{LNW14} requires the algorithm to store all the random bits it
ever uses (in contrast to the normal setting where only random bits that are to
be reused have to be stored). The algorithms in~\cite{jayaram2018data} use 
(non-reused) randomness to sample from the updates they see, and so under this 
constraint they would end up needing substantially larger space. By contrast,
our algorithms use only a small amount of randomness relative to their space,
so they do show that a length constraint is necessary for~\cite{LNW14}.

\section{Overview of Techniques}

\subsection{Turnstile-Sketching Separations}
\subsubsection{Binary and Box-Constrained Streams}

\paragraph{Binary streams.}  To prove Theorem~\ref{thm:01-clean}, we embed a
hard communication problem from~\cite{KKP18} into a binary stream.  In this
communication problem, which we call $\promise(n)$ and illustrate in
Figure~\ref{fig:instance2}, there are three players and $O(n)$ vertices, each
of which is shared between two players.  Each player receives a set of $O(n)$
edges, connecting the two sets of vertices shared with the other two players,
and a label in $\{0, 1\}$ for each edge.  These edges form $n$ disjoint
triangles, with each player having one edge from each triangle; every other
edge is isolated.  The players do not know which of their edges are in
triangles.  The promise is that for every triangle, the XOR of the associated
bits has the same value $\tau \in \{0, 1\}$; the goal is to find $\tau$.
In~\cite{KKP18} this was shown to take $\Omega(n^{1/3})$ bits of communication
in the SMP model.

\begin{figure}
  \centering
  \begin{subfigure}[t]{0.47\textwidth}
    \centering
    \includegraphics[width=\textwidth]{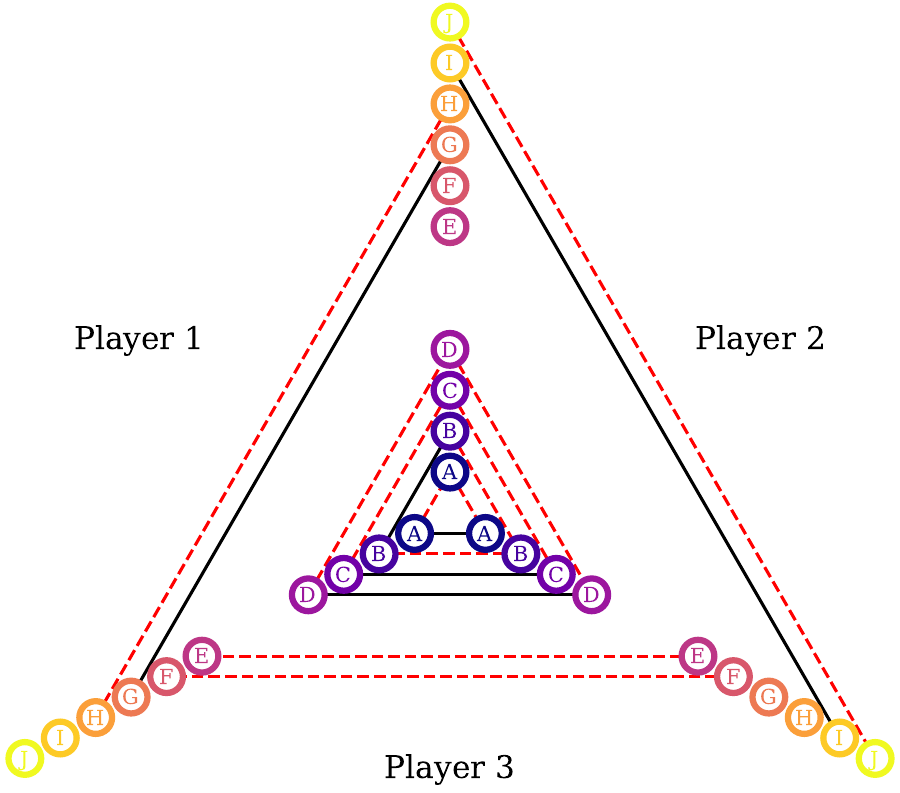}
    \caption{The player's instance ignoring the permutations.  The
      $x_e$ are the indices of red edges, read from inside out:
      $x_1 = [1,0,1,1,0,1]$, $x_2 = [1,1,1,1,0,1]$, $x_3 = [0,1,0,0,1,1]$.}
    \label{fig:fig2a}
  \end{subfigure}\hfill
  \begin{subfigure}[t]{0.47\textwidth}
    \centering
    \includegraphics[width=\textwidth]{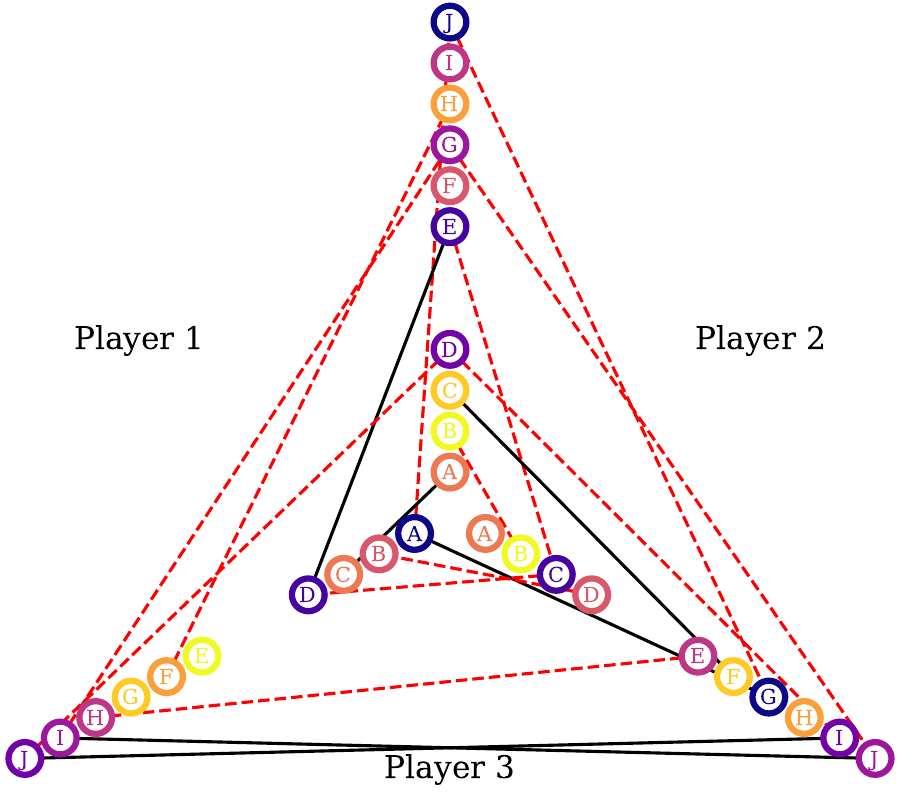}
    \caption{The hard distribution permutes each set of vertices.
      The players see their edges and associated labels, but not the
      vertex colors (which represent the pre-permutation
      identities).}
    \label{fig:fig2b}
  \end{subfigure}

  \vspace{1em}
  \begin{subfigure}[t]{0.47\textwidth}
    \centering
\begin{tabular}{|l|l|l||l|l|l||l|l|l|}
\hline
\multicolumn{3}{|c||}{Player 1} & \multicolumn{3}{|c||}{Player 2} & \multicolumn{3}{c|}{Player 3} \\
\hline
u&v&x&u&v&x&u&v&x\\
\hline\hline
A&J&1&B&B&1&C&D&1\\
\hline
C&A&0&C&F&0&D&B&1\\
\hline
D&E&0&D&I&1&E&H&1\\
\hline
F&H&1&E&C&1&G&A&0\\
\hline
I&G&1&G&J&1&I&J&0\\
\hline
J&D&1&J&G&1&J&I&0\\
\hline
\end{tabular}
    \caption{Each player's input consists of their edges
      in~(\subref{fig:fig2b}).  $u$ represents the
      vertex counterclockwise of the player, and $v$ represents the vertex clockwise.}
    \label{fig:fig2c}
  \end{subfigure}\hfill
  \begin{subfigure}[t]{0.47\textwidth}
    \centering
    \renewcommand{\arraystretch}{1.5}
    \begin{tabular}{r|cc}
      &$u \to v$ & $v \to u$\\
      \hline
Player 1 & $\overline{J}\bot{}AE\bot{}\overline{H}\bot{}\bot{}\overline{G}\overline{D}$ & $C\bot{}\bot{}\overline{J}D\bot{}\overline{I}\overline{F}\bot{}\overline{A}$\\
Player 2 & $\bot{}\overline{B}F\overline{I}\overline{C}\bot{}\overline{J}\bot{}\bot{}\overline{G}$ & $\bot{}\overline{B}\overline{E}\bot{}\bot{}C\overline{J}\bot{}\overline{D}\overline{G}$\\
Player 3 & $\bot{}\bot{}\overline{D}\overline{B}\overline{H}\bot{}A\bot{}JI$ & $G\overline{D}\bot{}\overline{C}\bot{}\bot{}\bot{}\overline{E}JI$
    \end{tabular}
    \caption{The encoding into $\Sigma^{6n}$.  For Theorem~\ref{thm:01-clean},
    each character in $\Sigma$ is encoded into binary; for
    Theorem~\ref{thm:2m-clean}, the encoding is instead in $\lbrace -m,
    m\rbrace$.}
    \label{fig:fig2d}
  \end{subfigure}
  
  \caption{Illustration of $\promise(4)$ instance; the true instance would have 36 isolated edges per player, not 2.}\label{fig:instance2}
\end{figure}

Each player's input can be represented in $k = O(n \log n)$ bits.  We can
define a data stream problem $P \subset \{0, 1\}^{3k} \times \{0, 1\}$ as
follows: for any input $x \in \{0, 1\}^{3k}$, split $x$ into three pieces $x^A,
x^B, x^C$, one for each player.  If $(x^A, x^B, x^C)$ represents a valid set of
inputs to $\promise(n)$, let $\tau$ be the corresponding answer and place $(x,
\tau)$ in $P$; otherwise, place both $(x, 0)$ and $(x, 1)$ in $P$.  Since the
players' inputs are placed in separate coordinates, a linear sketch could solve
the SMP communication problem, giving an $\Omega(n^{1/3}) = \Omega(k^{1/3} /
\log^{1/3} k)$ lower bound for linear sketches.  But how can we solve this
problem more efficiently with an arbitrary turnstile streaming algorithm?

The lower bound in \cite{KKP18} can be seen as proving that optimal algorithms
for this problem in the SMP model must be based on \emph{sampling}, where the
players each choose a subset of their edges/bit labels to send, and succeed if
there is some triangle such that each of the three players choose the edge they
hold from it. What makes the problem hard, then, is the fact that it is
difficult for all three players to simultaneously coordinate their sampling.
Any two players can coordinate: they can use shared randomness to sample a
shared vertex, and each keep their edge incident to that vertex.  But they
can't tell the third player which edge to keep.  

The idea behind our algorithm is that for any stream, for each triangle some
player's input will finish updating last.  As soon as the first two players'
inputs have finished updating, the algorithm will know which of their edges it
sampled, and therefore know what parts of the third player's input ar.  If the
third player's input hasn't finished yet, the algorithm will learn at least one
bit when it is updated.  And to solve $\promise(n)$, we only need one bit.

For this to work, we need an encoding of the players' inputs that satisfies a
few properties.  We need to be able to sample a vertex, and learn the incident
edges if we pay attention for the whole stream. If this vertex is incident to
two edges of a triangle, then once we learn one of these edges, we need to know
where in the vector to find the encoding of the third edge, and if we learn at
least one bit of the third edge's encoding, we need to be able to be able to
compute its bit label $z$ at the end of the stream.  This last point might seem
tricky, but at the end of the stream the sampled edges tell us both endpoints
of the third edge, so $z$ is the only bit we don't know; it will therefore
suffice to store an edge $(u, v, z)$ as $(u \oplus z^{B}, v \oplus z^{B})$
for a slightly larger word size $B$.  The precise encoding and recovery
algorithm are presented in Sections~\ref{sec:boxencoding} and~\ref{sec:01alg},
respectively.

\paragraph{Box-constrained streams.}  For Theorem~\ref{thm:2m-clean},
we take the same instance as for binary streams but place it on
$\{-M, M\}^{3k}$.  It is no longer the case that, once we start
tracking a given coordinate, we can learn its value after a single
update.  But we can still track the coordinate relative to its initial
value, and if the coordinate's final value is $M$ more than the smallest value
seen, or $M$ less than the largest value seen, then we \emph{will} know the
coordinate's value at the end of the stream, as there will be only one of
$\{-M, M\}$ for which this is consistent with staying within $\Gamma_{2M-1}$.

Now, optimistically decoding based on the sign pattern of each word, we define
the `last' player for a triangle as being the player whose input's decoding
achieves its final value last, i.e.\ the last player to have every coordinate
of their input within $M-1$ of its final value.  At the time the first two
players' inputs' decodings achieve their final value, these players will know
their sampled edges, and there will be at least one coordinate of the third
player's input that can be learned with the remaining stream.

\subsubsection{Bounded Degree Triangle Counting}

At a high level, both of our algorithms for bounded-degree triangle counting
seek to emulate the insertion-only algorithm of~\cite{JG05}.  The
insertion-only algorithm is as follows: sample edges with probability $p$, and
keep all edges incident to sampled edges.  Count the number of triangles using
sampled edges (with multiplicity if multiple edges of a triangle are sampled),
and divide by $p$.  This is an unbiased estimator, using $O(pmd \log n)$
space, in a graph with $m$ edges, $n$ vertices, and max degree $d$.  The
expected number of triangles sampled is $pT$.  If all the triangles were
disjoint, the triangles would be sampled independently and so one could set $p
= O(1/(\eps^2 T))$ and get a $(1 + \eps)$-approximation with $2/3$ probability.
Even though the triangles are not disjoint, the degree bound keeps the
estimator's variance small; one only needs $p = O(d/(\eps^2 T))$.

So what happens in turnstile streams?  One can run essentially the same
algorithm, dealing with edge deletions by removing both the edge deleted and
any neighbors that were tracked on its account.  This works, but can use too
much space if not done carefully.

\paragraph{Bounded-degree intermediate states.}  If every intermediate state is
a bounded-degree graph, then the expected amount of space used at any point in
the stream is still $O(pmd \log n)$.  However, if the stream is extremely long,
the \emph{maximum} amount of space used will be too large.  The natural
solution is to have a hard cap of $O(pm)$ on the number of edges sampled, and
to stop sampling edges when at the cap.  One might worry that this creates a
bias in the estimator.  However, the only times this can affect the output of
the algorithm are the $m$ points in time when edges in the final graph are
inserted for the last time.  At each such time, with high probability, the hard
cap will not have been reached.  The output of the algorithm will thus be the
same as in the insertion-only case.

\paragraph{Length-constrained streams.}  In this model, the intermediate states
may be multigraphs with very high degree; call the maximum degree a vertex ever
reaches its `stream degree.'  One cannot, in general, keep the entire
neighborhood of a sampled edge.  However, the $\Omega(T/d)$ edges involved in
triangles in the final graph have average stream degree at most
$O(\frac{Ld}{T})$.  Therefore we can restrict to considering edges of stream
degree $O(\frac{Ld^2}{\eps T})$: this loses us at most an $\eps/3d$ fraction of
triangle-involved edges, which are involved in at most an $\eps$ fraction of
triangles.

Using the same $p = O(d/(\eps^2 T))$ as in the insertion-only case, we
get an algorithm with space
\[
  p \cdot L \cdot \frac{Ld^2}{\eps T} \cdot \log n = O(\frac{d^3L^2}{\eps^3 T^2} \log n).
\]

\subsection{Deterministic Turnstile-Sketching Equivalence}
Our strategy for reducing deterministic turnstile streaming to linear sketching
will be to take a turnstile streaming algorithm and give it various
streams as input until we find vectors that can be safely ``quotiented out''.
By repeatedly doing this we can find a linear map (a homomorphism of
$\Z$-modules) from $\Z^n$ to a module of size at most $2^s$, whose elements can
be represented as sparse vectors in $\Z^n$.

In each case, the existence of these vectors will be guaranteed by the fact
that $\alg$ can have at most $2^s$ different states, and we will be able to
find them by looking for ``collisions'' in these states---streams which result
in different vectors but the same state of $\alg$. How we find them, and the
length of streams we will need $\alg$ to tolerate, will depend on whether $\alg$
calculates some total function on $\Z^n$ exactly, or whether it solves a general
``streaming problem''---that is, each input has multiple valid outputs, e.g., a
counting problem where only $(1 \pm \varepsilon)$ multiplicative accuracy is 
needed.

\tikzset{basevertex/.style={shape=circle, line width=0.5,
minimum size=4pt, inner sep=0pt, draw}}
\tikzset{defaultvertex/.style={basevertex, fill=blue!70}}
\begin{figure}
	\label{fig:totalfunctions}
	\centering
	\begin{tikzpicture}[%
			VertexStyle/.style={defaultvertex},
			fat arrow/.style={single arrow,
			thick,draw=blue!70,fill=blue!30,
			minimum height=10mm},
			scale = 1.4]
			\SetVertexNoLabel
			\Vertex[x=0,y=2]{x0}
			\Vertex[x=1,y=2.5]{x1}
			\Vertex[x=2,y=2.1]{x2}
			\Vertex[x=3,y=1.8]{x3}
			\tikzset{defaultvertex/.style={basevertex, fill=red!70}}
			\Vertex[x=0,y=0]{y0}
			\Vertex[x=1,y=0]{y1}
			\Vertex[x=2,y=0.1]{y2}
			\Vertex[x=3,y=0.7]{y3}
			\tikzset{defaultvertex/.style={basevertex, fill=violet!70}}
			\Vertex[x=-1,y=1]{origin}
			\Vertex[x=4,y=1]{xy4}

			\node at (-1.4, 1) {$\alg(\emptyset)$};

			\draw[-stealth] (origin) -- (x0);
			\draw[-stealth] (x0) -- (x1);
			\draw[-stealth] (x1) -- (x2);
			\draw[-stealth] (x2) -- (x3);
			\draw[-stealth] (x3) -- (xy4);

			\node at (1.5, 2.5) {$\kappa(x)$};

			\draw[-stealth] (origin) -- (y0);
			\draw[-stealth] (y0) -- (y1);
			\draw[-stealth] (y1) -- (y2);
			\draw[-stealth] (y2) -- (y3);
			\draw[-stealth] (y3) -- (xy4);

			\node at (1.5, -0.3) {$\kappa(y)$};

      \node at (4.2,1.3) {$\alg(\kappa(x))$};
      \node at (4.2,0.7) {$\alg(\kappa(y))$};

            \Vertex[x=5,y=1]{xy5}
            \Vertex[x=6,y=1]{xy6}
            \Vertex[x=7,y=1]{xy7}
            \Vertex[x=8,y=1]{xy8}
            \Vertex[x=9,y=1]{xy9}

            \draw[-stealth] (xy4) -- (xy5);
            \draw[-stealth] (xy5) -- (xy6);
            \draw[-stealth] (xy6) -- (xy7);
            \draw[-stealth] (xy7) -- (xy8);
            \draw[-stealth] (xy8) -- (xy9);

            \draw [decorate,decoration={brace,amplitude=10pt,raise=1cm},]
            (xy4) -- (xy9) node [black,midway,yshift=1.6cm] 
            {$\kappa(z - y)$};


            \node at (9,1.3) {$\alg(\kappa(x)\cdot\kappa(z-y))$};
            \node at (9,0.7) {$\alg(\kappa(y)\cdot\kappa(z-y))$};

	\end{tikzpicture}
   \caption{For total functions, we need to find pairs of streams which cause
   $\alg$ to reach the same state. Here we find $x$ and $y$ such that their
   canonical representations $\kappa(x)$ and $\kappa(y)$ reach the same state.
   This means that for \emph{any} $z$ there are streams with frequency $z$ and
   $z + (x - y)$ that reach the same state, so $f(z) = f(z + (x-y))$.}
\end{figure}

\begin{figure}
  \label{fig:partialfunctions}
  \centering
  \begin{tikzpicture}[%
      VertexStyle/.style={defaultvertex},
      fat arrow/.style={single arrow,
      thick,draw=blue!70,fill=blue!30,
      minimum height=10mm},
      scale = 1.4]
      \SetVertexNoLabel
    \Vertex[x=1,y=0] {x1}
    \Vertex[x=2,y=0] {x2}
		\tikzset{defaultvertex/.style={basevertex, fill=red!70}}
    \Vertex[x=3.866,y=0.5] {x4}
    \Vertex[x=3.866,y=1.5] {x5}
    \Vertex[x=3,y=2] {x6}
    \Vertex[x=2.134,y=1.5] {x7}
    \Vertex[x=2.134,y=0.5] {x8}
		\tikzset{defaultvertex/.style={basevertex, fill=violet!70}}
    \Vertex[x=3,y=0] {x3}

    \draw[-stealth] (x1) -- (x2);
    \draw[-stealth] (x2) -- (x3);
    \draw[-stealth] (x3) -- (x4);
    \draw[-stealth] (x4) -- (x5);
    \draw[-stealth] (x5) -- (x6);
    \draw[-stealth] (x6) -- (x7);
    \draw[-stealth] (x7) -- (x8);
    \draw[-stealth] (x8) -- (x3);
    
    \node at (0.6,0) {$\alg(\emptyset)$};
    \node at (3,0.5) {$\alg(\pi_1\cdot \rho_1)$};
    \node at (3,-0.35) {$\alg(\pi_1)$};

    \tikzset{defaultvertex/.style={basevertex, fill=blue!70}}
    \Vertex[x=4,y=0] {x9}
    \Vertex[x=5,y=0] {x10}
    \tikzset{defaultvertex/.style={basevertex, fill=red!70}}
    \Vertex[x=6.866,y=0.5] {x12}
    \Vertex[x=6.866,y=1.5] {x13}
    \Vertex[x=6,y=2] {x14}
    \Vertex[x=5.134,y=1.5] {x15}
    \Vertex[x=5.134,y=0.5] {x16}
    \tikzset{defaultvertex/.style={basevertex, fill=violet!70}}
    \Vertex[x=6,y=0] {x11}
 
    \draw[-stealth] (x3) -- (x9);
    \draw[-stealth] (x9) -- (x10);
    \draw[-stealth] (x10) -- (x11);
    \draw[-stealth] (x11) -- (x12);
    \draw[-stealth] (x12) -- (x13);
    \draw[-stealth] (x13) -- (x14);
    \draw[-stealth] (x14) -- (x15);
    \draw[-stealth] (x15) -- (x16);
    \draw[-stealth] (x16) -- (x11);

    \node at (6,0.5) {$\alg(\pi_2\cdot \rho_2)$};
    \node at (6,-0.35) {$\alg(\pi_2)$};

    \draw [decorate,decoration={brace,amplitude=10pt,raise=1cm},]
    (x11) -- (x1) node [black,midway,yshift=-1.6cm] 
    {$\pi_2 = \tau_2^{x_{j'}} = \pi_1 \cdot \psi_2$};

    \node[above=.3cm of x14] (label) {Loop $i$ has frequency $a_i e_i - o_i$};

    \tikzset{defaultvertex/.style={basevertex, fill=blue!70}}
    \Vertex[x=7,y=0] {x17}
    \Vertex[x=8,y=0] {x18}
    \tikzset{defaultvertex/.style={basevertex, fill=red!70}}
    \Vertex[x=9.866,y=0.5] {x20}
    \Vertex[x=9.866,y=1.5] {x21}
    \Vertex[x=9,y=2] {x22}
    \Vertex[x=8.134,y=1.5] {x23}
    \Vertex[x=8.134,y=0.5] {x24}
    \tikzset{defaultvertex/.style={basevertex, fill=violet!70}}
    \Vertex[x=9,y=0] {x19}

    \draw[-stealth] (x11) -- (x17);
    \draw[-stealth] (x17) -- (x18);
    \draw[-stealth] (x18) -- (x19);
    \draw[-stealth] (x19) -- (x20);
    \draw[-stealth] (x20) -- (x21);
    \draw[-stealth] (x21) -- (x22);
    \draw[-stealth] (x22) -- (x23);
    \draw[-stealth] (x23) -- (x24);
    \draw[-stealth] (x24) -- (x19);

    \node at (9,0.5) {$\alg(\pi_3\cdot \rho_3)$};
    \node at (9,-0.35) {$\alg(\pi_3)$};

    \tikzset{defaultvertex/.style={basevertex, fill=blue!70}}
    \Vertex[x=10,y=0] {x25}
    \Vertex[x=11,y=0] {x26}

    \draw[-stealth] (x19) -- (x25);
    \draw[-stealth] (x25) -- (x26);

    \node at (10,-0.4) {$\sigma$};

  \end{tikzpicture}
  \caption{For general streaming problems, we generate one very long stream
  which iterates though a sequence of vectors $x_i$ in $\Z^n$, looking for
  ``loops'' that change the value of the vector without changing the state of
  $\alg$. Whatever the postfix $\sigma$ is, the output will be indifferent to
  the number of loops $\rho_i$ added.}
\end{figure}

\paragraph{Total functions.} For total functions $f$, we will consider streams
that are the ``canonical representation'' $\kappa(x)$ of some vector $x$,
defined as the stream that inserts every coordinate of $x$. If we can find
some pair of vectors $x, y$ such that the algorithm reaches the same state on
$\kappa(x)$ and $\kappa(y)$, then for \emph{any} vector $z$, the algorithm will
reach the same result on $\kappa(x) \cdot \kappa(-y) \cdot \kappa(z)$ and
$\kappa(y) \cdot \kappa(-y) \cdot \kappa(z)$, and so $f(z) = f(z + (x-y))$. It
is therefore safe to ``quotient'' out $x-y$. 

By repeatedly performing this procedure, we find a submodule of $\Z^n$ such
that $f$ is constant on the submodule and all its cosets---our sketch can
be seen as a map from $\Z^n$ to the corresponding quotient module.

As any vector in $\Z^n$ can be inserted in at most $n$ updates, this
means we only need $\alg$ to work on length $O(n)$ streams. In fact,
it will prove possible to guarantee $x$ and $y$ are length no more
than $s$, so provided $\alg$ is sublinear the required stream length
is $(1 + o(1))n$.

At the end of the stream, having stored a ``reduced'' vector, we recover $f$ by
presenting this vector to $\alg$ in its canonical form---as we know $f$ takes 
the same value on the reduced vector as it does on the full input vector we 
will recover the correct answer.

\paragraph{General streaming problems.} The above approach fails, however, if
$\alg$ has multiple valid outputs for any given input. To see this, consider
the case where $\alg$ calculates a $(1 \pm \varepsilon)$ approximation to $f$.
Then the proof above would guarantee only that $f(z)$ and $f(z + (x-y))$ were
within $\varepsilon(f(z) + f(z + (x-y))$ of one another, and so repeatedly
quotienting out vectors could still bring us very far from the correct answer.

So instead of finding a submodule such that $f$ is constant on cosets of the
submodule, we find a submodule such that there is a mapping from vectors in
$\Z^n$ to streams such that for each coset of the submodule, the output of
$\alg$ on the corresponding streams is constant. We can then quotient out
the vectors that generate this submodule, and then once we are finished
processing the stream, map our ``reduced'' vector to an appropriate stream and
give that stream as input to $\alg$.

To do so we will consider a sequence of vectors $x_i$ that iterates through
$\Z$ in some appropriate way, and the corresponding ``covering streams''
$\tau^{x_i} = \kappa(x_1)\cdot\kappa(x_2 - x_1) \dots \kappa(x_i - x_{i-1})$.
As $\alg$ only has $2^s$ states, at some point when processing this stream it
will return to a state already visited. This gives us a ``loop'', a sequence of
updates that takes us from one state to the same state. As the $x_i$ are
distinct, we can find a loop that has non-zero frequency, and therefore we can
quotient out that loop. 

We repeat this process to find a sequence of streams $\pi_i$ (each a prefix of
the next) and loops $\rho_i$ such that the algorithm is the same after
processing $\pi_i \cdot \rho_i$ as $\pi_i$, but $\freq \rho_i$ is a different
non-zero vector each time

For recovery, we will again insert the reduced vector in its canonical form in
$\alg$, but we will need to prefix it with the stream built up in the reduction
(without loops). We then subtract off the original stream to preserve the final
value of the vector. That ensures that there is some stream which corresponds
to the original vector such that $\alg$ would reach the same state it does on
this one (by inserting loops\footnote{It may be noted that this will not work
if taking the original vector to the reduced vector requires \emph{subtracting}
our ``quotiented out'' vectors. To compensate for this, our mapping from
vectors to streams will include subtracting a large number of each quotient
vector (outside of the loops), so that we only need to add loops. It is
possible to show that there is a sufficiently large number of
quotients to subtract independent of the true value of the vector.}), and so
whatever output our algorithm gives is
some valid output for this vector.

\paragraph{Constructing a sketch.} In both cases, we have described a
method of finding vectors to ``reduce'' our input vector by---in other
words, we have found a way to produce vectors that generate a
submodule $N$ of $\Z^n$ such that we only care which coset of $N$ our
vector is in (i.e. which element of the quotient module $\Z^n/N$ it
maps to). However, we still need to find a consistent method to take
an element $x$ of $\Z^n$ to a representative element of $N + x$ that
can be computed in small space. Moreover, we need to be able, for any
pair of representative elements $x, y$ to find the representative
element of $N + (x + y)$, so that we can apply module operations
(i.e., maintain the sketch under updates to the stream and merge
sketches of different streams).

The representative element we choose is the lexicographically first element
with all non-negative coordinates in $N + x$. This can be computed in small
space by repeatedly subtracting off our ``quotient vectors'' until it is no
longer possible to do so (we will choose these vectors in a way that guarantees
this eventually happens). The set of these elements will turn out to be
$\prod_{i=1}^n\Z_{a_i}$ for positive integers $a_i$, and we will call the map
from $\Z^n$ to this set $\phi$. For any pair of representative elements $x, y$,
the representative element of $N + (x + y)$ will be $\phi(x + y)$, so this
defines a $\Z$-module $M \cong \Z^n/N$ with addition operator $\star$ given by
$x \star y = \phi(x + y)$ and $\phi$ a homomorphism between these modules.

To actually calculate this homomorphism, we need to calculate the vectors to
be quotiented out in $O(s \log n)$ space. As even storing all of them would
require more space than that, we generate them sequentially whenever needed,
storing only enough information about vectors generated earlier to calculate
later vectors.

The proof of these results lies in Section~\ref{sec:equivalence}.

\section{Box-Constrained Streaming: Problem and lower bound}\label{sec:boxencoding}
\subsection{Streaming Triangle Game}

\noindent
Our problem is based on encoding an instance of the
$\mathtt{PromiseCounting}(H,n,T,\eps)$ communication problem from
\cite{KKP18} as a binary vector.  We will only
use the special case where $H$ is the triangle $K_3$, $T = n/10$,
and $\eps = 1$.  We refer to this
$\mathtt{PromiseCounting}(K_3,n,n/10,1)$ instance as $\promise(n)$,
which we describe in Figure~\ref{fig:promise} and illustrate in
Figure~\ref{fig:instance2}.

\begin{figure}
\begin{center}
\fbox{
\begin{minipage}{\textwidth}
{
\noindent $\promise(n)$

\paragraph{Parties:} Let $\Vt$ and $\Et$ be the vertex and edge sets, respectively, of a triangle $K_3$.  There are three players, one associated with each edge $e \in \Et$.  There is one referee, who receives messages from the three players. No other communication takes place.

\paragraph{Constants:} Let $N = 30n$.  We define 
$N$ vertices $V_a$ associated with each of the three vertices $a \in V^\Delta$.

\paragraph{Inputs:}

Each player $e = ab$ receives a list of $N/3$ triples $(u, v, z_{uv}) \in V_a \times V_b \times \{0, 1\}$.

\paragraph{Promise:}
The instance satisfies the following promise:
\begin{enumerate}
\item No $u$ or $v$ appears more than once in any single player's input.  Thus the set of all edges $(u, v)$ in player inputs can be viewed as a graph $G$ over $\bigcup_{a \in \Vt} V_a$, and this graph has $N$ edges and $3N$ vertices.
\item $G$ contains $n$ triangles.  All $27 n$ other edges are isolated.
\item There exists a $\tau \in \{0, 1\}$ such that for every triangle $uvw$ in $G$,
  \[
    z_{uv} \oplus z_{vw} \oplus z_{wu} = \tau.
  \]
\end{enumerate}

\paragraph{Goal:}
Given the messages received from the players, the referee's task is to
determine whether $\tau = 0$ or $\tau = 1$.  }
\end{minipage}
}
\end{center}
\caption{Definition of a $\promise$ instance.}\label{fig:promise}
\end{figure}

\begin{theorem}[Implication of Corollary 15 of~\cite{KKP18}]\label{thm:KKP}
  Let $n \geq 1$.  Suppose that, for every instance of $\promise(n)$,
  no player sends a message of more than $c$ bits.  There exists
  a universal constant $\gamma$ such that, if $c \leq \gamma n^{1/3}$, the
  probability the referee succeeds is at most 51\%.
\end{theorem}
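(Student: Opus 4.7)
The theorem is framed as a specialization of Corollary~15 of~\cite{KKP18}, so the proof plan is essentially a parameter-matching exercise rather than a new argument. First I would verify that the instance described in Figure~\ref{fig:promise} is indeed $\mathtt{PromiseCounting}(K_3, n, n/10, 1)$: three players correspond to the three edges of $K_3$, the vertex partition $\{V_a\}_{a \in \Vt}$ matches the one used by $\mathtt{PromiseCounting}$, each player's input consists of $N/3 = 10n$ labeled edges (with the ``no repeated endpoint'' promise ensuring the edge set is a simple graph), the combined graph has exactly $n$ triangles among $N = 30n$ edges with the remaining $27n$ edges isolated, and the XOR condition on triangle labels is precisely the structure the referee must decode to recover $\tau$. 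This matches $T = n/10$ triangles with approximation parameter $\eps = 1$ in the notation of~\cite{KKP18}.

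Second, I would invoke Corollary~15 of~\cite{KKP18} at these parameters. For the triangle $K_3$, the resulting SMP lower bound takes the form ``any protocol in which each player sends $o(n^{1/3})$ bits succeeds with probability only $\tfrac12 + o(1)$.'' Absorbing the constants in the corollary into a single universal constant $\gamma$, one obtains: if each player sends at most $\gamma n^{1/3}$ bits, then the referee's success probability is bounded by $\tfrac12 + \eta(\gamma)$ for some $\eta$ that shrinks with $\gamma$. Choosing $\gamma$ small enough to make $\eta(\gamma) \le 0.01$ yields the claimed $51\%$ upper bound.

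The only real obstacle is bookkeeping. I would need to confirm that (i) the ``$n$'' in Corollary~15 refers to the same parameter as in our $\promise(n)$ (i.e., the triangle count and not $N = 30n$ or the total input length $O(n \log n)$), so that the exponent $1/3$ in our statement is faithful to the corollary; and (ii) the constants hidden in the corollary do not secretly depend on $T$ or $\eps$ in a way that spoils the $n^{1/3}$ scaling at the particular values $T = n/10$ and $\eps = 1$. Both are direct checks against the statement of Corollary~15, and no new ideas are needed beyond substitution.
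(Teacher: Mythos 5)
Your proposal matches the paper's treatment: the paper gives no independent proof of this theorem, simply importing it as an implication of Corollary~15 of~\cite{KKP18} and noting (as you do) that the $\promise(n)$ formulation is equivalent to the $\mathtt{PromiseCounting}(K_3,n,n/10,1)$ problem as defined there. Your parameter-matching and constant-absorption steps are exactly the bookkeeping the paper leaves implicit, so the approach is essentially the same.
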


We note that our~$\promise$ problem is written somewhat differently
from \linebreak the~$\mathtt{PromiseCounting}$ problem as defined
in~\cite{KKP18}.  Our description is equivalent, however, as suggested
in Figure~2 of~\cite{KKP18}.

Both Theorem~\ref{thm:01-clean} and Theorem~\ref{thm:2m-clean} involve encoding
the player's inputs to $\promise(n)$ as a frequency vector. The outer encoding,
from instances of $\promise(n)$ to strings from an alphabet $\Sigma$, is the
same for both. The inner encoding will differ, taking strings from $\Sigma$ to
strings from $\lbrace 0, 1\rbrace$ and $\lbrace -M, M\rbrace$ for
Theorem~\ref{thm:01-clean} and Theorem~\ref{thm:2m-clean} respectively.

For both, the frequency vector will have dimension $\Theta(n \log n)$.
Theorems~\ref{thm:01-clean} and~\ref{thm:2m-clean} then follow by considering
an encoding of $\promise(\Theta(n/\log n))$.

$\promise$ is defined in Figure~\ref{fig:promise}. When there is no ambiguity
about which instance of~$\promise$ is being referenced, we will implicitly use
the variable names from this definition to refer to the corresponding variables
for that instance.

\paragraph{Outer Encoding.} We define the alphabet
$\Sigma = ([N] \times \{0, 1\}) \cup \{\perp\}$.  We encode an
instance of $\promise(n)$ into $\Sigma^{6N}$ as follows.  For each
$e \in \Et$ and $a \in e$, we create a vector
$y^{e,a} \in \Sigma^{N}$; the full encoding is the concatenation of
the six $y^{e,a}$.

As illustrated in Figure~\ref{fig:fig2c}, the input of player $e = ab$
consists of a list of $N/3$ edges $(u, v, z_{uv})$, where each
$u \in V_a$ and $v \in V_b$.  Since $\abs{V_a} = \abs{V_b} = N$, we
can define a canonical bijection from each of $V_a$ and $V_b$ into
$[N]$; call these $f_a$, $f_b$.

Then for every $(u, v, z_{uv})$ in player $e$'s list, we set
\begin{align*}
  y^{e, a}_{f_a(u)} &:= (f_b(v), z_{uv})\\
  y^{e, b}_{f_b(v)} &:= (f_a(u), z_{uv})
\end{align*}
Since each $u$ appears at most once in $e$'s list, this is well defined.  This
sets $N/3$ of the $N$ coordinates in each of $y^{e,a}$ and $y^{e,b}$; every
other coordinate is set to $\perp$.  

This encoding of the players' inputs is injective; in fact, either one
of $y^{e,a}$ or $y^{e,b}$ suffices to recover player $e$'s input.

\paragraph{Inner Encoding.}  Let $B = 1 + \lceil \lg N + 1 \rceil$.
For Theorem~\ref{thm:01-clean}, we encode $\Sigma$ into
$\{0, 1\}^{B}$.  We encode $\perp$ as $0^B$.  To encode
$(l, z) \in [N] \times \{0, 1\}$ we first take the standard binary
encoding $l^{(bin)}$ of $l$ into $\{0, 1\}^{B}$.  This is nonzero,
since $l > 0$; and its highest bit is zero, since $l \leq N$.  Then we
output the bitwise XOR $x = l^{(bin)} \oplus z^B$.

This encoding is injective, because the highest bit will equal $z$,
after which $z$ can be removed and $l$ recovered.  Concatenating the
outer and inner code gives an injection from the players' inputs to $\{0,
1\}^{6NB}$.

For Theorem~\ref{thm:2m-clean}, we use the same encoding, and then replace
every instance of $1$ with $M$, and every instance of $0$ with $-M$.

\paragraph{The streaming problem.}  We can now define the streaming
problem $P_n$. For any vector $x$ such that $x$ is \emph{not} an
encoding of an instance of $\promise(n)$, $(x,0)$ and $(x,1)$ are in
$P_n$, i.e., any output is acceptable on such an input. For any vector
$x$ such that $x$ is an encoding of an instance with $\tau = 0$,
$(x, 0) \in P_n$, and for any vector $x$ such that $x$ is an encoding
of an instance with $\tau = 1$, $(x, 1) \in P_n$.

\subsection{Linear Sketching Lower Bound}
By Theorem~\ref{thm:KKP}, any protocol for the communication problem that
succeeds with probability at least $2/3$ requires $\Omega(n^{1/3})$ bits of
communication by at least one player. Furthermore, the model of~\cite{KKP18}
allows the players access to an unlimited amount of shared randomness.

Now suppose we have a linear sketching algorithm for $P_n$.  Note that the
outer code encodes each player's input into separate coordinates.  The inner
code, of course, preserves this property.  Therefore player $e$ could encode
their part of the problem with the other coordinates set to zero, sketch it,
and send it to the referee.  The referee can add up these sketches to get a
sketch for the full vector $x$, then determine $\tau$.  Since each player only
sends a message of size equal to the space usage of the linear sketching
algorithm, the space used must be $\Omega(n^{1/3})$.

Therefore, $P_n$ satisfies criterion 1 of Theorems~\ref{thm:01-clean}
and~\ref{thm:2m-clean}. To prove that it satisfies criterion 2, we
construct a turnstile algorithm for $P_n$.

\subsection{Algorithm for $\promise$ over $\Gamma_{0,1}$}\label{sec:01alg}

This section will describe an algorithm that either outputs the correct answer
or $\perp$, and outputs the correct answer with a small positive constant
probability.  Straightforward probability amplification then can increase the
success probability to $2/3$.

We start by noting that, for any coordinate $i$, we can establish $x_i$ given
any non-empty postfix of the updates to $x_i$, as any increase proves it was
previously 0 and any decrease proves it was previously 1.

Recall that any player $e \in \Et$, side $a \in e$, and vertex
$u \in V_a$ has an associated symbol $y^{e,a}_{f_a(u)} \in \Sigma$.
We use $x^{e, a, u} \in \{0, 1\}^B$ to denote the inner encoding of
this symbol.  The final frequency vector $x$ has $x^{e, a, u}$
placed in a contiguous block, at a position that is easy to find from
$(e, a, u)$.

We state the algorithm in Algorithm~\ref{alg:weak01}.

\RestyleAlgo{boxruled}
\begin{algorithm}\caption{Low-probability $\promise$ over $\{0, 1\}$}\label{alg:weak01}
\addtolength\linewidth{-4.6ex}

\begin{enumerate}
\item Let $(a, b, c)$ be a uniformly chosen random labeling of $\Vt$.  Choose
  $u \in V_a$ uniformly at random.
\item While passing through the stream:
  \begin{enumerate}
  \item Track all updates to $x^{ab,a,u}$ and $x^{ac, a, u}$.
  \item While doing so, keep checking whether $x^{ab,a,u}$ is a valid
    inner encoding of $\Sigma$; if it is, and it doesn't decode to $\perp$, then it is
    an encoding of $(f_b(v'), z)$ for some $v' \in V_b$ and $z'$.  Let $(v', z')$
    be those values, if they exist.
  \item As soon as $(v', z')$ is set, track all updates to $x^{bc,b,v'}$.
    Discard these updates whenever $(v', z)$ changes.
\end{enumerate}
\item After the stream finishes:
  \begin{enumerate}
  \item Decode $x^{ab,a,u}$ and $x^{ac, a, u}$ to $\Sigma$.
  \item If either is $\perp$, output $\perp$.
  \item Otherwise, let their decodings be $(f_b(v), z_{uv})$ and
    $(f_c(w), z_{uw})$ for $v \in V_b$ and $w \in V_c$.
  \item If the algorithm has not tracked any updates to $x^{bc,b,v}$, output $\perp$.
  \item Otherwise, it knows $x^{bc,b,v}_i$ for some index $i \in [B]$.
    Let $z_{vw} = x^{bc,b,v}_i \oplus f_c(w)^{(bin)}_i$.
  \item Output $z_{uv} \oplus z_{vw} \oplus z_{uw}$.
  \end{enumerate}
\end{enumerate}
\end{algorithm}
\begin{lemma}
The space complexity of Algorithm~\ref{alg:weak01} is $O(\log n)$ bits.
\end{lemma}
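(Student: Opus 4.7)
The plan is to directly account for every piece of state that Algorithm~\ref{alg:weak01} carries between updates, and check that each piece fits in $O(\log n)$ bits; since there are only a constant number of such pieces, the total is $O(\log n)$.

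I would begin with the random choices made in Step~1: the labeling $(a,b,c)$ of $V^\Delta$ lies in a six-element set, so $O(1)$ bits suffice, while the sampled vertex $u \in V_a$ takes $\lceil \log_2 N \rceil = O(\log n)$ bits since $|V_a| = N = 30n$. These are fixed at the start and never change size.

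Next I would account for the tracked blocks. Each of $x^{ab,a,u}$, $x^{ac,a,u}$, and $x^{bc,b,v'}$ is a length-$B$ window of $x$, with $B = 1 + \lceil \lg N + 1 \rceil = O(\log n)$. Because the stream lies in $\Gamma_{0,1}$, every coordinate is always in $\{0,1\}$, and the sign of any update deterministically sets the post-update bit value ($+1$ gives $1$, $-1$ gives $0$). Thus each of the two blocks tracked from the start of the stream can be stored in place as $B$ bits, initialized to $0^B$ and overwritten as updates arrive. Finally, the cached pair $(v', z') \in V_b \times \{0, 1\}$ from Step~2(b), together with a single flag recording whether it has been set, uses $O(\log n)$ bits.

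The one mildly subtle point --- and the only thing I would flag as a potential obstacle --- is the mid-stream tracker for $x^{bc,b,v'}$. Since tracking begins only after $(v',z')$ is set and is reset whenever $(v',z')$ changes, the algorithm does not know the block's value at the moment tracking (re)starts, so it cannot simply maintain the full current value. I would handle this by pairing the $B$ value bits with a $B$-bit mask indicating which bit positions have been touched since the tracker was last initialized, so the algorithm only commits to the value of a bit once it has observed an update there. This remains $2B = O(\log n)$ bits, and the index $i \in [B]$ referenced in Step~3(e) can be taken to be any set bit of this mask, requiring no extra storage. Summing the constant number of $O(\log n)$ and $O(1)$ contributions gives the claimed $O(\log n)$ bound.
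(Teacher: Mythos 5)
Your proof is correct and follows essentially the same accounting as the paper's: $O(\log n)$ bits for the randomness in Step~1 plus three tracked length-$B$ blocks with $B = O(\log n)$. The extra detail about the touched-bit mask for the mid-stream tracker of $x^{bc,b,v'}$ is a valid implementation point that the paper leaves implicit, but it does not change the argument.
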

\begin{proof}
  The randomness in step 1 uses $\log (6N)$ bits.  After that, the algorithm
  tracks three length-$B$ vectors; the total space usage is $O(\log n)$.
\end{proof}

\begin{lemma}\label{lem:weak01correct}
  Algorithm~\ref{alg:weak01} outputs either $\perp$ or $\tau$.  If $u$
  is part of a triangle in the underlying $\promise(n)$ graph $G$, and
  the last stream update to $x^{ab, a, u}$ is before the last stream
  update to $x^{bc,b,v}$, then the algorithm outputs $\tau$.
\end{lemma}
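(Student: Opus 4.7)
The plan is to split the lemma into the two claims and verify each by chasing through the algorithm's state at the end of the stream.

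For the first claim (output is always in $\{\perp, \tau\}$), the key structural fact from the definition of $\promise(n)$ is that $G$ consists of $n$ disjoint triangles plus $27n$ isolated edges. Thus every vertex $u \in V_a$ either (i) belongs to a unique triangle, in which case $u$ appears in both player $ab$'s and player $ac$'s lists (on the two triangle edges incident to $u$), (ii) belongs to exactly one isolated edge, in which case $u$ appears in exactly one of the two lists, or (iii) does not appear in either list. Consequently, if both $x^{ab,a,u}$ and $x^{ac,a,u}$ have non-$\perp$ decodings $(f_b(v), z_{uv})$ and $(f_c(w), z_{uw})$, then $u$ is a triangle vertex and $uvw$ is a triangle in $G$, so in particular $(v,w)$ lies in player $bc$'s list with some label $z_{vw}$.

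Now I would use the definition of the inner encoding. Since $(v,w) \in$ player $bc$'s list, the final value of $x^{bc,b,v}$ equals $f_c(w)^{(bin)} \oplus z_{vw}^B$. If the algorithm reaches step 3(e), then it knows the true final value of $x^{bc,b,v}_i$ for some $i$, so
\[
x^{bc,b,v}_i \oplus f_c(w)^{(bin)}_i \;=\; z_{vw},
\]
meaning the algorithm's computed $z_{vw}$ is the real label. The promise then gives $z_{uv} \oplus z_{vw} \oplus z_{uw} = \tau$, so the output is $\tau$.

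For the second claim, assume $u$ lies in a triangle $uvw$ and the last update to $x^{ab,a,u}$ occurs strictly before the last update to $x^{bc,b,v}$. Being a triangle vertex makes $x^{ab,a,u}$ and $x^{ac,a,u}$ both decode to non-$\perp$ values at the end, so the algorithm does not output $\perp$ at step 3(b). The timing hypothesis ensures that after the final update to $x^{ab,a,u}$ the value $(v',z')$ stabilizes to $(v,z_{uv})$ and is never reset again, and since at least one subsequent update to $x^{bc,b,v}$ occurs, the algorithm tracks (and therefore retains) that update, so step 3(d) does not trigger either. Hence step 3(e) fires and by the first part the output is $\tau$.

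The main subtlety, which I would be careful about, is the interaction between the ``discard on change'' rule for the $x^{bc,b,v'}$ tracker and the guarantee that at the end the algorithm knows the \emph{true final} value of some bit of $x^{bc,b,v}$: one must observe that once $(v',z')$ stops changing after the last update to $x^{ab,a,u}$, every subsequent update to $x^{bc,b,v}$ is recorded, so the value stored for any tracked bit coincides with its final value in $x$. The rest is just unpacking the encoding.
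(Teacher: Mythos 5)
Your proof is correct and follows essentially the same route as the paper's: complete tracking of the two $u$-blocks gives correct decodings (so two non-$\perp$ decodings force $u$ into a triangle $uvw$), a tracked postfix of $x^{bc,b,v}$ pins down one true bit and hence $z_{vw}$ via the XOR encoding, and the timing hypothesis guarantees that this postfix is non-empty. The one step worth making fully explicit is \emph{why} a non-empty tracked postfix determines the true final bit --- namely the $\Gamma_{0,1}$ constraint, under which any increment proves the coordinate was $0$ and any decrement proves it was $1$; the paper states this just before the algorithm, and your phrase ``so the value stored for any tracked bit coincides with its final value'' leans on it implicitly.
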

\begin{proof}
  Note that $x^{ab,a,u}$ and $x^{ac, a, u}$ are tracked completely, so their
  final decodings into $\Sigma$ are correct.  If $u$ is not part of a triangle,
  at most one edge is incident to $u$ in the full graph $G$, so at least one of
  the decodings is $\perp$ and the algorithm returns $\perp$.

  Otherwise, if $u$ is part of a triangle, the algorithm correctly
  deduces $(v, z_{uv})$ and $(w, z_{uw})$.  If the algorithm has not
  seen an update to $x^{bc,b,v}$, it will output $\perp$; otherwise,
  since it tracks a postfix of the stream, it correctly identifies
  $x^{bc,b,v}_i$.  Since $uvw$ is a triangle, we know player $bc$ has
  the input $(v, w, z_{vw})$ for some $vw$, and the inner encoding is
  \[
    x^{bc,b,v}_i = z_{vw} \oplus f_c(w)^{(bin)}_i.
  \]
  Thus the algorithm correctly identifies $z_{vw}$, and the
  $\promise(n)$ promise says
  \[
    \tau = z_{uv} \oplus z_{vw} \oplus z_{uw}.
  \]
  Hence the algorithm outputs either $\perp$ or $\tau$.  Moreover, it
  will have deduced $v$ correctly upon the last update to
  $x^{ab,a,u}$; if this is before the last update to $x^{bc,b,v}$ then
  it will see at least one update there and output $\tau$.
\end{proof}

\begin{lemma}\label{lem:weak01chance}
  Algorithm~\ref{alg:weak01} outputs $\tau$ with at least
  $\frac{1}{180}$ probability.
\end{lemma}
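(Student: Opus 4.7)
The plan is to lower-bound $\Pr[\text{Algorithm~\ref{alg:weak01} outputs } \tau]$ by conditioning on two favorable events whose conjunction is sufficient by Lemma~\ref{lem:weak01correct}: (i)~the sampled vertex $u$ is part of some triangle $T$ of $G$, and (ii)~the last stream update to $x^{ab,a,u}$ strictly precedes the last stream update to $x^{bc,b,v}$.

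For event (i), each $V_a$ contains exactly $n$ triangle vertices out of $|V_a| = N = 30n$, so a uniform $u \in V_a$ is a triangle vertex with probability $1/30$, independently of the random labeling $(a,b,c)$. Conditioning on $u$ lying in a specific triangle $T$, the labeling $(a,b,c)$ remains uniform over the six orderings of $\Vt$, since for each labeling there is exactly one $u \in V_a$ that lies on $T$. To analyze event (ii) under this conditioning, I would associate to $T$ six ``slot'' times: for each pair $(e,w)$ with $e \in \Et$ and $w \in e$, let $\mathcal{T}_{e,w}$ be the time of the last stream update to the encoding coordinate at the $V_w$-vertex of $T$ held by player $e$. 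Each labeling $(a,b,c)$ converts the condition of Lemma~\ref{lem:weak01correct} into a strict inequality $\mathcal{T}_{ab,a} < \mathcal{T}_{bc,b}$ between two of these six distinct slot times.

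The main step is then to enumerate the six such comparisons---one per labeling---and to observe that, viewed as directed edges on the six slot nodes, they form two vertex-disjoint directed $3$-cycles. Because a directed $3$-cycle cannot be strictly decreasing everywhere under any total order of its vertices, each cycle contributes at least one satisfied comparison, so at least $2$ of the $6$ labelings succeed. Hence the conditional probability of event (ii) given (i) is at least $2/6 = 1/3$, and combining gives $\Pr[\text{success}] \geq \tfrac{1}{30}\cdot\tfrac{1}{3} = \tfrac{1}{90} \geq \tfrac{1}{180}$.

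The only mildly delicate point is the small case analysis verifying the $3$-cycle decomposition; this is pure bookkeeping once one writes out, for each of the six permutations $(a,b,c)$, which pair of slots is being compared. If one prefers to bypass even this, the weaker observation that any single $3$-cycle among three of the six comparisons must contain at least one satisfied comparison already yields conditional probability $\geq 1/6$, and hence the stated $\tfrac{1}{180}$ bound directly.
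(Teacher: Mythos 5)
Your proof is correct and follows essentially the same route as the paper's: a $1/30$ chance that $u$ lies in a triangle (independent of the labeling), times a cyclic-ordering argument showing that at least one of the six labelings satisfies the hypothesis of Lemma~\ref{lem:weak01correct}. The paper uses only one of your two directed $3$-cycles (it observes that whichever of the three blocks $x^{a'b',a',u}, x^{b'c',b',v}, x^{c'a',c',w}$ finishes updating first determines a good cyclic labeling), so it gets conditional probability $1/6$ and the bound $1/180$ exactly, whereas your two-cycle decomposition in fact proves the slightly stronger $1/90$.
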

\begin{proof}
  There is a $n/N = 1/30$ chance that $u$ lies in a triangle, independent of
  the choice of $(a, b, c)$. Furthermore, if it does, \emph{which} triangle
  it lies in is independent of the choice of $(a, b, c)$.

  Suppose $u$ lies in the triangle $uvw$ with $u \in V_{a'}, v \in V_{b'}, w
  \in V_{c'}$.  One of the three blocks
  \[
    x^{a'b', a', u},\qquad x^{b'c', b', v},\qquad x^{c'a', c', w}
  \]
  will be the first to finish being updated in the stream.  WLOG this is $a'$.
  Then Lemma~\ref{lem:weak01correct} says that if $(a, b, c) = (a', b', c')$,
  Algorithm~\ref{alg:weak01} will output $\tau$.  This choice happens with
  $1/6$ probability; combined with the $1/30$ chance that $u$ lies in a
  triangle, we get at least a $1/180$ chance of outputting $\tau$.
\end{proof}

\begin{lemma}
There is a turnstile streaming algorithm that solves $P_n$ on $\Gamma_{0,1}$
with probability $2/3$ using $O(\log n)$ bits of space.
\end{lemma}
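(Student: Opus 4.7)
The plan is to amplify Algorithm~\ref{alg:weak01} by independent parallel repetition. By Lemma~\ref{lem:weak01correct}, every execution of Algorithm~\ref{alg:weak01} outputs either $\perp$ or the correct answer $\tau$; in particular, any non-$\perp$ output can be trusted. By Lemma~\ref{lem:weak01chance}, each independent run outputs $\tau$ with probability at least $1/180$.

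So I would run $k$ independent copies of Algorithm~\ref{alg:weak01} in parallel on the stream, each using its own independent random choices of $(a,b,c)$ and $u$. Since every update to $x$ is a single update to each copy, the copies can be maintained simultaneously at cost $k$ times the single-copy space. After the stream ends, if any of the $k$ copies has a non-$\perp$ output, return that output (which equals $\tau$ by Lemma~\ref{lem:weak01correct}); otherwise, return an arbitrary bit, say $0$.

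For correctness on valid encodings, the probability that all $k$ copies output $\perp$ is at most $(179/180)^k$, which is at most $1/3$ once $k \geq \lceil \ln 3 / \ln(180/179)\rceil$, a universal constant. Thus the algorithm produces $\tau$ with probability at least $2/3$ on every valid input; on invalid inputs, both $0$ and $1$ are acceptable outputs in $P_n$, so any behavior is fine. For space, each copy uses $O(\log n)$ bits by the space bound preceding Lemma~\ref{lem:weak01correct}, and $k$ is a constant, giving total space $O(\log n)$.

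There is no real obstacle: the one-sided nature of the error (no false positives, only abstention) means the amplified algorithm is just the logical ``or'' of non-$\perp$ outputs, and a constant number of repetitions suffice. Combined with the linear sketching lower bound already established from Theorem~\ref{thm:KKP}, this completes Theorem~\ref{thm:01-clean}.
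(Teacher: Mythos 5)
Your proposal is correct and is essentially the paper's own proof: the paper also runs a constant number (360) of independent parallel copies of Algorithm~\ref{alg:weak01}, outputs any non-$\perp$ result, and bounds the all-$\perp$ probability by $(1-1/180)^{360} < 1/3$ using the one-sided error guarantees of Lemmas~\ref{lem:weak01correct} and~\ref{lem:weak01chance}. The only difference is that you leave the constant implicit while the paper fixes it at $360$.
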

\begin{proof}
  Run Algorithm~\ref{alg:weak01} in parallel $360$ times and output
  any non-$\perp$ result.  By Lemma~\ref{lem:weak01correct} any
  non-$\perp$ result will be correct.  By Lemma~\ref{lem:weak01chance}
  the failure probability is at most $(1 - 1/180)^{360} < 1/e^2 < 1/3$.
\end{proof}

\subsection{Algorithm for $\promise$ over $\Gamma_{2M-1}$}\label{sec:boxalg}
We write $\sigma^{(t)}$ for the prefix of $\sigma$ consisting of its first $t$
updates. Define the error correction function $\zeta$ by
\[
\zeta(z)_i = \begin{cases}
M &\mbox{$z_i > 0$}\\
-M &\mbox{$z_i < 0$}\\
0 &\mbox{$z_i = 0$}
\end{cases}
\]
and define the decoding function $\eta : \lbrace -M, M\rbrace^*\rightarrow
\lbrace 0, 1\rbrace$ by: \[
\eta(z)_i = \begin{cases}
1 &\mbox{$z_i = M$}\\
0 &\mbox{$z_i = -M$}
\end{cases}
\]
We will use the following decoding lemma in our algorithm:
\begin{lemma}
\label{lem:2mdecoding}
Let $\sigma$ be a stream in $\Gamma_{2M - 1}$ such that $\freq \sigma \in
\lbrace -M, M\rbrace^*$. Then for any $i$, and for any split of the stream
$\sigma = \sigma_1 \cdot \sigma_2$,
\begin{enumerate}
\item $\min_t (\freq \sigma_2^{(t)})_i \le (\freq \sigma_2)_i - M \Rightarrow
\eta(\freq \sigma)_i = 1$
\item $\max_t (\freq \sigma_2^{(t)})_i \ge (\freq \sigma_2)_i + M \Rightarrow
\eta(\freq \sigma)_i = 0$
\end{enumerate}
and one of these conditions holds iff $\exists t$ such that $\zeta(\freq
\sigma_1 \cdot \sigma_2^{(t)})_i \not = \zeta(\freq \sigma)_i$.
\end{lemma}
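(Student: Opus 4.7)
The plan is to fix a coordinate $i$ and reparameterize the prefix/suffix split in terms of two scalars. Set $u := (\freq \sigma_1)_i$ and, for each prefix length $t$ of $\sigma_2$, set $v_t := (\freq \sigma_2^{(t)})_i$. Then the intermediate state $x^{(|\sigma_1| + t)}_i$ equals $u + v_t$, and by the box constraint $\sigma \in \Gamma_{2M-1}$ we have $|u + v_t| \le 2M - 1$ for all $t$. Writing $L = |\sigma_2|$, the final value is $f_i := (\freq \sigma)_i = u + v_L$, and by hypothesis $f_i \in \{-M, M\}$.

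For the first two implications I would argue by contradiction using the box constraint. Suppose that $\min_t v_t \le v_L - M$ but $f_i = -M$; then $u = -M - v_L$, and at a time $t^*$ achieving the minimum we would have
\[
u + v_{t^*} \;\le\; -M - v_L + (v_L - M) \;=\; -2M,
\]
contradicting $|u + v_{t^*}| \le 2M - 1$. Hence $f_i = M$, i.e., $\eta(\freq \sigma)_i = 1$, which is claim (1). Claim (2) follows by a symmetric argument: if $\max_t v_t \ge v_L + M$ but $f_i = M$, then $u + v_{t^*} \ge 2M$, again contradicting the box constraint.

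For the final biconditional I would show each direction separately. For the forward direction, assume one of the two conditions holds; WLOG the first, so $f_i = M$ by the step above and at the minimizing time $t^*$ we have $u + v_{t^*} \le u + v_L - M = f_i - M = 0$. Thus $\zeta(\freq \sigma_1 \cdot \sigma_2^{(t^*)})_i \in \{-M, 0\}$, which differs from $\zeta(\freq \sigma)_i = M$. For the reverse direction, suppose there exists $t$ with $\zeta(\freq \sigma_1 \cdot \sigma_2^{(t)})_i \ne f_i$; then $u + v_t$ is either zero or of opposite sign to $f_i$. If $f_i = M$ this yields $v_t \le -u = v_L - M$, so condition (1) holds; if $f_i = -M$ it yields $v_t \ge -u = v_L + M$, so condition (2) holds.

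There is no real obstacle here; the only care needed is to keep track of which direction of inequality corresponds to which sign of $f_i$, and to note that $f_i \in \{-M, M\}$ (rather than $0$) rules out the degenerate case in the biconditional. Everything else is a direct consequence of the box bound $|u + v_t| \le 2M - 1$ coupled with the fact that flipping the assumed value of $f_i$ forces the intermediate state at the witnessing time to overshoot $\pm 2M$.
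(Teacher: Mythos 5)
Your proposal is correct and follows essentially the same argument as the paper: both prove the two implications by contradiction, showing that the opposite sign for the final value would force an intermediate state of magnitude $2M$, violating the box constraint, and both establish the biconditional by translating the $\pm M$ deviation conditions into sign conditions on the intermediate state at the witnessing time. The only difference is notational ($u$, $v_t$ versus the paper's direct manipulation of $\freq$ expressions).
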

\begin{proof}
Suppose $\min_t (\freq \sigma_2^{(t)})_i \le (\freq \sigma_2)_i - M$.  Then if
$\eta(\freq \sigma)_i = 0$, $(\freq \sigma)_i = -M$. Let $t$ be a minimizer
of $(\freq \sigma_2^{(t)})_i$, so
\begin{align*}
(\freq \sigma^{(|\sigma_1| + t)})_i &= (\freq \sigma_1)_i + (\freq
\sigma_2^{(t)})_i\\
&\le (\freq \sigma_1)_i + (\freq \sigma_2)_i - M\\
&= (\freq \sigma)_i - M\\
&= -2M
\end{align*} but by the box constraint $(\freq \sigma^{(t)})_i \ge -2M + 1$,
giving a contradiction. So $\eta(\freq \sigma)_i = 1$.

Likewise, if $\max_t (\freq \sigma_2^{(t)})_i \ge (\freq \sigma_2)_i + M$,
there exists $t$ such that if $\eta(\freq \sigma)_i = 1$, $(\freq
\sigma^{(|\sigma_1| + t)})_i \ge 2M$, so it must be the case that
$\eta(\freq \sigma)_i = 0$.

For the final part of the lemma, note that one of the conditions holds iff
\[
\max_t |(\freq \sigma_2^{(t)})_i - (\freq \sigma_2)_i| \ge M
\]
or equivalently iff \[
\max_{t \ge |\sigma_1|}|(\freq \sigma^{(t)})_i - (\freq \sigma)_i| \ge M
\]
which as $(\freq \sigma)_i = \pm M$, holds iff there is a $t \ge |\sigma_1|$
such that either $(\freq \sigma^{(t)})_i \le 0$ and $(\freq \sigma)_i = M$, or
$(\freq \sigma^{(t)})_i \ge 0$ and $(\freq \sigma)_i = -M$, and in turn one of these
holds iff $\zeta(\freq \sigma^{(t)}) \not= \zeta(\freq \sigma)$.
\end{proof}

The algorithm is described in Algorithm~\ref{alg:weak2m}.

\begin{algorithm}\caption{Low-probability $\promise$ over $\Gamma_{2M-1}$}\label{alg:weak2m}
\addtolength\linewidth{-4.6ex}

\begin{enumerate}
\item Let $(a, b, c)$ be a uniformly chosen random labeling of $\Vt$.  Choose
  $u \in V_a$ uniformly at random.
\item While passing through the stream:
  \begin{enumerate}
  \item Track all updates to $x^{ab,a,u}$ and $x^{ac, a, u}$.
  \item While doing so, keep checking whether $\zeta(x^{ab,a,u})$ is a valid
  inner encoding of $\Sigma$; if it is, and it doesn't decode to $\perp$, then
  it is an encoding of $(f_b(v'), z)$ for some $v' \in V_b$ and $z'$.  Let
  $(v', z')$ be those values, if they exist.
  \item As soon as $(v', z')$ is set, track all updates to $x^{bc,b,v'}$,
  recording the current, minimum, and maximum value of each of its
  coordinates.  Discard these updates whenever $(v', z)$ changes.
\end{enumerate}
\item After the stream finishes:
  \begin{enumerate}
  \item Decode $\zeta(x^{ab,a,u})$ and $\zeta(x^{ac, a, u})$ to $\Sigma$.
  \item If either is $\perp$, output $\perp$.
  \item Otherwise, let their decodings be $(f_b(v), z_{uv})$ and
    $(f_c(w), z_{uw})$ for $v \in V_b$ and $w \in V_c$.
  \item If the final observed value for $x^{bc,b,v}$ is within $M-1$ of all the
  values the algorithm has observed for it, output $\perp$.
  \item Otherwise, by Lemma \ref{lem:2mdecoding} it knows
  $\eta(x^{bc,b,v})_i$ for some index $i \in [B]$.
    Let $z_{vw} = \eta(x^{bc,b,v})_i \oplus f_c(w)^{(bin)}_i$.
  \item Output $z_{uv} \oplus z_{vw} \oplus z_{uw}$.
  \end{enumerate}
\end{enumerate}
\end{algorithm}

\begin{lemma}
The space complexity of Algorithm~\ref{alg:weak2m} is $O(\log n\log M)$ bits.
\end{lemma}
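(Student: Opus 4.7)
The plan is to bound the space used by each component of the algorithm's state separately and then sum them. First, I would account for the random choices made in step~1: a uniformly random labeling of $\Vt$ together with a uniform $u \in V_a$ take $O(\log N) = O(\log n)$ bits to store.

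Next, I would bound the cost of the three length-$B$ blocks $x^{ab,a,u}$, $x^{ac,a,u}$, and $x^{bc,b,v'}$ that the algorithm tracks. Since $B = 1 + \lceil \lg N + 1 \rceil = O(\log n)$ and every coordinate value the algorithm records is drawn from some intermediate state of a stream in $\Gamma_{2M-1}$, each such coordinate lies in $[-(2M-1), 2M-1]$ and thus fits in $O(\log M)$ bits. For $x^{bc,b,v'}$ the algorithm additionally maintains a running minimum and maximum per coordinate, which only adds a constant factor. Each block therefore costs $O(B \log M) = O(\log n \log M)$ bits, giving $O(\log n \log M)$ bits in total across the three blocks. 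The cached pair $(v', z')$ uses a further $O(\log n)$ bits.

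Summing these contributions yields $O(\log n \log M)$ bits, as claimed. The only point that requires any care is verifying that the coordinate values actually written down are bounded in magnitude by $O(M)$; this is immediate because the algorithm only ever records values that equal coordinates of $\freq \sigma^{(t)}$ for some prefix of the input stream $\sigma \in \Gamma_{2M-1}$, and the min/max bookkeeping for $x^{bc,b,v'}$ stores values of the same type.
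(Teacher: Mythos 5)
Your proposal is correct and matches the paper's argument: charge $O(\log n)$ bits for the randomness in step 1 and $O(\log n \log M)$ bits for the three tracked length-$B$ blocks whose entries fit in $O(\log M)$ bits each. The only (harmless) imprecision is that the values recorded for $x^{bc,b,v'}$ are partial sums of a suffix of updates rather than coordinates of an intermediate state, but these are differences of two such coordinates and hence still bounded by $O(M)$, so the bound is unaffected.
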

\begin{proof} The randomness in step 1 uses $\log (6N)$ bits.  After that, the
algorithm tracks three length-$B$ vectors with entries in $\lbrace -M,
M\rbrace$; the total space usage is $O(\log n\log M)$.
\end{proof}

\begin{lemma}\label{lem:weak2mcorrect}
Algorithm~\ref{alg:weak2m} outputs either $\perp$ or $\tau$.  If $u$ is part of
a triangle in the underlying $\promise(n)$ graph $G$, and the last time
$\zeta(x^{ab,a,u})$ differs from its final value is before the last time
$\zeta(x^{bc,b,v})$ differs from its final value, then the algorithm outputs
$\tau$.
\end{lemma}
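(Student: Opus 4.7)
The plan is to mirror the proof of Lemma~\ref{lem:weak01correct}, replacing the trivial ``one update reveals $x_i$'' observation of the binary case by Lemma~\ref{lem:2mdecoding}. For the first assertion (always $\perp$ or $\tau$), I would observe that $x^{ab,a,u}$ and $x^{ac,a,u}$ are tracked over the whole stream, so $\zeta(x^{ab,a,u})$ and $\zeta(x^{ac,a,u})$ are decoded exactly. If $u$ is not part of a triangle then $u$ is incident to at most one edge of $G$ (every non-triangle edge is isolated, so any vertex incident to two edges must be a triangle vertex), hence at least one decoding is $\perp$ and step 3(b) outputs $\perp$. Otherwise the decodings are $(f_b(v), z_{uv})$ and $(f_c(w), z_{uw})$ for the other two vertices of the triangle through $u$, and the only way the algorithm avoids outputting $\perp$ is to pass step 3(d) and extract $\eta(x^{bc,b,v})_i$ for some $i$; parts (1)--(2) of Lemma~\ref{lem:2mdecoding} force this recovered bit to be the true value of $\eta(x^{bc,b,v})_i$, after which the promise makes the reported $z_{uv}\oplus z_{vw}\oplus z_{uw}$ equal to $\tau$.

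For the second assertion, assume $u$ is in the triangle $uvw$ and the timing hypothesis holds. Let $t^\ast$ be the last time $\zeta(x^{ab,a,u})$ differs from its final value; after $t^\ast$ the tracked variable $(v',z')$ is frozen at $(v, z_{uv})$, so the running min and max for $x^{bc,b,v'}$ accumulated after $t^\ast$ are never discarded and correspond coordinatewise to $\min_{t \ge t^\ast}$ and $\max_{t \ge t^\ast}$ of the corresponding coordinates of $\freq \sigma^{(t)}$. Splitting $\sigma = \sigma_1 \cdot \sigma_2$ at $t^\ast$, the hypothesis says $\zeta(\freq(\sigma_1\cdot\sigma_2^{(t)}))$ differs from $\zeta(\freq\sigma)$ at some $t$ in some coordinate of $x^{bc,b,v}$; the last clause of Lemma~\ref{lem:2mdecoding} then yields an index $i$ for which one of its two $M$-deviation conditions holds. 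That deviation is exactly what the test in step 3(d) detects, so the algorithm passes to step 3(e), and parts (1)--(2) of Lemma~\ref{lem:2mdecoding} give $\eta(x^{bc,b,v})_i$ correctly.

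To close the argument I would unpack the inner encoding: player $bc$ holds the unique triangle edge through $v$, which is $(v,w,z_{vw})$, so the inner code stored at $x^{bc,b,v}$ is $f_c(w)^{(bin)}\oplus z_{vw}^B$ under the $\{0,1\}\to\{-M,M\}$ remap. Applying $\eta$ undoes the remap and gives $\eta(x^{bc,b,v})_i = f_c(w)^{(bin)}_i \oplus z_{vw}$; solving for $z_{vw}$ recovers exactly step 3(e)'s expression, and the promise of $\promise(n)$ delivers $\tau = z_{uv}\oplus z_{vw}\oplus z_{uw}$ in step 3(f). The one nontrivial piece of bookkeeping is the alignment in the second paragraph: the ``discard whenever $(v',z')$ changes'' rule must be matched with the split point $t^\ast$ in Lemma~\ref{lem:2mdecoding} so that the stored running min/max correspond to exactly the postfix to which the lemma is applied; everything else is a direct translation of the proof of Lemma~\ref{lem:weak01correct}.
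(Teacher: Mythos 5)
Your proposal is correct and follows essentially the same route as the paper's proof: track the two blocks at $u$ exactly, use Lemma~\ref{lem:2mdecoding} (both its two one-sided conditions and its final iff clause) to show that the deviation test in step 3(d) fires exactly when $\zeta(x^{bc,b,v})$ still changes after the tracking of it is finalized, and then unwind the inner encoding and the promise to get $\tau$. Your version is if anything slightly more explicit than the paper's about aligning the split point of Lemma~\ref{lem:2mdecoding} with the discard rule for $(v',z')$, but the argument is the same.
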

\begin{proof}
Note that $x^{ab,a,u}$ and $x^{ac, a, u}$ are tracked completely, so their
final decodings into $\Sigma$ are correct.  If $u$ is not part of a triangle,
at most one edge is incident to $u$ in the full graph $G$, so at least one of
the decodings is $\perp$ and the algorithm returns $\perp$.

Otherwise, if $u$ is part of a triangle, the algorithm correctly deduces $(v,
z_{uv})$ and $(w, z_{uw})$. If the last time $\zeta(x^{ab,a,u})$ differs from
its final value is \emph{after} the last time $\zeta(x^{bc,b,v})$ differs
from its final value, then at the time the algorithm starts tracking
$x^{bc,b,v}$, $\zeta(x^{bc,b,v})$ has already its final value, and so by
Lemma~\ref{lem:2mdecoding}, the final observed value for $x^{bc,b,v}$ is
within $M -1$ of all the values observed for it, and so the algorithm outputs
$\perp$. Otherwise, by Lemma~\ref{lem:2mdecoding}, the algorithm correctly
identifies $\eta(x^{bc,b,v})_i$.

Since $uvw$ is a triangle, we know player
$bc$ has the input $(v, w, z_{vw})$ for some $vw$, and we know
\[
\eta(x^{bc,b,v})_i = z_{vw} \oplus f_c(w)^{(bin)}_i.
\]
Thus the algorithm correctly identifies $z_{vw}$, and the
$\promise(n)$ promise says
\[ 
\tau = z_{uv} \oplus z_{vw} \oplus z_{uw}.
\]
Hence the algorithm outputs either $\perp$ or $\tau$, and the last time
$\zeta(x^{ab,a,u})$ differs from its final value is before the last time
$\zeta(x^{bc,b,v})$ differs from its final value, then the algorithm outputs
$\tau$.
\end{proof}

\begin{lemma}\label{lem:weak2mchance}
Algorithm~\ref{alg:weak2m} outputs $\tau$ with at least
$\frac{1}{180}$ probability.
\end{lemma}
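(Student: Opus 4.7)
The plan is to mirror the proof of Lemma~\ref{lem:weak01chance}, substituting the correctness criterion of Lemma~\ref{lem:weak2mcorrect} (which concerns the last time that $\zeta$ of a block differs from its final value) for the ``last stream update'' criterion used in the $\{0,1\}$ case. All structural aspects of that earlier argument carry over unchanged.

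First, I would observe that although $(a,b,c)$ is a uniformly random labeling of $\Vt$ and $u$ is then uniform in $V_a$, marginally $u$ is uniform over all $3N$ vertices. Since $n$ vertices on each side lie in a triangle, $u$ lies in some triangle $u'v'w'$ of $G$ with probability $n/N = 1/30$, and both the existence and identity of this triangle are determined by $u$ alone, independently of how the remaining two sides are labeled as $b$ and $c$.

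Next, conditional on $u$ lying in the triangle $u'v'w'$ with $u' \in V_{a'}$, $v' \in V_{b'}$, $w' \in V_{c'}$, consider the three ``triangle blocks'' $x^{a'b',a',u'}$, $x^{b'c',b',v'}$, $x^{c'a',c',w'}$. Exactly one of these three blocks is the \emph{first to stabilize}, meaning that the last time its $\zeta$-value differs from its final value occurs earliest among the three. Without loss of generality this is the $a'$-block, corresponding to $u'$. In that case, Lemma~\ref{lem:weak2mcorrect} guarantees that if the algorithm's labeling $(a,b,c)$ matches the triangle's natural labeling $(a',b',c')$, the output is $\tau$. Under the uniform distribution on labelings of $\Vt$, this matching event has probability $1/6$, and combining with the $1/30$ chance that $u$ lies in a triangle yields the bound $P[\text{output } \tau] \ge 1/180$.

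I expect no substantive obstacle: the proof is a direct transcription of the $\{0,1\}$ case, with Lemma~\ref{lem:weak2mcorrect} playing the role of Lemma~\ref{lem:weak01correct} and the $\zeta$-stabilization time playing the role of the ``last update'' time. The only point to be careful about is that Lemma~\ref{lem:weak2mcorrect} only needs the $u$-block to stabilize before the $v$-block (not before all three), so requiring ``first-to-stabilize among all three triangle blocks'' is a sufficient, though possibly loose, event; this is exactly what underlies the (somewhat slack) $1/180$ bound, and no tighter analysis is needed.
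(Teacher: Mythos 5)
Your proposal is correct and follows essentially the same argument as the paper: a $1/30$ chance that $u$ lies in a triangle (independent of the labeling), times a $1/6$ chance that the sampled vertex and labeling match the first-to-$\zeta$-stabilize block of that triangle, invoking Lemma~\ref{lem:weak2mcorrect} exactly as the paper does. No substantive differences.
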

\begin{proof}
There is a $n/N = 1/30$ chance that $u$ lies in a triangle, independent of
the choice of $(a, b, c)$. Furthermore, if it does, \emph{which} triangle it
lies in is independent of the choice of $(a, b, c)$.

Suppose $u$ lies in the triangle $uvw$ with $u \in V_{a'}, v \in V_{b'}, w \in
V_{c'}$. WLOG, let $\zeta(x^{a'b', a', u})$ stop changing before
$\zeta(x^{b'c', b', v})$ or $\zeta(x^{c'a', c', w})$.

Then Lemma~\ref{lem:weak2mcorrect} says that if $(a, b, c) = (a', b', c')$,
Algorithm~\ref{alg:weak2m} will output $\tau$.  This choice happens with $1/6$
probability; combined with the $1/30$ chance that $u$ lies in a triangle, we
get at least a $1/180$ chance of outputting $\tau$.
\end{proof}

\begin{lemma}
There is a turnstile streaming algorithm that solves $P_n$ on $\Gamma_{2M - 1}$ 
with probability $2/3$ using $O(\log n \log M)$ bits of space.
\end{lemma}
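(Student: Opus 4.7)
The plan is to directly mirror the binary-case proof by running Algorithm~\ref{alg:weak2m} in parallel many times with independent randomness and returning any non-$\perp$ output. By Lemma~\ref{lem:weak2mcorrect}, every single instance either outputs $\perp$ or the correct answer $\tau$, so once at least one instance produces something other than $\perp$ we are guaranteed correctness; thus the only way the full algorithm can fail is if \emph{all} parallel copies output $\perp$.

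To make the failure probability small, I would take $360$ independent parallel copies. By Lemma~\ref{lem:weak2mchance}, each copy independently outputs $\tau$ with probability at least $\tfrac{1}{180}$, so the probability that all $360$ copies output $\perp$ is at most $(1 - 1/180)^{360} < e^{-2} < 1/3$. Hence the combined algorithm outputs $\tau$ with probability at least $2/3$ on every valid encoding, and on inputs that are not encodings of $\promise(n)$ instances any output is accepted by $P_n$.

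For the space bound, each of the $360$ copies uses $O(\log n \log M)$ bits by the space lemma for Algorithm~\ref{alg:weak2m}, and since $360$ is a constant the total space remains $O(\log n \log M)$. The randomness for each copy (a labeling of $V^\Delta$ and a uniformly random $u \in V_a$) fits in $O(\log n)$ bits and is absorbed into the same bound. There is no real obstacle here: the correctness structure (one-sided error in favor of $\perp$) was already set up in Lemma~\ref{lem:weak2mcorrect}, so the amplification step is purely mechanical and identical in form to the $\Gamma_{0,1}$ case.
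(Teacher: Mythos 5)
Your proposal is correct and is essentially identical to the paper's own proof: run $360$ independent copies of Algorithm~\ref{alg:weak2m}, output any non-$\perp$ result, invoke Lemma~\ref{lem:weak2mcorrect} for one-sided correctness and Lemma~\ref{lem:weak2mchance} to bound the all-$\perp$ probability by $(1-1/180)^{360} < 1/e^2 < 1/3$. The constant-factor space accounting is also the same.
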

\begin{proof}
Run Algorithm~\ref{alg:weak2m} in parallel $360$ times and output any
non-$\perp$ result.  By Lemma~\ref{lem:weak2mcorrect} any non-$\perp$ result
will be correct.  By Lemma~\ref{lem:weak2mchance} the failure probability is at
most $(1 - 1/180)^{360} < 1/e^2 < 1/3$.
\end{proof}

\section{Restricted Intermediate State Triangle Counting}
\subsection{Problem}
Valid inputs to our problem will be as follows (for invalid inputs, any output
is accepted): $x$ will be a binary string indexed by $E(K_n)$, the set of all
possible edges on an $n$-vertex graph. We will associate it with a graph $G$ on
$n$ vertices with edge set $\lbrace e \in E(K_n) : x_e = 1\rbrace$. We will use
$m$ to denote the size of this edge set. Finally, $G$ has max degree $d$.

Instead of bounding the length of the stream, we will require that $x^{(t)}$ 
correspond to a graph $G$ with max degree $d$ for \emph{all} $t$. One
consequence of this is that all updates will be in $\lbrack -1, 1\rbrack$.

Our problem will be to estimate $T$, the number of triangles in the graph, up
to some multiplicative precision $\varepsilon$. Our algorithm will succeed in
doing this if the space allocated to it is large enough in terms of $T$. This
space requirement is decreasing in $T$, so we may express this as a data stream
problem in the sense of Definition~\ref{dfn:streamproblem} by choosing a lower
bound $T'$ and making any answer acceptable for an input vector $x$ that does
not correspond to a valid input or results in $T < T'$, and making all outputs
in $\lbrack (1 - \varepsilon)T, (1 + \varepsilon)T\rbrack$ acceptable for input
vectors that correspond to a valid graph with $T \ge T'$.

\subsection{Linear Sketching Lower Bound}
By Theorem~7 of~\cite{KKP18}, any sketching algorithm for this problem requires
$\Omega(m/T^{1/3})$ bits. The requirement that $d$ be constant does not affect
this, as the \cite{KKP18} reduction is on graphs of max degree 2. Neither does
the intermediate state requirement, as the output of a sketching algorithm
depends only on the final state of the stream.

\subsection{Algorithm}
\begin{enumerate}
\item Initialize our set of seed edges $S = \emptyset$. Let $h : E \rightarrow
\lbrace 0, 1\rbrace$ be a threewise independent hash function where $h(e) = 1$
with probability $p$.
\item While passing through the stream:
\begin{enumerate}
\item On receiving an update $(e, +1)$:
\begin{itemize}
\item If $h(e) = 1$ and $|S| \le 2pm$, add $e$ to $S$, and initialize $S_e$ as
$\emptyset$.
\item If $\exists f \in S$ such that $e$ is incident to $f$, add $e$ to $S_f$.
\end{itemize}
\item On receiving an update $(e, -1)$:
\begin{itemize}
\item Remove it from any of $S$ and the sets $S_f$ that contain it. 
\item Delete the set $S_e$ if it exists.
\end{itemize}
\end{enumerate}
\item For each $e = uv$, set \[
\wt{T}_e = \begin{cases}
p^{-1}|\lbrace w : uw,vw \in S_e\rbrace| & \mbox{if $e \in S$}\\
0 &\mbox{otherwise.}
\end{cases}
\]
\item Return $\wt{T} = \sum_e \wt{T}_e$.
\end{enumerate}

\subsection{Space Complexity}
\begin{lemma}
This algorithm requires $O(pdm \log n)$ bits of space.
\end{lemma}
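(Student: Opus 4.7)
The plan is to bound the space used by each of the three components of the algorithm's stored state separately: the hash function $h$, the set of seed edges $S$, and the collection $\{S_e\}_{e \in S}$ of neighborhoods of seed edges.

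First, since $h$ is only required to be 3-wise independent over the edge set, it admits a standard construction with an $O(\log n)$-bit description, which will be absorbed by the other terms.

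Next, for $S$ itself, I would appeal directly to the hard cap enforced by the algorithm: the update rule only inserts a new edge into $S$ when $|S| \le 2pm$, so $|S|$ never exceeds $2pm + 1 = O(pm)$ at any point in the stream (deletions can only shrink $S$). Each edge is named by an element of $E(K_n)$, costing $O(\log n)$ bits, so $S$ takes $O(pm \log n)$ bits total.

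The interesting step is bounding the total size of the $S_e$'s, and here the plan is to invoke the bounded-degree intermediate-state promise. By construction, $S_e$ only ever contains edges that are currently incident to $e$ in the graph: edges are added to $S_e$ on insertion when they are incident to $e$, and the deletion step explicitly removes any deleted edge from every $S_f$ that contains it. Since every intermediate state is a graph of max degree $d$, any edge $e = uv$ has at most $2(d-1)$ edges incident to it in the current graph, so $|S_e| = O(d)$ at all times. Each such $S_e$ therefore uses $O(d \log n)$ bits, and summing over the at most $2pm$ seed edges gives $O(pdm \log n)$ bits.

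Combining the three contributions gives the claimed $O(pdm \log n)$ bound. There is no real obstacle here; the only point worth checking carefully is the invariant that every edge in $S_e$ is currently present in the graph, so that the degree bound on intermediate states actually applies to $|S_e|$. This is immediate from inspection of the deletion branch of the update rule.
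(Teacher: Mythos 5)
Your proof is correct and follows essentially the same route as the paper: cap $|S|$ at $O(pm)$ by the explicit size check, and bound each $|S_e|$ by $O(d)$ using the bounded-degree intermediate-state promise together with the fact that deleted edges are removed from every $S_f$, giving $O(pm)\cdot O(d)\cdot O(\log n)$ bits. The paper's proof is just a terser version of the same argument (it states the per-seed bound as $2d-1$ edges), so there is nothing further to add.
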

\begin{proof}
The set $S$ has size at most $2pm$ at any point in time, and for each element
$e$ in $S$ at most $2d - 1$ edges are kept (as each endpoint of $e$ has degree
at most $d$ at all times), and each edge takes $O(\log n)$ bits of space to store.
\end{proof}

\subsection{Correctness}
\begin{definition}
$G^{(t)}$ and $S^{(t)}$ denote the state of $G$ and $S$
respectively after the first $t$ updates, so that $G^{(L)} = G$ and $S^{(L)} =
S$.
\end{definition}

\begin{definition}
For any edge $e \in G$, let $t_e$ denote the time of the last update made to
$e$. For any triangle $\tau \in G$, let $\rho(\tau)$ denote the edge $e \in
\tau$ that minimizes $t_e$. Then: \[
T_e = |\lbrace \tau : \rho(\tau) = e\rbrace|
\]
\end{definition}

Note that as each triangle $\tau$ has exactly one $e$ such that $\rho(\tau) =
e$, $T = \sum_e T_e$.

\begin{definition}
Let $Q^{(t)} =  \lbrace e \in E(G^{(t)}) : h(e) = 1\rbrace$, 
$Q = Q^{(L)}$, and $Q_e = \lbrace f\text{ incident to }e : t_f > t_e\rbrace$. Then:
\begin{align*}
\wt{T}^{+}_e &= \begin{cases}
p^{-1}|\lbrace w : uw,vw \in Q_e\rbrace|  &\mbox{if $e \in Q$}\\
0 &\mbox{otherwise.}
\end{cases}\\
\wt{T}^{+} &= \sum_e \wt{T}^{+}_e
\end{align*}
\end{definition}
\begin{lemma}
\label{lm:mtdist}
\begin{align*}
\E{\wt{T}^{+}} &= T\\
\var(\wt{T}^{+}) &\le p^{-1}dT
\end{align*}
\end{lemma}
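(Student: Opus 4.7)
The plan is to rewrite $\wt{T}^{+}$ as a sum over edges of $G$ of independent-looking indicator random variables scaled by the deterministic quantities $T_e$, and then compute the first two moments directly.

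First I would observe the key identity
\[
\wt{T}^{+}_e \;=\; p^{-1}\,\mathbbm{1}[h(e)=1]\cdot T_e \qquad \text{for every } e \in E(G),
\]
and $\wt{T}^{+}_e = 0$ otherwise. The reason is that, by definition, $w$ contributes to $|\{w : uw,vw \in Q_e\}|$ exactly when $uvw$ is a triangle of $G$ in which both $uw$ and $vw$ are last-updated strictly after $e = uv$; equivalently, $\rho(uvw)=e$. Summing over $w$ therefore gives $T_e$, and the prefactor $p^{-1}\mathbbm{1}[h(e)=1]$ comes from the condition $e\in Q$. The quantities $T_e$ depend only on the stream, not on $h$, so all randomness is carried by the indicators $\mathbbm{1}[h(e)=1]$.

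For the expectation, linearity and $\E{\mathbbm{1}[h(e)=1]}=p$ immediately give
\[
\E{\wt{T}^{+}} \;=\; \sum_{e\in E(G)} T_e \;=\; T,
\]
since each triangle has a unique $\rho(\tau)$. For the variance, pairwise independence of $h$ makes $\mathbbm{1}[h(e)=1]$ and $\mathbbm{1}[h(f)=1]$ uncorrelated for $e\neq f$, so
\[
\var(\wt{T}^{+}) \;=\; \sum_{e\in E(G)} \var\!\bigl(p^{-1}\mathbbm{1}[h(e)=1]\,T_e\bigr) \;\le\; p^{-2}\cdot p\sum_{e\in E(G)} T_e^{\,2} \;=\; p^{-1}\sum_{e} T_e^{\,2}.
\]
(Only pairwise independence is used here; the 3-wise independence of $h$ will be needed in a later lemma relating $\wt{T}$ to $\wt{T}^{+}$.)

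The last step, which I expect to be the only nontrivial one, is to bound $\sum_e T_e^{\,2}$ using the max-degree hypothesis. Each $T_e$ with $e=uv$ is at most the number of common neighbours of $u$ and $v$ in $G$, and since $G$ has max degree $d$ this is at most $d-1 \le d$. Hence
\[
\sum_{e} T_e^{\,2} \;\le\; d\sum_e T_e \;=\; dT,
\]
and combining gives $\var(\wt{T}^{+}) \le p^{-1}dT$, as claimed.
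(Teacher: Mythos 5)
Your proposal is correct and follows essentially the same route as the paper: identify $\wt{T}^{+}_e = p^{-1}\mathbbm{1}[h(e)=1]\,T_e$, use linearity for the mean, split the variance across edges via (pairwise) independence of $h$, and bound $T_e \le d$ from the degree bound to get $\sum_e T_e^2 \le dT$. Your remark that only pairwise independence is needed at this step (the paper invokes threewise independence, which of course implies it) is accurate but does not change the argument.
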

\begin{proof}
For each $e \in E(G)$, $\wt{T}^{+}_e = p^{-1}T_e$ if $h(e) = 1$ and $0$
otherwise. So
\begin{align*}
\E{\wt{T}^{+}_e} &= T_e\\
\var(\wt{T}^{+}_e) &\le T_e^2/p\\
&\le dT_e/p
\end{align*}
and as $h$ is threewise independent:
\begin{align*}
\E{\wt{T}^{+}} &= \sum_e T_e\\
&= T\\
\var(\wt{T}^{+})&= \sum_e \var(\wt{T}^{+}_e)\\
&\le dT/p\text{.}
\end{align*}
\end{proof}

\begin{lemma}
\label{lm:qeqs}
For any $e \in Q$, if $|S^{(t_e - 1)}| < 2pm$, $\wt{T}_e = \wt{T}^{+}_e$.
Otherwise, $\wt{T}_e = 0$.
\end{lemma}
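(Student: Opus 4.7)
The plan is to directly unpack the algorithm's bookkeeping at the moment $e$ is updated for the last time. Since $e \in Q$ by hypothesis, $h(e)=1$ and $e \in G$, so the update at time $t_e$ is an insertion $(e,+1)$. Because every intermediate state lies in $\{0,1\}^{E(K_n)}$, we have $x_e = 0$ immediately before $t_e$; hence any earlier insertion of $e$ was followed by a deletion, which by the algorithm's rules removes $e$ from $S$ and destroys $S_e$. So regardless of earlier history, $e \notin S$ just before time $t_e$, and the fate of $e$ is decided solely by what happens at $t_e$.

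For the case $|S^{(t_e-1)}| < 2pm$, the insertion guard $h(e)=1$ and $|S| \le 2pm$ is satisfied at time $t_e$, so $e$ is added to $S$ and $S_e$ is freshly initialized to $\emptyset$. Since $t_e$ is the last update to $e$, it is never removed again and $S_e$ persists to the end. An edge $f$ incident to $e$ is then added to $S_e$ on each $(f,+1)$ update strictly after $t_e$ and removed on each $(f,-1)$; therefore $f \in S_e$ at the end iff $f$'s last update is after $t_e$ and is $+1$, equivalently $f \in G$ and $t_f > t_e$, which is exactly the condition defining $Q_e$. So $S_e = Q_e$ at the end of the stream, and substituting into the definitions yields $\wt{T}_e = p^{-1}|\{w : uw, vw \in S_e\}| = p^{-1}|\{w : uw, vw \in Q_e\}| = \wt{T}^{+}_e$.

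For the case $|S^{(t_e-1)}| \ge 2pm$, the guard fails at time $t_e$, so $e$ is not added to $S$ on this final insertion. Combined with the preliminary observation that $e \notin S$ just before $t_e$, this gives $e \notin S$ at the end of the stream, and the case split in the definition of $\wt{T}_e$ immediately yields $\wt{T}_e = 0$. The point to be most careful about is the bookkeeping around re-insertions of $e$: one must use the intermediate-state invariant $x_e \in \{0,1\}$ to guarantee that every earlier addition of $e$ to $S$ is cancelled by a later deletion before time $t_e$, so the only addition that can matter is the one at $t_e$ itself. Modulo a minor strict-versus-nonstrict convention at the boundary $|S| = 2pm$, which I would reconcile with the inequality in the lemma statement, everything else follows by direct tracking of the update rule.
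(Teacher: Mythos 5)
Your proof is correct and follows essentially the same route as the paper's: use the $\{0,1\}$ intermediate-state invariant to argue that any earlier membership of $e$ in $S$ is cancelled by a deletion before $t_e$, so only the insertion attempt at time $t_e$ matters, and then identify the surviving $S_e$ with $Q_e$ by tracking post-$t_e$ updates. You spell out the bookkeeping in more detail than the paper does (including the $\le 2pm$ vs.\ $< 2pm$ boundary convention, which is indeed a minor inconsistency in the paper's own algorithm description), but the argument is the same.
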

\begin{proof}
If $e \in Q$, it will be in $S$ unless $S$ is size $2pm$ at the final time it
would be added (if it is added earlier, it will be deleted before time $t_e$,
so only the size of $S^{(t_e)}$ matters). Furthermore, if it is added, the
edges in $S_e$ will be precisely those edges of $G$ that have their final
update after $S_e$ is created for the last time, that is, after $t_e$. So
if $|S^{(t_e - 1)}| < 2pm$, $\wt{T}_e = \wt{T}^{+}_e$.

On the other hand, if  $|S^{(t_e - 1)}| = 2pm$, then $e \not\in S^{(t_e - 1)}$,
as it will have been deleted since the last time it might have been added,
$e \not\in S^{(t_e)}$, as it will not be added, and so $e \not\in S$, as there
are no more updates to $e$.
\end{proof}

\begin{lemma}
\label{lm:sbigprob}
For all $e \in E(G)$: \[
\Pb{|S^{(t_e - 1)}| = 2pm \middle| h(e) = 1} \le 1/pm
\]
\end{lemma}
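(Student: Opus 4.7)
The plan is to dominate $|S^{(t_e-1)}|$ by the cleaner quantity $|Q^{(t_e-1)}|$ and then apply Chebyshev's inequality, leveraging the pairwise independence delivered by the threewise independence of $h$.

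First I would record two structural observations. Every edge sitting in $S$ at a given time is currently present in the graph and has $h$-value $1$, so $|S^{(t)}| \le |Q^{(t)}|$ at every time $t$; in particular the event $\{|S^{(t_e-1)}| = 2pm\}$ is contained in $\{|Q^{(t_e-1)}| \ge 2pm\}$. Second, because $t_e$ is the last update to $e$ and $e \in G$, the update at time $t_e$ must be the insertion $(e, +1)$, which is legal only if $e \notin E(G^{(t_e-1)})$. Consequently $|Q^{(t_e-1)}|$ is a sum of indicators $\mathbb{1}[h(f)=1]$ over edges $f \in E(G^{(t_e-1)})$, none of which equal $e$.

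Next I would invoke the threewise independence of $h$: for any three distinct edges the corresponding $h$-values are jointly independent, so conditioning on $h(e)=1$ preserves the joint distribution of $\{h(f) : f \ne e\}$. Writing $N = |E(G^{(t_e-1)})|$, this gives conditional mean $pN$ and, by pairwise independence of the summands, conditional variance at most $pN$. Taking $N \le m$ (which is enforced by interpreting $m$ as the algorithm's upper bound on the intermediate edge count, itself bounded by $nd/2$ via the max-degree promise), Chebyshev's inequality yields
\[
\Pb{|Q^{(t_e-1)}| \ge 2pm \,\middle|\, h(e)=1} \le \frac{pN}{(2pm - pN)^2} \le \frac{pm}{(pm)^2} = \frac{1}{pm},
\]
which combined with the first observation proves the lemma.

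The main point is that this is genuinely a second-moment argument: a Markov bound alone would give only $N/(2m)$, too loose by a factor of order $pm$, while threewise independence supplies exactly the variance control that Chebyshev needs. I do not expect additional obstacles, since the conditioning on $h(e)=1$ becomes vacuous once the second structural observation ensures $h(e)$ is jointly independent of every term in the sum.
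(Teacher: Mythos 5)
Your proposal is correct and follows essentially the same route as the paper: bound $|S^{(t_e-1)}|$ by $|Q^{(t_e-1)}|$, use the threewise independence of $h$ to get conditional mean at most $pm$ and conditional variance at most $pm$, and apply Chebyshev. Your explicit observation that $e \notin E(G^{(t_e-1)})$ (so conditioning on $h(e)=1$ leaves the sum's distribution intact) is a detail the paper leaves implicit, but it is the same argument.
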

\begin{proof}
By the intermediate state condition on $G^{(t_e - 1)}$, it has at most $m$ edges.
Then as $S^{(t_e - 1)} \subseteq Q^{(t_e - 1)}$, and as $h$ is threewise independent and
$h(e) = 1$ with probability $p$,
\begin{align*}
\E{|Q^{(t_e - 1)}| \middle| h(e) = 1} &\le pm\\
\var\left(|Q^{(t_e - 1)}| \middle| h(e) = 1\right) &\le (p - p^2)m
\end{align*}
\noindent
so by Chebyshev's inequality: 
\begin{align*}
\Pb{|S^{(t_e - 1)}| = 2pm\middle|h(e) = 1} &\le \Pb{|Q^{(t_e - 1)}| \ge 2pm\middle|
h(e) = 1}|\\
&\le 1/pm
\end{align*}
\end{proof}
\begin{lemma}
\label{lm:wtctdif}
\[
\E{|\wt{T} - \wt{T}^{+}|} \le T/pm
\]
\end{lemma}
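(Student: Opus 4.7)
The plan is to exploit Lemma~\ref{lm:qeqs} to localize the discrepancy: $\wt{T}_e$ and $\wt{T}^{+}_e$ agree for every edge $e$ unless $e \in Q$ and the cap $|S^{(t_e-1)}| = 2pm$ is hit, in which case $\wt{T}_e = 0$ while $\wt{T}^{+}_e = p^{-1} T_e$. So the per-edge discrepancy is bounded by $\wt{T}^{+}_e$ times the indicator of the ``cap hit'' event, and a union bound over $e$ together with Lemma~\ref{lm:sbigprob} will do the job.

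Concretely, first I would write
\[
  |\wt{T} - \wt{T}^{+}| \le \sum_{e \in E(G)} |\wt{T}_e - \wt{T}^{+}_e| \le \sum_{e \in E(G)} \wt{T}^{+}_e \cdot \mathbbm{1}\!\left[|S^{(t_e - 1)}| = 2pm\right],
\]
where the second inequality uses Lemma~\ref{lm:qeqs}: if the cap is not hit at time $t_e-1$, the summand is zero; if the cap is hit, $\wt{T}_e = 0$ and the discrepancy equals $\wt{T}^{+}_e$.

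Next, taking expectations term by term and using $\wt{T}^{+}_e = p^{-1} T_e \cdot \mathbbm{1}[h(e) = 1]$ (which is a deterministic multiple of $\mathbbm{1}[h(e)=1]$ once $G$ is fixed), I would condition on $h(e) = 1$:
\[
  \E{\wt{T}^{+}_e \cdot \mathbbm{1}\!\left[|S^{(t_e - 1)}| = 2pm\right]} = p^{-1} T_e \cdot \Pb{h(e) = 1} \cdot \Pb{|S^{(t_e - 1)}| = 2pm \,\middle|\, h(e) = 1}.
\]
Applying Lemma~\ref{lm:sbigprob} bounds the conditional probability by $1/(pm)$, and $\Pb{h(e)=1} = p$ cancels the $p^{-1}$, leaving $T_e/(pm)$. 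Summing over $e \in E(G)$ and using $\sum_e T_e = T$ gives the claimed $T/(pm)$.

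The only subtlety is the independence step: I need that, conditioned on $h(e) = 1$, the bound in Lemma~\ref{lm:sbigprob} really applies, which it does since that lemma was stated precisely as a conditional probability. No other obstacle arises—everything else is bookkeeping.
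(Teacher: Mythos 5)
Your proposal is correct and follows essentially the same route as the paper: both localize the discrepancy via Lemma~\ref{lm:qeqs} to the event that $h(e)=1$ and the cap $|S^{(t_e-1)}|=2pm$ is hit, bound the per-edge contribution by $p^{-1}T_e$ times the probability of that joint event using the conditional bound of Lemma~\ref{lm:sbigprob}, and sum using $\sum_e T_e = T$. The bookkeeping (cancelling $p^{-1}$ against $\Pb{h(e)=1}=p$) matches the paper's computation exactly.
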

\begin{proof}
By Lemma \ref{lm:qeqs}, $|\wt{T}_e - \wt{T}^{+}_e| = p^{-1}T_e$ if $h(e) = 1$
and $|S^{(t_e - 1)}| = 2pm$, and $0$ otherwise. So, Lemma \ref{lm:sbigprob}:
\begin{align*}
\E{|\wt{T} - \wt{T}^{+}|} &\le \sum_e \E{|\wt{T}_e - \wt{T}^{+}_e|}\\
&\le \sum_e p^{-1}T_e \Pb{ |S^{(t_e - 1)}| = 2pm \wedge h(e) = 1}\\
&\le \sum_e (T_e/p^2m)\Pb{h(e) = 1}\\
&= T/pm
\end{align*}
\end{proof}

\restate{thm:maxd}
\begin{proof}
Let the algorithm be run with $p = 32d/\varepsilon^2 T$.
Then by Lemma \ref{lm:wtctdif}, 
\begin{align*}
\E{|\wt{T} - \wt{T}^{+}|} &\le T^2/32dm\\
&\le T/32 &\mbox{as $T \le dm$.}
\end{align*}
\noindent
Therefore, by Markov's inequality: \[
\Pb{|\wt{T} - \wt{T}^{+}| \ge \varepsilon T/2} \le 1/16
\]
Then, by Lemma \ref{lm:mtdist}, 
\begin{align*}
\E{\wt{T}^{+}} &= T\\
\var(\wt{T}^{+}) &\le T^2/8
\end{align*}
\noindent
and so by Chebyshev's inequality, \[
\Pb{|\wt{T}^{+} - T| \ge \varepsilon T/2} \le 1/4
\]
so: \[
\Pb{|\wt{T} - T| \ge \varepsilon T} \le 5/16
\]
Therefore, by running $O(\log 1/\delta)$ copies of the algorithm in parallel
and taking the median, we can output a $(1 \pm \varepsilon)$ multiplicative
approximation to $T$ with probability $1 - \delta$.
\end{proof}
\section{Bounded-Length Triangle Counting}
\subsection{Problem}
We will work in the \emph{strict} turnstile model, so our input vector $x =
\freq \sigma^{(L)}$ is non-negative at all intermediate steps.

Valid inputs to our problem will be as follows (for invalid inputs, any output
is accepted): $x$ will be indexed by $E(K_n)$, the set of all possible edges on
an $n$-vertex graph. We will associate it with a graph $G$ on $n$ vertices with
edge set $\lbrace e \in E(K_n) : x_e = 1\rbrace$.  $x$ is binary, but its
intermediate states may not be.  We will use $m$ to denote the size of this
edge set. Finally, $G$ has max degree $d$.

Our problem will be to estimate $T$, the number of triangles in the graph, up
to some multiplicative precision $\varepsilon$. Our algorithm will succeed in
doing this if the space allocated to it is large enough in terms of $T$. This
space requirement is decreasing in $T$, so we may express this as a data stream
problem in the sense of Definition~\ref{dfn:streamproblem} by choosing a lower
bound $T'$ and making any answer acceptable for an input vector $x$ that does
not correspond to a valid input or results in $T < T'$, and making all outputs
in $\lbrack (1 - \varepsilon)T, (1 + \varepsilon)T\rbrack$ acceptable for input
vectors that correspond to a valid graph with $T \ge T'$.
 
\subsection{Linear Sketching Lower Bound}
By Theorem~7 of~\cite{KKP18}, any sketching algorithm for this problem requires
$\Omega(m/T^{1/3})$ bits. The requirement that $d$ be constant does not affect
this, as the \cite{KKP18} reduction is on graphs of max degree 2, and neither
do the stream length and strict turnstile requirements, as they will not affect
the output of any linear sketch.

\subsection{Algorithm}
\begin{enumerate}
\item Initialize our set of seed edges $S = \emptyset$. Let $h : E \rightarrow
\lbrace 0, 1\rbrace$ be a pairwise independent hash function where $h(e) = 1$
with probability $p$.
\item While passing through the stream, on receiving an update $(e, \chi)$:
\begin{itemize}
\item If $h(e) = 1$, and there is no tuple $(e,\gamma) \in S$, add $(e, \chi)$ to $S$.
\item If $h(e) = 1$, and $(e,\gamma) \in S$, replace it with $(e,\chi + \gamma)$.
\item If $(e,\chi)$ has been added to $S$ for some $\chi > 0$, initialize the set
$S_e = \emptyset$.
\item If $(e,0)$ is now in $S$, delete $S_e$.
\item Then, for each $f$ incident to $e$ such that $(f,z) \in S$:
\begin{itemize}
\item If $(e,\gamma) \in S_f$, replace it with $(e, \max(\chi + \gamma,0))$.
\item Otherwise, insert $(e,\max(\chi,0))$ into $S_f$, unless $|S_f| \ge
\frac{2d^2L}{\varepsilon T}$.
\end{itemize}
\end{itemize}
\item For each edge $e = uv$, set: \[
\wt{T}_e = \begin{cases}
p^{-1}|\lbrace w :  (uw, 1), (vw, 1) \in S_e\rbrace| & \mbox{If
$(e,1) \in S$.}\\
0 & \mbox{Otherwise.} \end{cases}
\]
\item Return $\wt{T} = \sum_e \wt{T}_e$.
\end{enumerate}

\subsection{Space Complexity}
\begin{lemma}
The expected space complexity of this algorithm is at most
$O\left(\frac{pd^2L^2}{\varepsilon T}  \log n\right)$ bits.
\end{lemma}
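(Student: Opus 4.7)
The plan is to bound the two components of the data structure separately: the seed set $S$, whose size I control via the hash function, and the neighborhood sets $S_e$, whose sizes are explicitly capped by the algorithm.

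First I would bound $\E[|S|]$. An edge $e$ can only appear in $S$ if (i) $h(e) = 1$, and (ii) $e$ has been the target of at least one update in the stream. Since the stream has length $L$, at most $L$ distinct edges can ever be touched. By linearity of expectation and the fact that $\Pb{h(e) = 1} = p$ for each $e$, we get
\[
\E[|S|] \leq \E\!\left[\sum_{e \text{ ever updated}} \mathbbm{1}[h(e)=1]\right] \leq pL.
\]
Note this argument uses only the stream-length bound $L$, and does not rely on the box constraint or the max-degree assumption on $G$.

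Next I would bound the space used per seed edge. For each $e \in S$, the set $S_e$ is capped at $\frac{2d^2L}{\varepsilon T}$ by construction (the algorithm refuses to insert into $S_f$ once it is that large). Each entry in $S$ or $S_e$ stores an edge (two vertex indices in $[n]$) together with an integer counter $\gamma$ or $\chi$. Under the strict turnstile promise, these counters are the intermediate frequencies $\freq \sigma^{(t)}_e$, which lie in $[0, L]$, so each entry uses $O(\log n + \log L) = O(\log n)$ bits assuming $L = \poly(n)$.

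Combining the two bounds, the expected total space is
\[
O(\log n) \cdot \E\!\left[|S| + \sum_{e \in S} |S_e|\right] \leq O(\log n) \cdot \E[|S|] \cdot \left(1 + \frac{2d^2L}{\varepsilon T}\right) \leq O\!\left(\frac{p d^2 L^2}{\varepsilon T} \log n\right),
\]
which is the claimed bound. There is no real obstacle to overcome: the cap on $|S_e|$ is hardcoded into the update rule, and the bound on $\E[|S|]$ is a one-line application of linearity of expectation using the stream-length hypothesis. The only subtlety worth flagging is that the bound on $|S|$ holds in expectation but not in the worst case; this is why the lemma is stated for \emph{expected} space, and why no tail bound on $|S|$ is needed for the analysis.
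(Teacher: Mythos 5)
Your proposal is correct and follows essentially the same route as the paper: bound $\E[|S|]$ by $pL$ (the paper phrases this as each stream edge being included independently with probability $p$; you make the linearity-of-expectation step over the at most $L$ distinct touched edges explicit), invoke the hardcoded cap $\frac{2d^2L}{\varepsilon T}$ on each $S_e$, and charge $O(\log n)$ bits per stored entry. The only difference is presentational care, e.g.\ your remark that the bound is on the maximum over time and holds only in expectation, which the paper states more tersely.
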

\begin{proof}
Each edge in the stream is independently included in $S$ with probability $p$,
so the expected maximum size of $S$ is at most $pL$. For each element of $S$ we
keep an integer of size $\text{poly}(n)$, requiring $O(\log n)$ bits, and a set
of size no more than $\frac{2d^2L}{\varepsilon T}$. The elements of these sets
are edges of an $n$-vertex graph, and integers of size $\text{poly}(n)$, and
therefore require $O(\log n)$ bits each to represent.
\end{proof}

\subsection{Correctness}
Consider some fixed (strict) turnstile stream of length $L$. Let $G$ be the
graph with vertex set $\lbrack n\rbrack$ and edge set $\lbrace e \in E : x_e =
1 \rbrace$, and let $T$ be the number of triangles in $G$. We will seek to show
that this algorithm can approximate $T$.
\begin{definition}
For any edge $e \in G$, let $t_e$ be the largest $t \in \lbrack L\rbrack$
such that: \begin{align*}
x_e^{(t-1)} &= 0\\ 
x_e^{(t)} &> 0
\end{align*}
For any triangle $\tau \in G$, let $\rho(\tau) \in \tau$ be the edge of $\tau$
that maximizes $t_{\rho(\tau)}$. Then, define: \[
T_e = |\lbrace \tau \in G : \rho(\tau) = e\rbrace|
\]
\end{definition}
\noindent
Note that as each triangle $\tau$ has exactly one edge $e$ such that
$\rho(\tau) = e$, $\sum_e T_e = T$.
\begin{definition}
For any edge $e \in G$ and $t \ge t_e$, $Q_e^{(t)}$ is the set generated by the
following procedure: 
\begin{itemize}
\item For $t' = t_e,\dots,t$, and $(f,\chi) = \sigma_{t'}$, if $f$ is incident to $e$:
\begin{itemize}
\item If $(f,\gamma) \in Q_e^{(t)}$, replace it with $(e,\max(\chi + \gamma, 0))$.
\item Otherwise, insert $(f,\max(\chi,0))$ into $Q_e^{(t)}$.
\end{itemize}
\end{itemize}
\end{definition}
\begin{lemma}
\label{lm:qsupset}
For any $e$ such that $h(e) = 1$,
\[
Q_e^{(L)} \supseteq S_e
\]
with equality when \[
|Q_e^{(L)}| \le \frac{2d^2L}{\varepsilon}\text{.}
\]
\end{lemma}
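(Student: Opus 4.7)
The plan is to prove the containment $S_e \subseteq Q_e^{(L)}$ by induction on the time $t$, tracking both sets through the stream starting at time $t_e$, and then to deduce the equality case from the monotonicity of the set sizes.

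First I would observe that both $S_e$ and $Q_e^{(t)}$ are initialized to $\emptyset$ at time $t_e$ and process the same sequence of updates afterwards, namely those $\sigma_{t'} = (f,\chi)$ with $t' \ge t_e$ and $f$ incident to $e$. Both obey identical modify and insert rules; the only difference is that $S_e$ refuses to insert a new entry once $|S_e|$ has reached the cap $\tfrac{2d^2L}{\varepsilon T}$. A key preliminary observation is that neither set ever deletes an entry once present (values may be reset to $0$ via the $\max(\cdot,0)$ clamp, but the pair remains), so $|S_e^{(t)}|$ and $|Q_e^{(t)}|$ are both nondecreasing in $t$. In particular, once $|S_e|$ first hits the cap it stays there, so no insertion into $S_e$ can ever succeed thereafter.

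For the containment I would induct on $t$, maintaining the invariant $S_e^{(t)} \subseteq Q_e^{(t)}$ as sets of $(f,\gamma)$ pairs. The base case $t=t_e$ is trivial. In the inductive step I split on whether the updated edge $f$ is tracked in each set. If $(f,\gamma)\in S_e^{(t)}$ then by hypothesis $(f,\gamma)\in Q_e^{(t)}$ too, and both sets apply the same replacement. If $f$ is in neither set, $Q_e$ inserts $(f,\max(\chi,0))$ and $S_e$ either does the same or, if capped, inserts nothing; in both outcomes the invariant is preserved. The only delicate case is when $f$ is in $Q_e^{(t)}$ but not $S_e^{(t)}$: this can only have happened because the cap was already hit at the first incident update to $f$ after $t_e$; by the monotonicity observation the cap is still hit, so $S_e$ declines to insert $f$ again, and the containment is maintained without introducing any spurious pair.

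For the equality claim I would argue that under the hypothesis on $|Q_e^{(L)}|$ the cap is never actually binding. By monotonicity of $|Q_e^{(t)}|$ and the containment just proved, $|S_e^{(t-1)}| \le |Q_e^{(t-1)}| \le |Q_e^{(L)}|$ at every step, so whenever $Q_e$ performs an insertion, $|S_e^{(t-1)}|$ is strictly below the threshold (matching the off-by-one between the algorithm's cap and the statement's bound), and the insertion into $S_e$ also fires. Thus every operation on $Q_e$ is mirrored by the corresponding operation on $S_e$, and $S_e = Q_e^{(L)}$. The main obstacle—and the step that motivates the whole structure of the argument—is the mismatch case in the induction, where naively $S_e$ might seem able to insert a stale fresh entry for an edge that $Q_e$ already tracks; monotonicity of $|S_e|$ is precisely the ingredient that rules this out and keeps the containment clean.
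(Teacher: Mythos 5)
Your proof is correct and follows essentially the same approach as the paper's (which simply observes that $S_e$ is last recreated at $t_e$, after which the two procedures coincide except that the cap may exclude some edges from $S_e$); your version just makes the induction and the monotonicity-of-the-cap argument explicit.
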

\begin{proof}
As $h(e) = 1$, $S_e$ will be deleted and recreated for the final time at $t_e$.
After this point, the procedures for creating $S_e$ and $Q_e^{(L)}$ are
identical as long as $|S_e|$ (and therefore $Q_e^{(L)}$) never reaches size
$\frac{2d^2L}{\varepsilon}$. If it does, the only difference is that some edges
may be excluded from $S_e$.
\end{proof}
\noindent
For any $(f,z)$ such that $f \in Q_e^{(t)}$ we will also write $f \in
Q_e^{(t)}$, and $Q_e^{(t)}\lbrack f\rbrack = z$. Note that $Q_e^{(r)}\lbrack
f\rbrack$ is well-defined whenever $f \in Q_e^{(t)}$ (as no edge is added to
$Q_e^{(t)}$ more than once) and $f \in Q_e^{(t)} \Rightarrow f \in
Q_e^{(t+1)}$ (as no edges are ever removed from $Q_e^{(r)}$.

\begin{lemma}
\label{lm:qeval}
For all edges $f$ incident to $e$ and integers $t \in \lbrack t_e, L\rbrack$,
\[
Q_e^{(t)}\lbrack f\rbrack = x_f^{(t)} - \min_{r = t_e, \dots, t} x_f^{(r)}
\]
\end{lemma}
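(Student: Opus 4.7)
The plan is to prove the identity by induction on $t$, exploiting the fact that the construction of $Q_e^{(t)}[f]$ is exactly the standard computation of how far a partial-sum sequence has risen above its running minimum, clipped at zero. In our case the ``partial-sum sequence'' is $x_f^{(r)}$ for $r = t_e,\dots,t$, and the update rule $(f,\gamma)\mapsto(f,\max(\chi+\gamma,0))$ is precisely what produces the difference between the current value and the running minimum.

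For the base case, I would take $t$ to be the first time $t_1 \ge t_e$ at which $\sigma_{t_1} = (f,\chi_{t_1})$ with $f$ incident to $e$. At that moment $Q_e^{(t_1)}[f]$ is set to $\max(\chi_{t_1},0)$. Since $f$ is untouched between $t_e$ and $t_1-1$, we have $x_f^{(r)} = x_f^{(t_1-1)}$ on that range, so $\min_{r\in[t_e,t_1]} x_f^{(r)} = \min(x_f^{(t_1-1)}, x_f^{(t_1)})$. A split on the sign of $\chi_{t_1} = x_f^{(t_1)} - x_f^{(t_1-1)}$ immediately gives $x_f^{(t_1)} - \min = \max(\chi_{t_1},0)$, matching $Q_e^{(t_1)}[f]$.

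For the inductive step, assume the identity holds at time $t-1$ and consider two subcases. If $\sigma_t = (f,\chi)$ updates $f$, then the rule yields
\[
Q_e^{(t)}[f] = \max\bigl(Q_e^{(t-1)}[f] + \chi,\,0\bigr) = \max\bigl(x_f^{(t)} - m,\,0\bigr),
\]
where $m = \min_{r\in[t_e,t-1]} x_f^{(r)}$ and we used the inductive hypothesis together with $x_f^{(t)} = x_f^{(t-1)} + \chi$. A short case analysis finishes it: if $x_f^{(t)} \ge m$ then $\min_{r\in[t_e,t]} x_f^{(r)} = m$ and the $\max$ is $x_f^{(t)} - m$; if $x_f^{(t)} < m$ then $\min_{r\in[t_e,t]} x_f^{(r)} = x_f^{(t)}$ and the $\max$ is $0$. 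If instead $\sigma_t$ does not touch $f$, then neither $Q_e^{(\cdot)}[f]$ nor $x_f^{(\cdot)}$ changes, and the minimum over $[t_e,t]$ agrees with that over $[t_e,t-1]$, so the identity is preserved.

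The main piece of bookkeeping is the convention for times $t$ strictly before the first update to $f$ in $[t_e,t]$, when $f \notin Q_e^{(t)}$ and $Q_e^{(t)}[f]$ is technically undefined. I would handle this by observing that on such a range $x_f$ is constant, so $x_f^{(t)} - \min_{r\in[t_e,t]} x_f^{(r)} = 0$, consistent with interpreting $Q_e^{(t)}[f] = 0$ when $f$ is absent; then the induction can start cleanly at the first update to $f$ and need not address undefined values. No serious obstacle is expected beyond this careful handling of ``$f$ not yet inserted,'' since the rest is a direct unwinding of the update rule.
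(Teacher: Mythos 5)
Your proposal is correct and follows essentially the same route as the paper's proof: induction on $t$, with the trivial case when $\sigma_t$ does not touch $f$, and a two-way case split on the sign of $Q_e^{(t-1)}[f]+\chi$ (equivalently, whether $x_f^{(t)}$ drops below the running minimum) in the update case. Your extra care about times before the first update to $f$ matches the paper's implicit convention that an absent entry counts as $0$ (the paper's base case $t=t_e$ relies on exactly this), so there is no substantive difference.
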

\begin{proof}
We proceed by induction on $t$. If $t = t_e$, as the update at time $t_e$ was
to $e$, $Q_e^{(t)}\lbrack f\rbrack = 0$ and so the result holds. Now suppose $t
> t_e$ and $Q_e^{(t-1)}\lbrack f\rbrack = x_f^{(t-1)} - \min_{r = t_e, \dots,
t-1} x_f^{(r)}$.

Then, let $\sigma_t = (f',\chi)$. If $f' \not= f$ both sides of the equation
are unchanged and we are done. So suppose the update is $(t,f,\chi)$.  We will
consider two cases.
\begin{description}
\item[$\mathbf{Q_e^{(t-1)}\lbrack f\rbrack + \chi \ge 0}$] Then $Q_e^{(t)} =
Q_e^{(t-1)}\lbrack f\rbrack + \chi$ and $x_f^{(t)} = x_f^{(t - 1)} + \chi$.
Furthermore, $\chi \ge - Q_e^{(t-1)}\lbrack f\rbrack$, so we have:
\begin{align*}
x_f^{(t)} &= x_f^{(t-1)} + \chi\\
&\ge x_f^{(t-1)} - Q_e^{(t-1)}\lbrack f\rbrack\\
&=  \min_{r = t_e, \dots, t-1} x_f^{(r)}
\end{align*}
\noindent
So $\min_{r = t_e, \dots, t} x_f^{(r)} = \min_{r = t_e, \dots, t-1} x_f^{(r)}$,
completing the proof.

\item [$\mathbf{Q_e^{(t-1)}\lbrack f\rbrack + \chi < 0}$] Then $Q_e^{(t)} = 0$, and:
\begin{align*}
x_f^{(t)} &= x_f^{(t-1)} + \chi\\
&< x_f^{(t-1)} - Q_e^{(t-1)}\\
&= \min_{r = t_e, \dots, t-1} x_f^{(r)}
\end{align*}
\noindent
So $\min_{r = t_e, \dots, t} x_f^{(r)} = x_f^{(t)}$, and so $x_f^{(t)} - \min_{r = t_e, \dots, t} x_f^{(r)} = 0$, completing the proof.
\end{description}
\end{proof}

\begin{definition}
For any vertex $x$, let the `stream degree' $l_v$ be the number of edges $e$
incident to $x$ such that there is some update $\sigma_t = (e, \chi)$,
regardless of whether $e$ is in the final graph $G$.
\end{definition}

\begin{lemma}
\label{lm:wttedist}
Let $e = uv$ be an edge. Then
\[
\wt{T}_e = \begin{cases}
\overline{T}_e/p & \mbox{with probability $p$}\\
0 & \mbox{otherwise.}
\end{cases}
\]
where $\overline{T}_e = T_e$ if $l_u + l_v \le \frac{2d^2L}{\varepsilon T}$,
and $\overline{T}_e \in \lbrack 0,  T_e\rbrack$ otherwise.
\end{lemma}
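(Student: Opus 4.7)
My plan is to condition on whether $h(e) = 1$. If $h(e) = 0$, then $e$ never enters $S$, so $(e,1) \notin S$ and $\wt T_e = 0$ by the algorithm's definition; this is the ``otherwise'' branch, occurring with probability $1 - p$. If $h(e) = 1$ (probability $p$, independent of the stream), then $(e,1) \in S$ at the end of the stream since $x_e^{(L)} = 1$, and so $\wt T_e = p^{-1}|W_e|$ where I write $W_e := \{w : (uw,1), (vw,1) \in S_e\}$. It remains to show $|W_e| \in [0, T_e]$ in general, with equality $|W_e| = T_e$ whenever $l_u + l_v \le 2d^2 L/(\varepsilon T)$.

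To prove $|W_e| \le T_e$, I chain Lemmas~\ref{lm:qsupset} and~\ref{lm:qeval}. For any $w \in W_e$, the containment $S_e \subseteq Q_e^{(L)}$ with matching values on the common support gives $Q_e^{(L)}[uw] = Q_e^{(L)}[vw] = 1$, and Lemma~\ref{lm:qeval} reads off $x_f^{(L)} = 1$ and $\min_{r \in [t_e, L]} x_f^{(r)} = 0$ for each $f \in \{uw, vw\}$. The first condition says $uvw$ is a triangle of $G$; the second forces the last $0$-to-positive transition $t_f$ of $f$ to satisfy $t_f > t_e$, because $x_f$ must equal zero at some $r \ge t_e$ before rising back to $1$ at time $L$. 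Hence every $w \in W_e$ corresponds to a distinct triangle $uvw$ contributing to $T_e$, giving $|W_e| \le T_e$.

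For the equality, I first note that $Q_e^{(L)}$ only contains edges incident to $u$ or $v$ that receive at least one update in $[t_e, L]$, so $|Q_e^{(L)}| \le l_u + l_v$. Under the hypothesis $l_u + l_v \le 2d^2 L/(\varepsilon T)$, the threshold in Lemma~\ref{lm:qsupset} is never exceeded, so $S_e = Q_e^{(L)}$ exactly. The reverse direction of the forward argument now works: for any triangle $uvw$ contributing to $T_e$ we have $t_{uw}, t_{vw} > t_e$, so $x_f^{(t_f - 1)} = 0$ with $t_f - 1 \ge t_e$ for both $f \in \{uw, vw\}$, yielding $\min_{r \in [t_e, L]} x_f^{(r)} = 0$; Lemma~\ref{lm:qeval} then gives $Q_e^{(L)}[f] = 1$ and hence $w \in W_e$. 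This yields $|W_e| = T_e$.

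The main obstacle is handling the size-truncation of $S_e$ cleanly; Lemma~\ref{lm:qsupset} is exactly the tool for this, and the key observation is that the monotonicity of $Q_e^{(t)}$ in $t$ means bounding $|Q_e^{(L)}|$ already controls the intermediate sizes so the truncation never fires under the stated hypothesis. When the hypothesis fails, the truncation can only remove entries from $S_e$, which only weakens the forward inclusion to the upper bound $|W_e| \le T_e$ and leaves $\overline{T}_e$ nonnegative.
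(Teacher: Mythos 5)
Your proof is correct and follows essentially the same route as the paper: condition on $h(e)$, then use Lemmas~\ref{lm:qsupset} and~\ref{lm:qeval} to identify $\lbrace f : (f,1)\in Q_e^{(L)}\rbrace$ with the incident edges $f\in G$ having $t_f > t_e$, noting that the truncation of $S_e$ can only remove elements (and never fires when $l_u + l_v \le \frac{2d^2L}{\varepsilon T}$). The only cosmetic omission is the subcase $h(e)=1$ but $e\notin G$, where $x_e^{(L)}=0$ forces $(e,1)\notin S$ and hence $\wt{T}_e = 0 = T_e$, which the paper dispatches in one line.
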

\begin{proof}
Let $e$ be an edge. If $h(e) = 0$, $(e, 1) \not\in S$, and so $\wt{T}_e = 0$.
This event happens with probability $1 - p$. If $h(e) = 1$ but $e \not\in G$,
$x_e = 0$, and so $(e, 1) \not\in S$, so $\wt{T}_e = 0
= T_e = \wt{T}_e$. 

Now consider the case where $h(e) = 1$ and $e$ in $G$. Then $x_e = 1$, so
$(e,1) \in S$ . $\wt{T}_e$ will then be $p^{-1}$ times the number of triangles
$uvw$, where $e = uv$ and $(uw,1),(vw,1) \in S_e$. If $l_u + l_v \le
\frac{2d^2L}{\varepsilon T}$,  then $|Q_e^{(L)}| \le \frac{2d^2L}{\varepsilon
T}$ and so by Lemma \ref{lm:qsupset}, $Q_e^{(L)} = S_e$, and otherwise
$Q_e^{(L)} \supseteq S_e$.

So it will suffice to show that \[
|\lbrace w :  (uw, 1), (vw, 1) \in Q_e^{(L)}\rbrace| =  |\lbrace \tau \in G :
\rho(\tau) = e\rbrace|
\]
. We will show that \[
\lbrace f : (f, 1) \in Q_e^{(L)}\rbrace = \lbrace f \in G : t_f > t_e 
\text{, $e$ incident to $f$} \rbrace
\]
which implies our result, as it means that $w \in \lbrace w :  (uw, 1), (vw, 1)
\in Q_e^{(L)}\rbrace$ iff the triangle $uvw$ has $t_{uv} < t_{uw},t_{vw}$.

For any $f \in E$ incident to $e$, by Lemma \ref{lm:qeval}, $(f,1) \in
Q_e^{(L)}$ iff $x_f^{(L)} - \min_{r = t_e, \dots, L} x_f^{(r)} = 1$. If $f
\not\in G$, then $x_f^{(L)} = 0$ and so this cannot hold, as $x_f^{(r)} \ge 0$
for all $r$. If $f \in G$, then $x_f^{(L)} = 1$ and so this holds iff $\min_{r
= t_e, \dots, L} x_f^{(r)} = 0$, that is, iff $t_f > t_e$. So $(f,1) \in
Q_e^{(L)}$ iff $f \in G$ and $t_f > t_e$, concluding the proof.
\end{proof}
\begin{lemma}
\label{lm:bdexp}
\[
\E{\wt{T}} \in \lbrack (1 - \varepsilon/2)T, T\rbrack
\]
\end{lemma}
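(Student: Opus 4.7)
The plan is to derive both bounds directly from Lemma~\ref{lm:wttedist}. That lemma gives $\E{\wt T_e} = p \cdot (\overline T_e / p) = \overline T_e$, so
$\E{\wt T} = \sum_e \overline T_e$. Since $\overline T_e \le T_e$ holds always and $\sum_e T_e = T$ by definition of the $T_e$, the upper bound $\E{\wt T} \le T$ follows immediately.

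For the lower bound, call an edge $e = uv$ \emph{bad} if $l_u + l_v > \frac{2d^2 L}{\varepsilon T}$ and \emph{good} otherwise. By Lemma~\ref{lm:wttedist}, $\overline T_e = T_e$ on good edges and $\overline T_e \ge 0$ on bad edges, so
\[
\E{\wt T} = \sum_e \overline T_e \;\ge\; \sum_{e \text{ good}} T_e \;=\; T - \sum_{e \text{ bad}} T_e,
\]
and it suffices to show $\sum_{e \text{ bad}} T_e \le \varepsilon T / 2$.

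The key step is to bound the weighted sum $\sum_{e=uv} T_e (l_u + l_v)$ by a double-counting argument. Reordering the sum yields
\[
\sum_{e=uv} T_e (l_u + l_v) \;=\; \sum_v l_v \sum_{e = vw} T_e,
\]
and the inner sum counts triangles $\tau$ with $v \in \rho(\tau)$, which is at most the total number of triangles through $v$. Since $G$ has max degree $d$, every vertex lies in at most $\binom{d}{2} \le d^2/2$ triangles, and since each of the $L$ updates contributes to two stream degrees we have $\sum_v l_v = 2L$. Thus $\sum_{e = uv} T_e (l_u + l_v) \le d^2 L$. A Markov-type inequality then gives
\[
\sum_{e \text{ bad}} T_e \;\le\; \frac{d^2 L}{2 d^2 L / (\varepsilon T)} \;=\; \frac{\varepsilon T}{2},
\]
which is exactly what we need.

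The argument is essentially routine once the right weighted sum is set up; the only mild subtlety is realizing that bad edges must be controlled via a weighting by $l_u + l_v$ (so that the max-degree bound on triangles-through-a-vertex and the identity $\sum_v l_v = 2L$ can both be used) rather than by trying to bound the number of bad edges directly.
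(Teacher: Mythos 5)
Your proof is correct and takes essentially the same route as the paper's: bound $\sum_{e=uv} T_e(l_u+l_v)$ by $d^2L$ via double-counting over vertices (using the $\binom{d}{2}$ triangles-per-vertex bound and the stream-degree sum), then apply a Markov-type argument to the bad edges. The only nit is that $\sum_v l_v = 2L$ should be $\sum_v l_v \le 2L$ (each update touches one edge, but an edge may receive several updates while contributing only once to each endpoint's stream degree); since you only need the upper bound, this does not affect the argument.
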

\begin{proof}
By Lemma \ref{lm:wttedist}, $\E{\wt{T}} = \sum_e\overline{T}_e$, where
$\overline{T}_e = T_e$ if $l_u + l_v \le \frac{2d^2L}{\varepsilon T}$ and
$\overline{T}_e \in \lbrack 0, T_e\rbrack$ otherwise. Recalling that $T_e =
|\lbrace \tau \in G : \rho(\tau) = e\rbrace|$, this gives us \[
\E{\wt{T}} \le T
\]
and \[
\E{\wt{T}} \ge \sum_{\substack{uv:\\l_u + l_v \le \frac{2d^2L}{\varepsilon
T}}}T_{uv}\text{.}
\]
The right-hand side of the second expression is precisely the number of
triangles $\tau$ in $G$ such that $\rho(\tau) = uv$ with $l_u + l_u \le
\frac{2d^2L}{\varepsilon T}$. So let $T^-$ be the number of triangles that do
\emph{not} satisfy this criterion. For each such triangle $\tau$, there are at
least $l_u + l_v$ updates in $\Sigma$ to edges incident to $\rho(\tau)$.
Furthermore, as the final graph has max degree $d$, at most ${d \choose 2} \le
d^2/2$ triangles use any vertex. So we have: 
\begin{align*}
L &\ge \frac{1}{2}\sum_v l_v\\
&\ge \frac{1}{d^2}\sum_{\substack{\tau, uv :\\ \rho(\tau) = uv}} l_u + l_v\\
&\ge \frac{1}{d^2}T^-\frac{2d^2L}{\varepsilon T}
\end{align*}
So $T^- \le \varepsilon T /2$, and the result follows.
\end{proof}

\begin{lemma}
\label{lm:bdvar}
\[
\var(\wt{T}) \le p^{-1}dT
\]
\end{lemma}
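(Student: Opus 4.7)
The plan is to write the total variance as a sum of per-edge variances plus covariances, show that the covariances vanish due to pairwise independence, and then bound each $\var(\wt{T}_e)$ directly from Lemma~\ref{lm:wttedist}. Concretely, $\wt{T} = \sum_e \wt{T}_e$, so
\[
  \var(\wt{T}) \;=\; \sum_e \var(\wt{T}_e) \;+\; \sum_{e \neq f} \mathrm{Cov}(\wt{T}_e, \wt{T}_f),
\]
and the target bound $p^{-1}dT$ will fall out once we argue that the cross terms are zero.

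The main step, and the main obstacle, is to verify that for each edge $e$ the random variable $\wt{T}_e$ is a deterministic function of the (fixed) stream and the single bit $h(e)$, with no dependence on $h(f)$ for any $f \neq e$. I would argue this by inspecting the algorithm: $S_e$ is created only when an update to $e$ is processed while $h(e) = 1$, and once created its updates are driven entirely by stream updates on edges incident to $e$, with the capping threshold $\tfrac{2d^2L}{\varepsilon T}$ being a function of $|S_e|$ alone. In particular, whether another edge $f$ is in $S$ never enters into what lives inside $S_e$. Hence $\wt{T}_e = g_e(h(e))$ for some deterministic (stream-dependent) function $g_e$. Because $h$ is pairwise independent, $h(e) \perp h(f)$ for $e \neq f$, and therefore $\wt{T}_e \perp \wt{T}_f$, killing every covariance in the decomposition above.

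It remains to bound each $\var(\wt{T}_e)$. By Lemma~\ref{lm:wttedist}, $\wt{T}_e$ equals $\overline{T}_e/p$ with probability $p$ and $0$ otherwise, with $\overline{T}_e \in [0, T_e]$, so
\[
  \var(\wt{T}_e) \;\le\; \E{\wt{T}_e^2} \;\le\; p \cdot (T_e/p)^2 \;=\; T_e^2/p.
\]
The max-degree-$d$ hypothesis on $G$ gives $T_e \le d$, since each triangle counted by $T_e$ corresponds to a common neighbor of the endpoints of $e$, and the endpoints have at most $d$ neighbors each. Therefore $\var(\wt{T}_e) \le dT_e/p$, and summing over $e$,
\[
  \var(\wt{T}) \;\le\; \sum_e dT_e/p \;=\; dT/p \;=\; p^{-1} dT,
\]
using $\sum_e T_e = T$ from the definition of $T_e$ via the $\rho$ map. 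The verification of the ``depends only on $h(e)$'' claim is the only nontrivial part; the rest is a routine second-moment calculation.
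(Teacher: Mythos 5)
Your proof is correct and follows essentially the same route as the paper's: both observe that for a fixed stream each $\wt{T}_e$ is a function of $h(e)$ alone, invoke pairwise independence of $h$ to drop the covariances, and then bound $\var(\wt{T}_e) \le \E{\wt{T}_e^2} \le T_e^2/p \le dT_e/p$ before summing. The only difference is that you spell out the ``depends only on $h(e)$'' claim and the bound $T_e \le d$, which the paper leaves implicit.
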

\begin{proof}
For any fixed stream $\Sigma$, each $\wt{T}_e$ depends only on whether $h(e) =
1$, and so as $h$ is pairwise independent, so are the $\wt{T}_e$, and so:
\begin{align*}
\var(\wt{T}) &= \sum_e \var(\wt{T}_e)\\
&\le \sum_e \E{\wt{T}_e^2}\\
&\le \sum_e \Pb{h(e) = 1}p^{-2}T_e^2\\
&\le \sum_e p^{-1}dT_e\\
&= p^{-1}dT
\end{align*}
\end{proof}

\restate{thm:boundedl}
\begin{proof}
By Lemma \ref{lm:bdvar}, we may set $p$ in the above algorithm to be
$\frac{16d}{\varepsilon^2 T}$, so that the algorithm requires
$O\left(\frac{d^3L^2}{\varepsilon^2T^2}\log n\right)$ space and $\var(\wt{T}) =
\frac{\varepsilon^2 T^2}{16}$. Then, by Chebyshev's inequality, the probability
that $|\wt{T} - \E{\wt{T}}| \ge \varepsilon T/2$ is at most $1/4$. 

We may then repeat the algorithm $O(\log 1/\delta)$ times in parallel, taking
the median, so that our final output is within $\varepsilon T/2$ of
$\E{\wt{T}}$ with probability $1 - \delta$. By Lemma \ref{lm:bdexp}, this
implies it is within $\varepsilon T$ of $T$.
\end{proof}

\section{Deterministic Turnstile-Sketching Equivalence}\label{sec:equivalence}
\subsection{Overview}
We will show that deterministic turnstile streaming algorithms can be expressed
as linear sketches. Here these sketches will take the form of linear functions
$\phi$ from $\Z^n$ to a module $M$ whose elements can be stored in $s$ space,
where $s$ is the space used by the turnstile streaming algorithm $\alg$. 

$M$ and $\phi$ will be characterized by ``moduli'' $a_i$ and ``overflow
vectors'' $o_i$ supported on indices smaller than $i$. A vector $x$ in $M$ is
simply a vector in $\prod_{i=1}^n \Z_{a_i}$, but instead of addition being
coordinatewise mod $(a_i)_n$, a coordinate $i$ which becomes larger than
$a_i$ will ``overflow'', with $o_i$ added to $x$ for every time $a_ie_i$ has to
be subtracted. This can cause repeated overflows, but as $o_i$ is only
supported on indices smaller than $i$, eventually these will stop.

By the structure theorem for $\Z$-modules, $M$ is isomorphic to some
direct product of cyclic modules, but this isomorphism is not (to our
knowledge) necessarily calculable in small space. However, because our
sketch $\phi$ represents a module, it has all the desirable properties
of linear sketches: it is mergeable, automatically allows deletions,
and is indifferent to stream length and order.

We will start by defining $M$ in terms of the parameters $a_i$ and
$o_i$, showing that if the parameters can be calculated in small space
then the homomorphism can also be calculated in small space.  We will
then give two methods of generating these parameters, and show that
the corresponding sketches can be used to solve stream problems,
proving equivalence first for total functions:

 \restate{thm:detreductiontotal}

 Then, for algorithms that can tolerate very long stream lengths, we prove
 equivalence for general stream problems:

 \restate{thm:detreductionpartial}

\subsection{Our Module}
\subsubsection{Definition of $M$}
Let $(a_i)_{i=1}^n$ be positive integers, and let at most $m$ of them be
greater than $1$. Let $(o_i)_{i=1}^n$ be vectors such that for all $i$, $o_i
\in \prod_{j=1}^{i-1}\Z_{a_j} \times \lbrace 0\rbrace^{n - i + 1}$. We will define
\[
  M = \left(\prod_{i=1}^n \Z_{a_i}, \star\right)\\
\]
a $\Z$-module with $\star$ as its addition operation.  We will now recursively
define a homomorphism $\phi : \Z^n \rightarrow M$, and then use this to define
$\star$.  
\begin{itemize}
  \item $\phi(\mathbf{0}) = \mathbf{0}$
  \item For $i \in \lbrack n\rbrack$, and any vector $x + re_i$ where $x_j = 0$ for all $j \ge i$, $\phi(x + re_i) = (r \bmod a_i)e_i + \phi(x + (\lfloor r / a_i \rfloor)o_i)$.
\end{itemize}
This is well-defined because $x + (\lfloor r / a_i \rfloor)o_i$ is zero on all coordinates greater than $i - 1$.

We can now define $\star$ in terms of $\phi$, using the fact that every vector in $M$ is also a vector in $\Z^n$: \[
  x \star y = \phi(x + y)
\]

\subsubsection{Algebraic Properties of $M$ and $\phi$}
In this section we will prove that $M$ is in fact a $\Z$-module, and $\phi$ is
a homomorphism from $\Z^n$ to it.

\begin{lemma}
$\phi$ is idempotent.
\end{lemma}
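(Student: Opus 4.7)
The goal is to show $\phi(\phi(x)) = \phi(x)$, which amounts to showing that $\phi$ fixes every element of $M$ (identifying elements of $M \subset \prod_i \Z_{a_i}$ with the corresponding integer vectors whose coordinates lie in $[0, a_i)$). My plan is to proceed by induction on the largest nonzero index of the vector, using the recursive definition of $\phi$ directly.

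The base case is $y = \mathbf{0}$, which is immediate from the first clause of the definition. For the inductive step, let $y \in M$ and let $i$ be the largest index with $y_i \neq 0$. Then $y = y' + y_i e_i$ where $y'_j = 0$ for all $j \geq i$, which matches the hypothesis of the recursive clause. Applying the definition gives
\[
  \phi(y) = (y_i \bmod a_i)\, e_i + \phi\bigl(y' + \lfloor y_i / a_i\rfloor \, o_i\bigr).
\]
The crucial observation is that because $y \in M$, we have $0 \leq y_i < a_i$, so $y_i \bmod a_i = y_i$ and $\lfloor y_i / a_i \rfloor = 0$. Hence $\phi(y) = y_i e_i + \phi(y')$. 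Since $y'$ agrees with $y$ on coordinates $j < i$ and is zero elsewhere, each coordinate of $y'$ still lies in $[0, a_j)$, so $y' \in M$, and its largest nonzero index is strictly smaller than $i$. By the inductive hypothesis $\phi(y') = y'$, and we conclude $\phi(y) = y_i e_i + y' = y$.

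There is no real obstacle here beyond unpacking the definition; the only subtle point is making sure the inductive hypothesis applies to $y'$, which requires observing that zeroing out the top coordinate of an element of $M$ yields another element of $M$. Once the induction is set up on the largest nonzero index, the overflow term $\lfloor y_i / a_i \rfloor o_i$ vanishes identically and the recursion collapses to the identity, giving idempotency.
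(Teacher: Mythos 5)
Your proof is correct and takes the same route as the paper: the lemma reduces to the observation that $\phi$ fixes every element of $M$ pointwise, which the paper simply asserts and you verify in full by induction on the largest nonzero index. The details you supply (that $0 \le y_i < a_i$ kills the overflow term and that truncating the top coordinate stays in $M$) are exactly the right ones.
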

\begin{proof}
As for any vector $x$ in $\Z^n$ that is also in $M$, $\phi(x) = x$.
\end{proof}
\begin{lemma}
  $\star$ is commutative.
\end{lemma}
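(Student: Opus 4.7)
The plan is essentially a one-line unfolding of definitions. By construction, $\star$ is defined on $M$ by $x \star y = \phi(x+y)$, where on the right-hand side $x$ and $y$ are regarded as elements of $\Z^n$ via the inclusion $M = \prod_{i=1}^n \Z_{a_i} \subseteq \Z^n$ (using canonical representatives in $\{0,1,\dots,a_i-1\}$). Addition in $\Z^n$ is commutative, so $x + y = y + x$, and applying $\phi$ to both sides gives $\phi(x+y) = \phi(y+x)$, i.e., $x \star y = y \star x$.

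The only thing to be careful about is that this reasoning makes sense as written: $\star$ is defined on $M$, but the ``$+$'' inside $\phi(x+y)$ refers to coordinatewise integer addition in $\Z^n$, not to $\star$ itself. Once that is made explicit, there is no obstacle, and no recursive unfolding of $\phi$ is needed. The more substantive algebraic facts (that $\phi$ respects addition, that $\star$ is associative, and that $M$ really is a $\Z$-module with $\phi$ a homomorphism) will be proved in the subsequent lemmas; this one is just the observation that commutativity of $+$ on $\Z^n$ transfers to $\star$ through the definition.
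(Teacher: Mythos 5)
Your proposal is correct and matches the paper's proof, which simply notes that $\star$ is commutative ``by the symmetry of the definition''; you have merely spelled out that the symmetry comes from commutativity of coordinatewise addition in $\Z^n$ inside $\phi(x+y)$.
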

\begin{proof}
  By the symmetry of the definition.
\end{proof}
\begin{lemma}
  $\star$ is associative.
\end{lemma}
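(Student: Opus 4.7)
My plan is to factor $\phi$ through a suitable quotient of $\Z^n$, so that associativity of $\star$ descends from ordinary integer addition. Let $N \subseteq \Z^n$ be the $\Z$-submodule generated by the vectors $a_i e_i - o_i$ for $i \in [n]$. I will establish two facts: (i) $\phi(x) - x \in N$ for every $x \in \Z^n$, and (ii) $\phi(u) = \phi(v)$ whenever $u - v \in N$. Granted both, associativity is immediate: for $x, y, z \in M$, (i) yields $\phi(x+y) - (x+y) \in N$, so $\phi(x+y) + z \equiv x + y + z \pmod{N}$, and then (ii) gives $(x \star y) \star z = \phi(\phi(x+y) + z) = \phi(x+y+z)$; the symmetric argument yields $x \star (y \star z) = \phi(x+y+z)$.

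Fact (i) is a direct induction on the highest nonzero index of $x$. Writing $x = y + r e_i$ with $y_j = 0$ for $j \geq i$, the recursion gives $\phi(x) = (r \bmod a_i) e_i + \phi(y + \lfloor r/a_i \rfloor o_i)$, so $\phi(x) - x$ equals $-\lfloor r/a_i \rfloor (a_i e_i - o_i)$ plus $\phi(y + \lfloor r/a_i\rfloor o_i) - (y + \lfloor r/a_i\rfloor o_i)$. The first summand is a generator of $N$, and the second lies in $N$ by the inductive hypothesis, since $y + \lfloor r/a_i\rfloor o_i$ has highest nonzero index strictly less than $i$ (because $o_i$ is supported below $i$).

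The main obstacle is fact (ii), which I would reduce to showing $\phi(x + a_j e_j - o_j) = \phi(x)$ for each generator of $N$ and each $x \in \Z^n$. The intuition is that when the recursion processes index $j$, incrementing the coefficient of $e_j$ by $a_j$ preserves $r \bmod a_j$ while raising $\lfloor r/a_j \rfloor$ by one, producing exactly one extra copy of $o_j$ in the recursive call, which cancels the $-o_j$ we added. The subtlety is that intermediate recursion steps at indices $k > j$ can themselves introduce overflow contributions $o_k$ supported on coordinates including $j$, which may shift the coefficient of $e_j$ before index $j$ is reached. I therefore plan an outer induction on the highest nonzero index of $x$, and at each stage compare the two parallel recursions step by step, using the inductive hypothesis at strictly lower indices to absorb any discrepancies the overflow vectors accumulate in the tail before $e_j$ is ultimately processed.
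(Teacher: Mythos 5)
Your proof is correct, but it takes a genuinely different route from the paper's. The paper proves associativity head-on: an induction on the largest index supporting $x,y,z$, unwinding the recursive definition of $\phi$ and using floor/mod identities such as $\lfloor\frac{r_1+r_2}{a_i}\rfloor+\lfloor\frac{(r_1+r_2\bmod a_i)+r_3}{a_i}\rfloor=\lfloor\frac{r_1+r_2+r_3}{a_i}\rfloor$ to show $\phi(\phi(x+y)+z)=\phi(x+y+z)$ directly. You instead factor $\phi$ through the quotient $\Z^n/N$ with $N=\langle a_ie_i-o_i\rangle$, and your two facts do give associativity immediately as you describe. Both facts check out: (i) is the clean induction you give, and for (ii) the reduction to a single generator $g=a_je_j-o_j$ splits into three cases by comparing the highest nonzero index $i$ of $x$ with $j$. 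The subtlety you worry about in the case $i>j$ in fact dissolves: writing $x=y+re_i$, both $\phi(x+g)$ and $\phi(x)$ peel off the same term $(r\bmod a_i)e_i$, and their recursive arguments $y+g+\lfloor r/a_i\rfloor o_i$ and $y+\lfloor r/a_i\rfloor o_i$ still differ by exactly $g$ while being supported below $i$, so one application of the inductive hypothesis absorbs the whole difference --- there is no need to track the coefficient of $e_j$ through successive overflows. The cases $i<j$ and $i=j$ follow in one line from the recursion, using $(r+a_j)\bmod a_j=r\bmod a_j$ and $\lfloor(r+a_j)/a_j\rfloor=\lfloor r/a_j\rfloor+1$. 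What your approach buys is economy and conceptual clarity: facts (i) and (ii) simultaneously yield commutativity, associativity, the homomorphism lemma $\phi(x+y)=\phi(x)\star\phi(y)$, and invertibility, all of which the paper proves by separate inductive computations, and they make explicit the identification $M\cong\Z^n/N$ that the paper only gestures at in its overview. The paper's computation is more self-contained but harder to read. One caution: you must prove (ii) exactly as you propose, by direct induction on the recursion --- invoking $\phi(x+g)=\phi(x)\star\phi(g)=\phi(x)\star\mathbf{0}$ would be circular, since that homomorphism property is only available after associativity is established.
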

\begin{proof}
  We need to prove that for any $x, y, z$, $(x \star y) \star z = x \star (y
  \star z)$. As $x \star y = \phi(x + y)$ and we have already shown that $\star$
  is commutative, it will suffice to prove that for all $x, y, z$, $\phi(\phi(x +
  y) +z) = \phi(x + y + z)$. We will prove this by induction on $i$, the smallest 
  non-negative integer such that for all $j > i$, $x_j = y_j = z_j = 0$.

  If $i = 0$, $x = y = z = \mathbf{0}$ and so as $\phi(\mathbf{0}) = \mathbf{0}$
  the result follows immediately. Otherwise, suppose the result holds for $i - 1$
  and let $x, y, z$ be such that for all $j > i$, $x_j = y_j = z_j = 0$. Then
  we may write
  \begin{align*}
    x &= x' + r_1e_i\\
    y &= y' + r_2e_i\\
    z &= z' + r_3e_i
  \end{align*}
  where $x'_j, y'_j, z'_j$ are zero for all $j > i - 1$. Then, by the inductive
  hypothesis,
  \begin{align*}
    \phi(\phi(x + y) + z) &= \phi(\phi(x' + r_1e_i + y' + r_2e_i) + z' + r_3e_i)\\
    &= \phi(\phi(x' + y' + \left\lfloor \frac{r_1 + r_2}{a_i}\right\rfloor o_i) + (r_1 + r_2 \bmod
    a_i)e_i  + z' + r_3e_i)\\
    &= \phi(\phi(x' + y' + \left\lfloor \frac{r_1 + r_2}{a_i}\right\rfloor o_i) +
    z' + \left\lfloor\frac{(r_1 + r_2 \bmod a_i) + r_3}{a_i}\right\rfloor o_i) \\
    &+ (r_1 + r_2 + r_3 \bmod
    a_i)e_i\\
    &= \phi(\phi(x' + y' + \left\lfloor \frac{r_1 + r_2}{a_i}\right\rfloor o_i +
    z' + \left\lfloor\frac{(r_1 + r_2 \bmod a_i) + r_3}{a_i}\right\rfloor o_i))\\
    &+ (r_1 + r_2 + r_3 \bmod
    a_i)e_i\\
    &= \phi(x' + y' + \left\lfloor \frac{r_1 + r_2}{a_i}\right\rfloor o_i +
    z' + \left\lfloor\frac{(r_1 + r_2 \bmod a_i) + r_3}{a_i}\right\rfloor o_i) + (r_1 + r_2 + r_3 \bmod
    a_i)e_i\\
    &= \phi(x' + y' +  z' + \left\lfloor\frac{r_1 + r_2 + r_3}{a_i}\right\rfloor
    o_i) + (r_1 + r_2 + r_3 \bmod a_i)e_i\\
    &= \phi(x' + y' +  z' + (r_1 + r_2 + r_3)e_i)\\
    &= \phi(x + y + z)
  \end{align*}
  as \[
    x' + y' + \left\lfloor \frac{r_1 + r_2}{a_i}\right\rfloor o_i +
  z' + \left\lfloor\frac{(r_1 + r_2 \bmod a_i) + r_3}{a_i}\right\rfloor o_i
  \]
has zeros at every coordinate greater than $i - 1$.
\end{proof}

\begin{lemma}
  $\forall x, y \in \Z^n, \phi(x + y) = \phi(x) \star \phi(y)$
\end{lemma}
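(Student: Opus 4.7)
The plan is to reduce the claim directly to the associativity lemma just proved. Unfolding definitions, the statement $\phi(x+y) = \phi(x) \star \phi(y)$ is the same as $\phi(x+y) = \phi(\phi(x) + \phi(y))$, so the task is to show that one may replace each summand by its image under $\phi$ inside $\phi(\cdot+\cdot)$ without changing the result.

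The key tool is the identity
\[
  \phi(\phi(a+b)+c) \;=\; \phi(a+b+c),
\]
which was the content of the associativity proof. I would apply this identity twice. First, taking $a = 0$, $b = x$, $c = \phi(y)$, and using $\phi(0+x) = \phi(x)$, we obtain
\[
  \phi(\phi(x)+\phi(y)) \;=\; \phi(x+\phi(y)).
\]
Second, taking $a = 0$, $b = y$, $c = x$, and using $\phi(0+y) = \phi(y)$, we obtain
\[
  \phi(\phi(y)+x) \;=\; \phi(y+x) \;=\; \phi(x+y),
\]
where the last equality is simply commutativity of addition in $\Z^n$. Since addition in $\Z^n$ is commutative, $\phi(x+\phi(y)) = \phi(\phi(y)+x)$, and chaining the two displays gives $\phi(\phi(x)+\phi(y)) = \phi(x+y)$, i.e., $\phi(x) \star \phi(y) = \phi(x+y)$, as desired.

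I do not anticipate a substantial obstacle here: the only subtle point is to be careful that when applying the associativity identity, the ``inner'' sum $a+b$ is of a form the identity was stated for (rather than something already reduced through $\phi$), but taking $a = 0$ in both applications makes this trivial. In particular, this proof does not require any further case analysis on the coordinates of $x$ and $y$, because all such bookkeeping was already absorbed into the proof of associativity above. With this last lemma in hand, one concludes that $\phi$ is a $\Z$-module homomorphism from $\Z^n$ to $M$, so $M$ is closed and well-defined as a module under $\star$ and $\phi$ is the sketch map promised in Definition~\ref{def:linearsketch}.
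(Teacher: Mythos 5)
Your proof is correct, but it takes a genuinely different route from the paper's. The paper proves this lemma from scratch by the same kind of induction it used for associativity: it writes $x = x' + r_1 e_i$ and $y = y' + r_2 e_i$ on the highest coordinate not yet known to be reduced, unfolds the recursive definition of $\phi$ on both sides of the claimed equality, and reconciles the two expressions via the floor identity $\lfloor r_1/a_i\rfloor + \lfloor r_2/a_i\rfloor + \lfloor ((r_1 \bmod a_i)+(r_2 \bmod a_i))/a_i\rfloor = \lfloor (r_1+r_2)/a_i\rfloor$. You instead observe that the associativity argument actually establishes the stronger identity $\phi(\phi(u+v)+w) = \phi(u+v+w)$ for \emph{all} $u,v,w \in \Z^n$, and that two instances of it with one summand set to zero, combined with commutativity of addition in $\Z^n$, immediately yield $\phi(\phi(x)+\phi(y)) = \phi(x+\phi(y)) = \phi(\phi(y)+x) = \phi(x+y)$. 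This works: the induction in the associativity proof places no restriction on $u,v,w$ beyond being integer vectors, so taking $w = \phi(y)$ is legitimate, and no circularity is introduced since that lemma precedes this one. The one presentational point to make explicit is that you are citing the displayed identity from the \emph{proof} of the associativity lemma rather than its statement (which only asserts associativity of $\star$ on $M$); the cleanest packaging of your argument would be to promote $\phi(\phi(u+v)+w) = \phi(u+v+w)$ to a standalone lemma and derive both associativity and the homomorphism property from it. What the paper's approach buys is that each lemma is self-contained; what yours buys is avoiding a second coordinate-by-coordinate computation entirely.
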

\begin{proof}
  We proceed by induction on $i$, the smallest non-negative integer such that
  $x_j = y_j = 0$ for all $j > i$. If $i = 0$, then $x = y = \mathbf{0}$ and so
  the result follows immediately. So suppose $i > 0$ and the result holds for all
  smaller $i$. Let $x = x' + r_1e_i$, $y = y' + r_2e_i$, where $x'_j = y'_j = 0$
  for all $j \ge i$.
  \begin{align*}
    \phi(x + y) &= \phi(x' + y' + (r_1 + r_2)e_i)\\
    &= \phi(x' + y' + \left\lfloor\frac{r_1 + r_2}{a_i}\right\rfloor o_i) + (r_1 + r_2 \bmod a_i)e_i
  \end{align*}
  On the other hand:
  \begin{align*}
    \phi(x) \star \phi(y) &= \phi(\phi(x) + \phi(y))\\
    &= \phi(\phi(x' + \lfloor r_1/a_i\rfloor o_i) + \phi(y' + \lfloor r_2/a_i
    \rfloor o_i) + ((r_1 \bmod a_i) + (r_2 \bmod a_i))e_i)\\
    &= \phi(\phi(x' + \lfloor r_1/a_i\rfloor o_i) + \phi(y' + \lfloor r_2/a_i
    \rfloor o_i) + \left\lfloor\frac{(r_1 \bmod a_i) + (r_2 \bmod
    a_i)}{a_i}\right\rfloor o_i) \\
    &+  ((r_1 \bmod a_i) + (r_2 \bmod a_i) \bmod a_i)e_i\\
    &= \phi(x' + \lfloor r_1/a_i\rfloor o_i) + \phi(y' + \lfloor r_2/a_i
    \rfloor o_i) + \phi(\left\lfloor\frac{(r_1 \bmod a_i) + (r_2 \bmod
    a_i)}{a_i}\right\rfloor o_i)\\
    &+ (r_1 + r_2 \bmod a_i)e_i\\
    &= \phi(x' + y' + (\lfloor r_1/a_i\rfloor +  \lfloor r_2/a_i
    \rfloor + \left\lfloor\frac{(r_1 \bmod a_i) + (r_2 \bmod
    a_i)}{a_i}\right\rfloor)o_i) \\
    &+ (r_1 + r_2 \bmod a_i)e_i\\
    &=  \phi(x' + y' + \left\lfloor\frac{r_1 + r_2}{a_i}\right\rfloor o_i) + (r_1 + r_2 \bmod a_i)e_i\\
    &= \phi(x + y)
  \end{align*}

\end{proof}

\begin{lemma}
  $\star$ is invertible, with $\phi(-x)$ being the inverse of $\phi(x)$ for all
  $x \in \Z^n$. 
\end{lemma}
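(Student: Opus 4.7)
The plan is to obtain invertibility as an immediate corollary of the homomorphism lemma just established, together with idempotence of $\phi$ and the fact that $\phi(\mathbf{0}) = \mathbf{0}$ from the base case of $\phi$'s recursive definition. First I would note that $M$ already has a distinguished zero element, namely $\mathbf{0} = (0,\dotsc,0) \in \prod_{i=1}^n \Z_{a_i}$, and that by idempotence of $\phi$ it acts as the identity for $\star$: for any $y \in M$, $y \star \mathbf{0} = \phi(y + \mathbf{0}) = \phi(y) = y$ (the last equality using idempotence, since elements of $M$ are in particular vectors in $\Z^n$). So verifying the identity axiom is essentially free.

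Next I would verify the inverse claim itself. For any $x \in \Z^n$, apply the previous lemma with $y := -x$:
\[
\phi(x) \star \phi(-x) \;=\; \phi(x + (-x)) \;=\; \phi(\mathbf{0}) \;=\; \mathbf{0},
\]
where the first equality is the homomorphism property, the second is trivial addition in $\Z^n$, and the third is the base case of the recursive definition of $\phi$. By the commutativity lemma, the same computation also gives $\phi(-x) \star \phi(x) = \mathbf{0}$, so $\phi(-x)$ is a two-sided inverse of $\phi(x)$.

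Finally I would observe that every element $y \in M$ arises as $\phi(x)$ for some $x \in \Z^n$: simply take $x = y$, viewed as an element of $\Z^n$, and use idempotence to conclude $\phi(x) = y$. Its inverse is then $\phi(-y)$. Combined with associativity, commutativity, and the identity observation, this proves that $(M,\star)$ is an abelian group; together with the homomorphism lemma it follows that $M$ is a $\Z$-module and $\phi$ a $\Z$-module homomorphism, which is the algebraic conclusion that will be needed in the sequel.

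There is no real obstacle here: every ingredient has been set up in the preceding lemmas, and the argument is a one-line application of the homomorphism property. The only thing worth being careful about is that ``inverse'' in the statement refers to inverse with respect to $\star$, not ordinary negation in $\Z^n$, and that $\phi(-x)$ is in general not equal to $-\phi(x)$ computed coordinatewise in $\Z^n$ (because of the overflow vectors $o_i$); however, the proof never needs to know what $\phi(-x)$ looks like concretely, only that it satisfies the group identity above.
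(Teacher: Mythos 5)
Your proof is correct and uses exactly the same argument as the paper: applying the homomorphism lemma to $x$ and $-x$ to get $\phi(x)\star\phi(-x)=\phi(\mathbf{0})=\mathbf{0}$. The additional observations about the identity element and surjectivity of $\phi$ onto $M$ are fine (and harmless) but not part of the paper's one-line proof.
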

\begin{proof} 
  By the previous lemma,
  \begin{align*}
    \phi(x) \star \phi(-x) &= \phi(x + -x)\\
    &=\phi(\mathbf{0})\\
    &=\mathbf{0}
  \end{align*}
\end{proof}

Therefore, $M$ is an abelian group and so forms a $\Z$-module under the
natural definition of integer multiplication. 

\begin{lemma}
  \label{lem:homomorphism}
  $\phi : \Z^n \rightarrow M$ is a homomorphism of $\Z$-modules.
\end{lemma}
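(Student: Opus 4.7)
The plan is to observe that essentially all the work is already done: to verify that $\phi$ is a $\Z$-module homomorphism we need to check two compatibilities, namely $\phi(x+y)=\phi(x)\star\phi(y)$ and $\phi(kx)=k\cdot\phi(x)$ for every integer scalar $k$, where $k\cdot m$ denotes the $\Z$-module action on $M$. The first compatibility is precisely the previous lemma, so the task reduces to verifying the scalar compatibility.

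First I would unpack the $\Z$-module action on $M$. Because $M$ was defined to be the abelian group $(\prod_i \Z_{a_i}, \star)$ equipped with the natural $\Z$-action, $k\cdot m$ is by definition $m\star m\star\cdots\star m$ ($k$ times) for $k\ge 0$, with $0\cdot m = \mathbf{0}$ and $(-k)\cdot m$ the $\star$-inverse of $k\cdot m$. So I only need $\phi(kx)=k\cdot\phi(x)$, and this follows from additivity by a straightforward induction on $|k|$.

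For the induction, the base case $k=0$ is immediate from $\phi(\mathbf{0})=\mathbf{0}$. For the step, assuming $\phi(kx)=k\cdot\phi(x)$, the previous lemma gives
\[
\phi((k+1)x) = \phi(kx+x) = \phi(kx)\star\phi(x) = (k\cdot\phi(x))\star\phi(x) = (k+1)\cdot\phi(x).
\]
For negative scalars I would use the already-established fact that $\phi(-x)$ is the $\star$-inverse of $\phi(x)$: combining this with the positive case yields $\phi((-k)x)=\phi(-(kx)) = -\phi(kx) = -(k\cdot\phi(x)) = (-k)\cdot\phi(x)$.

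There is no real obstacle here; the content of the lemma is essentially a bookkeeping step whose substance was absorbed into the earlier additivity lemma and the inverse lemma. The only thing worth being careful about is making the definition of the $\Z$-action on $M$ explicit, since ``the natural definition of integer multiplication'' on an abelian group is what lets the induction go through cleanly.
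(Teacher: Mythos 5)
Your proposal is correct and takes essentially the same approach as the paper: the paper's proof is the one-line observation that preservation of addition and of multiplication by $-1$ (the previously established lemmas) implies preservation of multiplication by all of $\Z$, which is exactly the induction you spell out. Your version is just a more explicit rendering of the same bookkeeping step.
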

\begin{proof} 
  We already have that $\phi$ preserves addition and multiplication by $-1$, so
  it must also preserve multiplication by elements of $\mathbb{Z}$.
\end{proof}

\subsubsection{Space}
In this section, we will prove that, provided the moduli $a_i$ are small
enough and can be generated along with the $o_i$ in sufficiently small 
space, the sketch may be maintained in small space.
\define{thm:modulespace}{Theorem}{%
Suppose $\prod_{i=1}^na_i \le 2^s$, and for each $i$, $a_i$ and $o_i$ can be
calculated in $O(S + s + m\log n)$ space. Then the sketch $\phi(x)$ can be
stored in $s$ space and maintained under updates to $x$ using only $O(S + s +
m\log n + \log r)$ space for updates to $z$ of size $r$.
}
\state{thm:modulespace}
We now present an algorithm for calculating $\phi(x)$. All vectors are
stored as a list of indices and values.

\begin{algorithm}[H]
  \DontPrintSemicolon
  \caption{Calculating $\phi(x)$}\label{alg:phicalc}
  Calculate the moduli $a_i$.\;
  $z\gets x$ \;
  \While{$\exists i, z_i \ge a_i$}{
    Let $i$ be the smallest index such that $z_i \ge a_i$.\;
  Calculate $o_i$.\;
  $z_i \gets z_i - a_i$\;
  $z \gets z + o_i$\;
  Discard $o_i$.\;
}
  \bf{return} $z$
\end{algorithm}
\begin{lemma}
  Algorithm \ref{alg:phicalc} terminates.
\end{lemma}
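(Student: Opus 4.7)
The plan is to prove termination by induction on the dimension $n$, leaning on the structural invariant that processing an index $i$ only modifies coordinates $z_j$ with $j \leq i$: the update $z_i \gets z_i - a_i$ touches only $z_i$, and the assumption that $o_i$ is supported on $\{1,\ldots,i-1\}$ means $z \gets z + o_i$ touches only strictly smaller indices. In particular, $z_{i+1},\ldots,z_n$ are preserved during every iteration that processes index $i$.

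The first consequence is that the top coordinate $z_n$ changes only during iterations that actually process index $n$, and in such an iteration $z_n$ drops by $a_n \geq 1$ while remaining nonnegative (the loop guard requires $z_n \geq a_n$). So the total number of iterations that process index $n$ is at most $\lceil \max(z_n,0)/a_n \rceil$, which is finite; moreover, once $z_n < a_n$ holds, it holds forever, since no processing of a lower index can touch $z_n$.

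These finitely many processings of index $n$ partition the execution into finitely many maximal intervals during which only indices $i < n$ are touched. On each such interval $z_n$ is frozen, and the behavior on coordinates $1,\ldots,n-1$ is exactly a run of Algorithm~\ref{alg:phicalc} in dimension $n-1$ with parameters $a_1,\ldots,a_{n-1}$ and overflow vectors $o_1,\ldots,o_{n-1}$: the ``smallest $i$ with $z_i \geq a_i$'' rule picks the same index in either description so long as $z_n < a_n$. By the inductive hypothesis each such sub-run terminates, so the whole algorithm does. The base case $n = 1$ is immediate, because $o_1 = 0$ by the support condition and the loop just repeatedly subtracts $a_1$ from $z_1$ until $z_1 < a_1$. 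The only point worth double-checking in a full write-up is the reduction-to-dimension-$(n-1)$ step: one must confirm that while $z_n < a_n$, the selection rule and the updates on $(z_1,\ldots,z_{n-1})$ coincide with the $(n-1)$-dimensional algorithm---but this is immediate from the structural invariant above. No quantitative estimates are needed; the argument is purely combinatorial.
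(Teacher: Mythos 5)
Your proof is correct. The underlying insight is the same as the paper's---each $o_i$ is supported strictly below index $i$, so overflows only propagate downward and the process is well-founded---but the induction is organized differently. The paper proves, by a double induction on the pair $(j, z_j)$, that from any state only finitely many iterations elapse before the first time the selected index exceeds $j$, and then takes $j = n$; this is a bottom-up argument (clear the indices below $j$, decrement $z_j$ once, repeat). You instead induct top-down on the dimension: $z_n$ is never increased, drops by $a_n \ge 1$ exactly when index $n$ is selected, so index $n$ is processed only finitely often, and each maximal gap between these processings is an exact, complete run of the $(n-1)$-dimensional algorithm on the current prefix $(z_1,\dots,z_{n-1})$ (complete because a gap ends precisely when no index below $n$ qualifies). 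Your version has the minor advantages of yielding an explicit bound on the number of top-level overflows and of being checkable as a clean simulation argument. One wording nit: the agreement of the two selection rules inside a gap follows from the maximality of the gap (the smallest qualifying index is below $n$ at every step of it), not from the condition $z_n < a_n$, which is only guaranteed in the final gap.
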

\begin{proof}
  First note that, as $o_i$ is only supported on indices smaller than $i$, for
  any $j$, $z_j$ will not increase unless $i > j$, where $i$ is the smallest
  index such that $z_i \ge a_i$ (or $\infty$ if there is no such index). 

  Now, we we prove that for any $j$ from $0$ to $n$, and any starting value of
  $z$, it will only take a finite number of iterations of the inner loop of the
  algorithm until the first time $i > j$. We will prove this by a double
  induction on $j$ and $z_j$.

  Suppose $j = 0$. Then $i > j$ at the start of the stream. 

  Suppose $j > 0$ and $z_j < a_j$, and the result holds for all smaller values of
  $j$. By the inductive hypothesis, after some finite number of iterations we
  reach the first time that $i > j - 1$. As this is the first time, $z_j$ remains
  unchanged and so $i > j$. 

  Finally suppose $j > 0$, $z_j \ge a_j$, and the result holds for all $j, z_j$
  where at least one of $j$ and $z_j$ is smaller.  By the inductive hypothesis,
  after some finite number of iterations we reach the first time that $i > j -
  1$. At the next iteration, $z_j$ is reduced by $a_j$ and $o_j$ is added to $z$.
  By applying the inductive hypothesis with this new value of $z$, a finite
  number more steps will bring us to the first time that $i < j$.

  Therefore, by considering $j = n$ the algorithm will eventually terminate.
\end{proof}

\begin{lemma}
  When Algorithm \ref{alg:phicalc} terminates, it returns $\phi(x)$.
\end{lemma}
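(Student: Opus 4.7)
The plan is to establish a loop invariant that $\phi(z)$ is preserved by each iteration, and combine this with idempotence of $\phi$ at termination. First I will observe that whenever the inner loop fires on index $i$, it replaces $z$ with $z - a_i e_i + o_i$, so by the homomorphism property (Lemma~\ref{lem:homomorphism}) it suffices to show $\phi(-a_i e_i + o_i) = 0$, i.e.\ that $\phi(a_i e_i - o_i) = 0$ in $M$.

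The latter follows directly from the recursive definition. Since $o_i$ is supported on coordinates strictly less than $i$, the vector $w := -o_i$ satisfies $w_j = 0$ for all $j \ge i$, so the recursion applies to $w + a_i e_i$ and yields
\[
\phi(a_i e_i - o_i) \;=\; (a_i \bmod a_i)\, e_i + \phi\!\left(-o_i + \lfloor a_i/a_i\rfloor o_i\right) \;=\; 0\cdot e_i + \phi(\mathbf{0}) \;=\; \mathbf{0}.
\]
By Lemma~\ref{lem:homomorphism} this implies $\phi(z - a_i e_i + o_i) = \phi(z) \star \phi(-a_i e_i + o_i) = \phi(z)$, so the invariant $\phi(z) = \phi(x)$ is preserved throughout; it holds at initialization since $z = x$.

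At termination, the loop guard fails, so $0 \le z_i < a_i$ for every $i$, meaning $z$ already lies in $\prod_i \Z_{a_i}$. Viewed as an element of $M$, we have $\phi(z) = z$ (idempotence, which has already been established). Combining with the invariant, the returned vector equals $\phi(z) = \phi(x)$, as claimed. The only subtle point I expect to need care on is confirming that intermediate $z$ values remain in a form where the recursive definition of $\phi$ still applies coordinate-by-coordinate; this is automatic because updating coordinate $i$ via $o_i$ only alters coordinates below $i$, and the algorithm always processes the smallest offending index first, so the argument above applies cleanly to each step.
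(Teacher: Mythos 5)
Your proof is correct and follows essentially the same route as the paper: establish the invariant that $\phi(z)$ is preserved across iterations by computing $\phi(a_i e_i - o_i) = (a_i \bmod a_i)e_i + \phi(\lfloor a_i/a_i\rfloor o_i - o_i) = \mathbf{0}$ from the recursive definition, invoke the homomorphism lemma, and note that at termination $z$ lies in $\prod_i \Z_{a_i}$ so $\phi(z) = z$. No gaps.
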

\begin{proof}
  At the end of the algorithm, the output is $z = \phi(z)$, as $\forall i, z_i <
  a_i$. At the start of the algorithm $z = x$ and so $\phi(z) = \phi(x)$. So it
  will suffice to show that each iteration of the algorithm leaves $\phi(z)$
  unchanged.

  An iteration picks some $i$ such that $z_i \ge a_i$ and replaces $z$ with $z +
  o_i - a_ie_i$. So we need to show that $\phi(z + o_i - a_ie_i)$. By Lemma
  \ref{lem:homomorphism}, $\phi$ is a homomorphism of $\mathbb{Z}$-modules.
  Therefore, 
  \begin{align*} 
    \phi(z + o_i - a_ie_i) &= \phi(z) \star \phi(a_ie_i - o_i)^{-1}\\ 
    &= \phi(z) \star( (a_i \bmod a_i)e_i + \phi(\lfloor a_i/a_i\rfloor
    o_i - o_i))^{-1}\\ 
    &= \phi(z) \star (\mathbf{0})^{-1}\\ 
    &= \phi(z) 
  \end{align*}
  concluding the proof.
\end{proof}

We now analyze the space complexity of updating this sketch. For the following
lemmas, we will assume that the conditions of Theorem~\ref{thm:modulespace}
hold. First we show that it is possible to store all the $a_i$ simultaneously.

\begin{lemma}
  The moduli $a_i$ can be stored in $O(s + m\log n)$ space.
\end{lemma}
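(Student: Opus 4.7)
The plan is to store only the ``nontrivial'' moduli, namely those $a_i$ with $a_i > 1$, together with their indices; every other $a_i$ is implicitly $1$ and need not be recorded. Since by hypothesis at most $m$ of the $a_i$ exceed $1$, this list has at most $m$ entries.

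For each entry I would use two fields: the index $i \in [n]$, which costs $O(\log n)$ bits, and the value $a_i$, which costs $\lceil \log_2 a_i \rceil + 1$ bits written in binary (with a self-delimiting prefix so the boundaries of the fields can be recovered in a concatenated encoding). Summing the index costs over the at most $m$ entries gives $O(m \log n)$ bits. For the values, since every trivial modulus contributes $\log 1 = 0$,
\[
\sum_{i : a_i > 1} \log_2 a_i \;=\; \sum_{i=1}^n \log_2 a_i \;=\; \log_2 \prod_{i=1}^n a_i \;\le\; s,
\]
so the total bit-length of the values is at most $s + O(m)$ (the $O(m)$ absorbing the per-entry rounding and delimiter overhead). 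Combining the two contributions gives a storage cost of $O(s + m\log n)$.

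The only real subtlety is making sure the encoding is self-delimiting so the list can be parsed back when we need to look up a particular $a_i$; a standard prefix code on each value (for example, prepending $\lceil \log_2 a_i \rceil$ in unary) handles this without changing the asymptotic bound, since the unary lengths telescope inside the same $s$ budget. No step here should pose a real obstacle; the argument is essentially bookkeeping, leveraging the product bound $\prod_i a_i \le 2^s$ from Theorem~\ref{thm:modulespace} to convert the space budget into a bit budget for the value fields.
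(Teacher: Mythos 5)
Your proof is correct and follows essentially the same approach as the paper: store only the non-$1$ moduli as $(i, a_i)$ pairs, charge $O(m\log n)$ for the indices, and bound the total bit-length of the values by $\sum_i \log a_i = \log \prod_i a_i \le s$. The extra care you take with self-delimiting encodings is fine but not needed beyond what the paper's one-line argument already establishes.
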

\begin{proof}
  We can store the non-1 moduli as pairs $(i, a_i)$. The indices take $O(\log n)$
  bits to store, and the total space used by storing the values $a_i$ is at most
  $\sum_{i=1}^n \log a_i = \log \prod_{i=1}^n a_i \le s$.
\end{proof}

\begin{lemma}
  Algorithm \ref{alg:phicalc} uses $O(S + s + m\log n + \sum_{i\in
  \lbrack n\rbrack: x_i > 0}\log x_i + ||x||_0\log n)$ space.
\end{lemma}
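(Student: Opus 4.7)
My plan is to decompose the space usage of Algorithm~\ref{alg:phicalc} into three parts: (i) storing all moduli $a_i$, (ii) the on-demand computation of a single $o_i$, and (iii) the working vector $z$ maintained across the loop. Parts (i) and (ii) are essentially given: the previous lemma stores the moduli in $O(s + m\log n)$ bits, and by hypothesis each $o_i$ is computable in $O(S + s + m\log n)$ space and is discarded immediately after the update, so only one overflow vector resides in memory at a time. All of the real work lies in bounding (iii).

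I would represent $z$ as a sparse list of (index, value) pairs and bound its size by separately controlling (a) $|\supp(z)|$ and (b) the magnitudes $z_j$ throughout execution. For (a), since $o_{i,j} \in \Z_{a_j}$, any coordinate $j$ with $a_j = 1$ never receives a nonzero contribution from any $o_i$; hence $\supp(z) \subseteq \supp(x) \cup \{j : a_j > 1\}$ at all times, giving $|\supp(z)| \le \|x\|_0 + m$ and $O((\|x\|_0 + m)\log n)$ bits for the indices.

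For (b), I would prove by induction on the loop iterations the invariant $z_j \le \max(x_j, 2a_j)$ for all $j$. The base case $z = x$ is immediate. In a step that processes the smallest overflowing index $i$, coordinates $j > i$ are unchanged, coordinate $i$ is decreased by $a_i$ (so the bound is preserved), and for each $j < i$ the \emph{minimality} of $i$ forces $z_j < a_j$ before the step, so after adding $o_{i,j} < a_j$ we have $z_j < 2a_j$. This is the main subtle point of the proof: one might worry that repeated overflow-driven additions into $z_j$ could compound without control, but the ``smallest-first'' processing rule guarantees that whenever a coordinate is about to receive an addition, it is currently below its modulus, so its post-update value lies in $[0, 2a_j)$.

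Combining, the value portion of $z$'s storage is at most $\sum_{j : x_j > 0}(\log x_j + \log a_j) + \sum_{j : x_j = 0,\, z_j > 0} O(\log a_j) + O(\|x\|_0 + m)$, which, using $\sum_j \log a_j \le s$, is $O\bigl(\sum_{j : x_j > 0}\log x_j + s + \|x\|_0 + m\bigr)$. Adding the $O((\|x\|_0 + m)\log n)$ index cost and the $O(S + s + m\log n)$ cost for moduli and the current $o_i$ absorbs the $O(\|x\|_0 + m)$ terms and yields the stated bound; the per-iteration arithmetic (subtracting $a_i e_i$ and then adding $o_i$ coordinatewise) is carried out in place on the sparse representation and fits within the same budget.
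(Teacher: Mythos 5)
Your proposal is correct and follows essentially the same route as the paper: it charges the moduli and the single resident $o_i$ to the prior lemmas, bounds $|\supp(z)|$ by $\|x\|_0 + m$ via the support of the $o_i$, and bounds each $|z_j|$ by $\max(x_j, 2a_j)$ using exactly the paper's key observation that the smallest-overflowing-index rule guarantees $z_j < a_j$ whenever $o_{i,j}$ is added. Your explicit induction for that invariant is a slightly more careful writeup of the same argument.
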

\begin{proof}
  The space cost of the algorithm comes from calculating the moduli (which
  takes $O(S + s + m\log n)$ space), calculating $o_i$ (which takes $O(S + s +
  m\log n)$ space), storing $z$, and performing addition on coordinates of $z$
  (with the things to be added of size at most that of a coordinate of $o_i$
  or $a_i$, and therefore always smaller than the size of some modulus $a_j$
  for $j\le i$).

  Therefore, it will suffice to show that storing $z$ never requires more than
  $O(m\log n + \sum_{i\in \lbrack n\rbrack: x_i > 0}\log x_i + ||x||_0\log n)$
  space. First, note that a coordinate of $z$ only increases when $o_i$ is added
  to $z$, and this only happens when $z_j < a_j$ for every $j < i$. As each $o_i$
  is in $\prod_{j = 1}^n \Z_{a_j}$, this has two implications: 
  \begin{enumerate}
    \item At most $m + ||x||_0$ coordinates of $z$ are ever non-zero.
    \item Every non-zero coordinate $z_j$ is either no larger than $x_j$, or is at most twice $a_j$.
  \end{enumerate}
  The first of these two implies that we can store the indices $j$ such that $z_j
  > 0$ in at most $O((m + ||x||_0)\log n)$ space, while the second implies that
  we can store the list of values associated with these indices in at most
  $O(\sum_{i = 1}^n\log (2a_i) + \sum_{i\in \lbrack n\rbrack: x_i > 0}\log x_i) =
  O(s + \sum_{i\in \lbrack n\rbrack: x_i > 0}\log x_i)$ space.

\end{proof}
\begin{lemma}
  For any $x, y$ in $M$, $x \star y$ can be calculated in $O(S + s + m\log n)$
  space.
\end{lemma}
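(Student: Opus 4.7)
The plan is to simply invoke Algorithm~\ref{alg:phicalc} on the integer vector $z = x + y$ and then apply the space analysis from the previous lemma, using that $x, y \in M$ already satisfy strong coordinate bounds.

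First I would observe that $x \star y = \phi(x+y)$ by definition, so computing $x \star y$ reduces to running the reduction algorithm on the vector $x + y \in \Z^n$. Since both $x$ and $y$ lie in $M = \prod_{i=1}^n \Z_{a_i}$, each of them is zero on every coordinate $i$ with $a_i = 1$, so each is supported on at most $m$ coordinates; in particular $\|x+y\|_0 \le 2m$. Moreover, on each coordinate $i$ we have $(x+y)_i \le 2(a_i - 1) < 2 a_i$, so $\log (x+y)_i = O(1) + \log a_i$.

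Next I would appeal directly to the previous lemma, which shows that Algorithm~\ref{alg:phicalc} on input $z$ uses
\[
O\!\left( S + s + m\log n + \sum_{i : z_i > 0} \log z_i + \|z\|_0 \log n \right)
\]
space. Substituting the bounds above, $\|z\|_0 \log n \le 2m \log n = O(m \log n)$, and
\[
\sum_{i : z_i > 0} \log z_i \;\le\; \sum_{i : a_i > 1} \bigl( O(1) + \log a_i \bigr) \;=\; O(m) + \log \prod_i a_i \;\le\; O(m) + s.
\]
Adding these contributions to the $O(S + s + m \log n)$ cost of generating the $a_i$'s and $o_i$'s on the fly, the total is $O(S + s + m \log n)$, as claimed.

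The only potential obstacle is a bookkeeping issue: we must ensure that forming and storing the intermediate vector $z = x + y$ (before feeding it to the loop) really does fit within the stated budget, rather than temporarily requiring $2s$ bits or similar. This is handled by the same representation used throughout, namely storing $z$ as a sparse list of $(i, z_i)$ pairs; since at most $2m$ indices are involved and each value is at most $2a_i$, this list is of size $O(s + m \log n)$. The loop then proceeds coordinate by coordinate exactly as in Algorithm~\ref{alg:phicalc}, preserving the invariants that justify the previous lemma's space bound.
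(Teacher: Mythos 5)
Your proof is correct and takes the same approach as the paper: the paper's own proof is a one-line appeal to the previous lemma's space bound for Algorithm~\ref{alg:phicalc} on $x+y$, using exactly the fact that $x, y \in \prod_i \Z_{a_i}$. You have simply spelled out the substitution of the sparsity and magnitude bounds that the paper leaves implicit.
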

\begin{proof}
  $x \star y = \phi(x + y)$, so as $x$ and $y$ are both in $\prod_{i =1}^n
  \Z_{a_i}$, this follows directly from the previous lemma.
\end{proof}
We are now ready to prove that, for suitably generated $a_i$ and $o_i$, we may
maintain our sketch in small space.
\restate{thm:modulespace}
\begin{proof}
  We may store the sketch in only $s$ space by only storing the indices $i$
  where $a_i > 1$. We can then query it in $O(S + s + m\log n)$ space by
  calculating the moduli, and update it in space $O(S + s + m\log n + \log r)$
  for updates of size $r$ to $z$ by calculating $\phi(\phi(x) + re_i)$, where
  $i$ is the coordinate updated.
\end{proof}

\subsection{Sketching Total Functions}
\subsubsection{Overview}
In order to prove an equivalence between linear sketches and turnstile
algorithms for total functions, we need to define parameters $a_i$ and $o_i$ to
instantiate the linear sketch $\phi \rightarrow \Z^n$.

Once we have defined these parameters we will prove the sketch is ``correct''
--- for every $x \in \Z^n$, there is a stream with frequency $x$ on which
$\alg$ outputs the same thing as it does on $\kappa(\phi(x))$. We will then
show that it is possible to generate the parameters $a_i$ and $o_i$ in $O(s +
m\log n)$ space, and therefore by Theorem~\ref{thm:modulespace} we may maintain
the sketch in this space.

Finally, we will show that, using the streams described in the correctness
section, it is possible to recover a solution to any stream problem solved by
$\alg$ using the sketch.

\subsubsection{Defining the Parameters}
\label{sec:defineMtotal}
The $a_i$ and $o_i$ will be defined as the output of the following procedure,
which proceeds through the indices $i$ with backtracking.

For $i = 1, \dots, n$:
\begin{itemize}
  \item Let $x_j$ be defined as the $j^\text{th}$ vector in  $x \in \prod_{j =
    1}^{i - 1}\Z_{a_j} \times \Z \times \lbrace 0 \rbrace^{n - i}$ in
    little-endian order. 

    Let $j_2$ be the smallest integer such that there exists $j_1 < j_2$ such
    that $\alg(\kappa(x_{j_2})) = \alg(\kappa(x_{j_1}))$. Choose $a_i$, $o_i$
    so that $x_{j_2} - x_{j_1} = a_ie_i - o_i$. Note that $a_i \ge 0$ as
    $x_{j_2}$ is later than $x_{j_1}$ in little-endian order. If $a_i > 0$,
    move on to the next $i$.

    If $a_i = 0$, let $i'$ be the largest index such that $(x_{j_2} -
    x_{j_1})_{i'} > 0$. Choose $a_{i'}$ and $o_{i'}$ so that $x_{j_2} - x_{j_1}
    = a_{i'}e_{i'} - o_{i'}$, overwriting the old values of $a_{i'}$ and
    $o_{i'}$. Then roll $i$ back to $i' + 1$ and continue from there.
\end{itemize}

\begin{lemma}
  This procedure will terminate after a finite number of steps.
\end{lemma}
\begin{proof}
  After each iteration, either $i$ increases or $i$ is set to $i'+1$ with
  $a_{i'}$ reduced from its previous value. As the $a_i$ take values in the
  positive integers, the second can only happen finitely many times, and so the
  procedure will eventually terminate.
\end{proof}

\subsubsection{Space}
\label{sec:totalreductionspace}
\begin{lemma}
  $\prod_{i=1}^n a_i \le 2^s$.
\end{lemma}
\begin{proof}
  Consider the procedure from Section \ref{sec:defineMtotal}. In the final
  iteration (that is, when $a_n$ is defined rather than $i$ rolling back to
  some earlier index), $j_2$ was the smallest integer such that there existed
  $j_1$ such that $\alg(\kappa(x_{j_2})) = \alg(\kappa(x_{j_1}))$, and $a_n =
  (x_{j_2} - x_{j_1})_n$.

  As $j_2$ was the smallest integer such that this held, this implies that
  $\alg(\kappa(x_0),\dots,\alg(\kappa(x_{j_2-1}))$ were all distinct states. As
  the sequence $x_j$ comes from iterating through the vectors in
  $\prod_{i=1}^{n-1}\Z_{a_i} \times \Z$ in little-endian order, $j_2$ is at
  least $\prod_{i=1}^{n-1}a_i \times (x_{j_2})_n$. So as $a_n \le (x_{j_2})_n$,
  there are at least $\prod_{i=1}^n a_i$ distinct states of $\alg$, and so the
  result follows.
\end{proof}
\noindent
Recall that $m = |\lbrace i \in \lbrack n : a_i > 1\rbrace|$.
\begin{corollary}
\label{cor:msizetotal}
  $m \le s$
\end{corollary}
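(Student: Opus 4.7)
The plan is to derive this immediately from the preceding lemma, which states $\prod_{i=1}^n a_i \le 2^s$. Recall that the construction in Section~\ref{sec:defineMtotal} produces positive integers $a_i$, so every index $i$ either contributes a factor of $1$ (when $a_i = 1$) or a factor of at least $2$ (when $a_i > 1$) to the product $\prod_i a_i$.

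First I would observe that, by the definition of $m$, there are exactly $m$ indices with $a_i \ge 2$ and the remaining $n-m$ indices satisfy $a_i = 1$. Therefore
\[
  2^m \;\le\; \prod_{i : a_i > 1} a_i \;=\; \prod_{i=1}^n a_i \;\le\; 2^s,
\]
where the final inequality is the previous lemma. Taking $\log_2$ of both sides yields $m \le s$, which is the desired bound.

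There is no real obstacle here: the statement is a one-line consequence of the product bound together with the trivial fact that positive integers greater than $1$ are at least $2$. The only thing to be slightly careful about is that $a_i$ is indeed guaranteed to be a positive integer by the construction (in particular, the backtracking step in Section~\ref{sec:defineMtotal} only overwrites an $a_{i'}$ by a strictly smaller positive value, and the initial assignments come from differences of strictly increasing vectors in little-endian order, which are positive in the appropriate coordinate); this positivity is already implicit in the preceding development, so no additional argument is required.
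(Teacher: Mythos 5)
Your proof is correct and is essentially the paper's argument, just spelled out more explicitly: the paper's own (very terse) proof likewise relies on the $a_i$ being positive integers so that the $m$ moduli exceeding $1$ each contribute a factor of at least $2$ to $\prod_i a_i \le 2^s$. No issues.
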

\begin{proof}
  This follows from the fact that the procedure that generates the $a_i$ will
  always roll back if it would generate an $a_i$ equal to 0, and therefore all
  the $a_i$ are positive integers.
\end{proof}

\begin{lemma}
  \label{lem:modulicalctotal}
  We may calculate all the moduli $a_i$ in $O(s + m\log n)$ space.
\end{lemma}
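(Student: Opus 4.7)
The plan is to execute the procedure from Section~\ref{sec:defineMtotal} directly, maintaining only the moduli table and using rerun-instead-of-cache to detect collisions in the state of $\alg$.

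First, by the previous lemma the running table of non-trivial moduli $(i, a_i)$ fits in $O(s + m\log n)$ space. Given this partial table $(a_1, \dots, a_{i-1})$, enumerating vectors $x_j \in \prod_{j=1}^{i-1}\Z_{a_j} \times \Z \times \{0\}^{n-i}$ in little-endian order reduces to maintaining a single counter $j$; each coordinate of $x_j$ is recoverable on demand by successive division through the stored moduli, and by the proof that $\prod a_i \le 2^s$ we only ever need counters of $O(s)$ bits.

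To find the required collision at step $i$, I would loop $j_2 = 1, 2, \ldots$ and for each $j_2$ proceed in two phases. In the first phase, simulate $\alg$ on the canonical stream $\kappa(x_{j_2})$ and store the resulting between-updates state $y_{j_2}$, which by hypothesis uses $s$ bits. In the second phase, loop $j_1 = 0, 1, \ldots, j_2 - 1$ and re-simulate $\alg(\kappa(x_{j_1}))$ from scratch, comparing its final between-updates state against the stored $y_{j_2}$. As soon as a match is found, recover $d := x_{j_2} - x_{j_1}$ coordinate by coordinate (extracting each from $j_1, j_2$ via the current moduli) and follow the procedure: if $d_i > 0$, append $a_i := d_i$ and advance; otherwise locate the largest $i'$ with $d_{i'} > 0$, overwrite $a_{i'}$, discard every $a_\ell$ with $\ell > i'$, and resume at $i' + 1$. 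The analysis underlying $\prod a_i \le 2^s$ guarantees a collision with $j_2 \le 2^s$, so the loop terminates within the counter budget.

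The persistent storage across the whole procedure is just the moduli table ($O(s + m\log n)$), the counters $j_1, j_2$ ($O(s)$), and one stored state $y_{j_2}$ ($s$ bits); nothing else survives between calls to $\alg$. The main obstacle is book-keeping during backtracking: one must be careful that when the procedure rolls back to $i'+1$, the moduli for $\ell > i'$ are cleanly discarded and do not linger, and that the counter $j$ for the new stage is reinitialized using only the surviving (smaller) list of moduli. This is handled by recomputing the enumeration from the current table at the start of every stage, so no stale data accumulates; the transient $O(S)$ cost of each $\alg$-simulation is reusable space that is charged separately (as in Theorem~\ref{thm:modulespace}), giving the claimed $O(s + m\log n)$ bound for the moduli computation itself.
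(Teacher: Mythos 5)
Your proposal is correct and follows essentially the same approach as the paper: store the non-trivial moduli as $(i,a_i)$ pairs in $O(s+m\log n)$ bits, enumerate the $x_j$ on the fly in little-endian order from a counter, and find the collision by re-running $\alg$ on the canonical streams from scratch (the paper phrases this as ``running two copies of $\alg$''), with the transient $O(S)$ simulation cost accounted for separately as in Theorem~\ref{thm:modulespace}. Your extra care about discarding stale moduli during backtracking and bounding $j_2\le 2^s$ is consistent with, and slightly more explicit than, the paper's argument.
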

\begin{proof}
  To execute the procedure that generates the $a_i$, we need to remember the
  values of all $a_j$ for $j < i$ (which we can store in $O(s + m\log n)$ space,
  as at most $m$ are greater than $1$ and their magnitudes sum to at most $2^s$),
  and we need to find the pair $j_2 > j_1$ such that $\alg(\kappa(x_{j_2})) =
  \alg(\kappa(x_{j_1}))$. 

  We can generate any $\kappa(x_j)$ we will need in $O(S + s)$ space given a
  list of the $a_i$, as they just require marching through the elements of
  $\prod_{i=1}^{j-1} \Z_{a_i} \times \Z$ in little-endian order while executing
  the state-transition function of $\alg$, and the number of elements we go
  through is at most the number of distinct states of $\alg$.

  Therefore, we can find the pair in $O(S + s)$ space by running two copies of
  $\alg$ and feeding them the streams $\kappa(x_j)$ until we find a collision.
\end{proof}

\begin{lemma}
  \label{lem:overflowcalctotal}
  For any $i$, $o_i$ can be calculated in $O(S + s + m\log n)$ space.
\end{lemma}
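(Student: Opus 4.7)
The plan is to mirror the argument of Lemma~\ref{lem:modulicalctotal}: the collision pair $(j_1, j_2)$ that sets $a_i$ simultaneously determines $o_i$ via the identity $x_{j_2} - x_{j_1} = a_i e_i - o_i$, so after locating the pair it is enough to evaluate $o_i = x_{j_1} - x_{j_2} + a_i e_i$.

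Concretely, I would first invoke Lemma~\ref{lem:modulicalctotal} to generate the final moduli $a_1,\dots,a_n$ in $O(S + s + m\log n)$ space. Using these, I enumerate the sequence $x_j \in \prod_{j'=1}^{i-1}\Z_{a_{j'}} \times \Z \times \{0\}^{n-i}$ in little-endian order; each $x_j$ fits in $O(s + m\log n)$ bits because it has at most $m$ nonzero coordinates among the first $i-1$, each bounded by the corresponding $a_{j'}$, together with an $i$-th coordinate bounded by $j \le 2^s$. I then run the same two-copies-of-$\alg$ search used in Lemma~\ref{lem:modulicalctotal}: for each $j_2 = 1, 2, \dotsc$, compute $\alg(\kappa(x_{j_2}))$, and in an inner loop over $j_1 < j_2$ re-simulate $\alg(\kappa(x_{j_1}))$ to test for a matching state. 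Running $\alg$ uses $O(S)$ space during updates, while storing the two current states and the two current vectors uses $O(s + m\log n)$, so the overall budget is within $O(S + s + m\log n)$.

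As soon as the first collision $(j_1, j_2)$ is found I output $o_i = x_{j_1} - x_{j_2} + a_i e_i$, reducing each earlier coordinate modulo the corresponding $a_j$ as I go. By construction the $i$-th coordinate of this vector is zero and each earlier coordinate lies in $\Z_{a_j}$, so the result sits in $\prod_{j<i}\Z_{a_j} \times \{0\}^{n-i+1}$ and is storable in $O(s + m\log n)$ space, matching the claimed bound.

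The main subtlety I expect is consistency with the rollback step of the procedure in Section~\ref{sec:defineMtotal}: the collision pair produced with the finalized moduli could a priori differ from whichever pair was used in the very last iteration that wrote $a_i$, if some $a_j$ with $j < i$ were rewritten by a later rollback. I would argue no such rewrite can occur after the last setting of $a_i$: a rollback that modifies $a_{i'}$ with $i' < i$ then re-processes indices $i'+1, i'+2, \dotsc$, so it would reach $i$ again and either overwrite $a_i$ (contradicting the assumed finality) or leave it unchanged (in which case the collision defining $a_i$ also survives with the updated moduli). Hence re-running the search from the finalized moduli reproduces exactly the collision that defines the final $o_i$, and the recovery is correct.
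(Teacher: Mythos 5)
There is a genuine gap in the correctness argument. Your algorithm recovers $o_i$ by re-running the collision search \emph{at index $i$} (over $\prod_{j<i}\Z_{a_j}\times\Z\times\{0\}^{n-i}$ with the finalized moduli) and taking the first collision. But in the procedure of Section~\ref{sec:defineMtotal} the \emph{final} write to $(a_i,o_i)$ need not happen at iteration $i$ at all: it can happen during a rollback triggered at some later iteration $i''>i$, where the collision difference $x_{j_2}-x_{j_1}$ has zero $i''$-th coordinate and its largest positive coordinate is $i$. In that case the pair defining the final $o_i$ lives in $\prod_{j<i''}\Z_{a_j}\times\Z\times\{0\}^{n-i''}$, the procedure then resumes from $i+1$ and never re-executes iteration $i$, so the first collision at index $i$ (which your search reproduces, since $a_1,\dots,a_{i-1}$ are unchanged) is the \emph{stale} one that set the pre-rollback value of $a_i$. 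This situation is not exotic: whenever any rollback occurs, its target index is exactly such an $i$. Concretely, your output $x_{j_1}-x_{j_2}+a_ie_i$ would then use the stale pair together with the new (smaller) $a_i$, yielding a vector with nonzero $i$-th coordinate, which is not even of the required form.

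Your rollback discussion addresses only rewrites of $a_j$ for $j<i$ occurring \emph{after} the last write to $a_i$, under the implicit assumption that the last write to $a_i$ is a ``normal'' write at iteration $i$. That sub-argument is fine as far as it goes, but it does not cover the case where the last write to $a_i$ is itself a rollback write from $i''>i$. The paper avoids this by re-running the \emph{entire} backtracking procedure (as in Lemma~\ref{lem:modulicalctotal}) and reading off $o_i$ from whichever collision performs the final write to $a_i$ — wherever in the procedure that occurs — remembering only the current candidate pair, which fits in $O(s+m\log n)$ bits. Your proof would be repaired by doing the same, rather than restricting the search to iteration $i$.
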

\begin{proof}
  First note that, as each $o_i$ is in
  $\prod_{j=1}^{i-1}\Z_{a_j}\times\Z^{n-i+1}$, they can be \emph{stored} in $O(s
  + m\log n)$ space by storing $(j,(o_i)_j)$ pairs as above.

  To calculate $o_i$, we may first calculate all the $a_i$ as above, and then run
  the procedure until the final time where it changes $a_i$. At that point we may
  read off $o_i$ (as we know $x_j$ and $x_{j'}$).
\end{proof}

\subsubsection{Correctness}
\begin{theorem}
  \label{thm:correctnesstotal}
  Let $f : \mathbb{Z}^n \rightarrow \lbrace 0, 1\rbrace$ be any function. Suppose
  there is a ``post-processing'' function $g$ such that, for all $\sigma$ of
  length at most $n + 2m + 2$, $g(\mathcal{A}(\sigma)) = f(\freq \sigma)$. Then for
  all $x \in \mathbb{Z}^n$, $f(x) = g(\mathcal{A}(\kappa(\phi(x))))$.
\end{theorem}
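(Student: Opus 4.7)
The plan is to show that $f$ is invariant on the cosets of the submodule $N \subseteq \Z^n$ generated by $\{a_i e_i - o_i : i \in [n]\}$; the theorem then follows because $\phi(x) \equiv x \pmod{N}$ and $\kappa(\phi(x))$ is short enough that the hypothesis on $g$ applies directly.

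First I would establish the single-generator invariance $f(z) = f(z + (a_i e_i - o_i))$ for every $i$ and every $z \in \Z^n$. For this, I would use the pair $x_{j_1}, x_{j_2}$ from the final definition of $a_i, o_i$ in Section~\ref{sec:defineMtotal}, which by construction satisfies $x_{j_2} - x_{j_1} = a_i e_i - o_i$ together with $\alg(\kappa(x_{j_1})) = \alg(\kappa(x_{j_2}))$. Given a target $z$, consider the two streams
\[
  \sigma_k = \kappa(x_{j_k}) \cdot \kappa(-x_{j_1}) \cdot \kappa(z), \qquad k \in \{1,2\},
\]
whose frequencies are $z$ and $z + (a_i e_i - o_i)$ respectively. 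Since $\alg$ is deterministic and reaches the same state after the two prefixes $\kappa(x_{j_k})$, the identical suffix leaves it in the same state on both streams; applying $g$ and invoking its hypothesised correctness yields $f(z) = f(z + (a_i e_i - o_i))$.

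The main obstacle is bounding the lengths of $\sigma_1, \sigma_2$ by $n + 2m + 2$ so that the hypothesis on $g$ may be invoked. The key point is that the $x_{j_k}$ live in $\prod_{j < i}\Z_{a_j} \times \Z \times \{0\}^{n-i}$ under the final values of $a_1, \dotsc, a_{i-1}$ (after $a_i$ is fixed for the last time, no subsequent rollback can alter any $a_j$ for $j < i$), so their support is contained in $\{i\}$ together with those indices $j < i$ where $a_j > 1$; hence $\|x_{j_k}\|_0 \leq m+1$, giving $|\kappa(x_{j_k})| + |\kappa(-x_{j_1})| \leq 2m+2$, while $|\kappa(z)| \leq \|z\|_0 \leq n$. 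Iterating the single-generator invariance across all generators then yields $f(z) = f(z + v)$ for every $v \in N$.

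Finally, since $\phi(x) - x \in N$ by the definition of $\phi$, we obtain $f(x) = f(\phi(x))$. Because $\phi(x) \in \prod_i \Z_{a_i}$ must vanish at every coordinate $i$ with $a_i = 1$, we have $|\kappa(\phi(x))| = \|\phi(x)\|_0 \leq m \leq n + 2m + 2$, so the hypothesis on $g$ applies to $\kappa(\phi(x))$ and yields $g(\alg(\kappa(\phi(x)))) = f(\freq \kappa(\phi(x))) = f(\phi(x)) = f(x)$, completing the argument.
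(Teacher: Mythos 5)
Your proof is correct and rests on the same key mechanism as the paper's: take the collision pair $x_{j_1}, x_{j_2}$ with $\alg(\kappa(x_{j_1})) = \alg(\kappa(x_{j_2}))$, append a common suffix, and conclude that $f$ is invariant under adding $a_ie_i - o_i$; your length accounting ($\norm{x_{j_k}}_0 \le m+1$ and $\norm{z}_0 \le n$, giving total length at most $n+2m+2$) matches the paper's. The organizational difference is that the paper runs an induction on $(i, x_i)$ that tracks the recursive definition of $\phi$, applying the collision argument once per ``overflow'' step, whereas you first prove a global coset-invariance statement ($f$ is constant on cosets of the submodule $N$ generated by the $a_ie_i - o_i$) and then invoke $\phi(x) \equiv x \pmod{N}$. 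Your route is more modular, but two small points should be made explicit: (i) the fact that $\phi(x) - x \in N$, while true, requires its own short induction on the recursive definition of $\phi$ (using $r - (r \bmod a_i) = \lfloor r/a_i\rfloor a_i$ at each step) --- this is exactly the content that the paper's induction supplies implicitly; and (ii) a general element of $N$ has integer rather than nonnegative coefficients, so you need the invariance in both directions, which follows by applying your single-generator identity at the shifted point $z - (a_ie_i - o_i)$. Your observation that the relevant $a_j$ for $j<i$ are the final ones (since any rollback past index $i$ would force $a_i$ itself to be redefined) is also correct and worth keeping.
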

\begin{proof}
  We proceed by induction on $i$, the largest non-negative integer such that $x_j
  < a_j$ for all $j > i$, and $x_i$.

  Suppose $i = 0$. Then $x = \phi(x)$ and the result follows immediately, as
  $\kappa(x)$ has length at most $n$. So suppose that this is not the case, and
  the result holds for all $x$ with smaller $i$ or the same $i$ and smaller
  $x_i$.

  Then by the construction of $o_i$ above, there exist $x$ and $y$ in
  $\prod_{j=1}^{i-1}\mathbb{Z}_{a_j}\times\lbrace0\rbrace^{n-i+1}$ and integers
  $r' < r$ such that $\mathcal{A}(\kappa(x + re_i)) = \mathcal{A}(\kappa(y +
  r'e_i))$, and $o_i = y - x$, $a_i = r - r'$.

  Now write $x = x' + x_ie_i + x''$, where $x'$ is zero on all indices at least
  $i$ and $x''$ is zero on all indices no greater than $i$. Then
  \begin{align*}
    \phi(x) &= \phi(x' + \lfloor x_i/a_i\rfloor o_i) + (x_i \bmod a_i)e_i + x''\\
    &= \phi(x' + o_i + (x_i - a_i)e_i + x'')
  \end{align*}
  and so by the inductive hypothesis: \[
    g(\mathcal{A}(\kappa ( \phi(x)))) = f(x' + o_i + (x_i - a_i)e_i + x'')
  \]
Now consider the following two streams:
  \begin{align*}
    \sigma_1 &= \kappa(x + re_i) \cdot \kappa(-x - re_i) \cdot \kappa(x)\\
    \sigma_2 &= \kappa(y + r'e_i) \cdot \kappa(-x - re_i) \cdot \kappa(x)
  \end{align*}
  Note that $x$ and $y$ are both supported on at most $m$ indices, so the length
  of these streams is at most $n + 2m + 2$ and so $g(\mathcal{A}(\sigma_1)) = f(\freq \sigma_1)$ and $g(\mathcal{A}(\sigma_2)) = f(\freq(\sigma_2))$. Furthermore, as $\mathcal{A}(\kappa(x + re_i)) = \mathcal{A}(\kappa(y + r'e_i))$, $\mathcal{A}(\sigma_1) = \mathcal{A}(\sigma_2)$, and so $f(\freq(\sigma_1) = f(\freq(\sigma_2))$.

  Now $\freq(\sigma_1) = x$, while 
  \begin{align*}
    \freq(\sigma_2) &= (y-x) - (r - r')e_i + x\\
    &= o_i - a_ie_i + x\\
    &= x' + o_i + (x_i - a_i)e_i + x''
  \end{align*}
  and so $f(x' + o_i + (x_i - a_i)e_i + x'') = f(x)$, and so \[
    g(\mathcal{A}(\kappa ( \phi(x)))) = f(x),
  \]
  completing the proof.
\end{proof}

\subsubsection{Turnstile-Sketching Equivalence}
\restate{thm:detreductiontotal}
\begin{proof}
  Let $\alg$ be the original algorithm. The algorithm will be to keep $\phi(x)$,
  where $x$ is the input vector (which by the previous sections we can do in $O(s
  + m\log n) \le O(s\log n)$ space), and then give $\alg$ $\kappa(\phi(x))$. By
  Theorem \ref{thm:correctnesstotal}, as $m \le s$, the output of $\alg$ will
  be $f(x)$.

  By the lemmas in Section~\ref{sec:totalreductionspace}, the conditions of
  Theorem~\ref{thm:modulespace} hold, and so this sketch can be stored in $s$
  space, and maintained in $O(S + s + m\log n) \le O(S + s\log n)$ space (as $m
  \le s$ by Corollary~\ref{cor:msizetotal}).  Recovering $f(x)$ from the sketch
  requires running $\alg$ on $\kappa(x)$, which takes $O(S + s)$ space.
\end{proof}

\subsection{Sketching General Stream Problems}
\subsubsection{Overview}
In order to prove an equivalence between linear sketches and turnstile
algorithms for general stream problems, we need to define parameters $a_i$ and
$o_i$ to instantiate the linear sketch $\phi \rightarrow \Z^n$.

Once we have defined these parameters we will prove the sketch is ``correct''
--- for every $x \in \Z^n$, there are streams with frequency $x, \phi(x)$ on
which $\alg$ outputs the same thing. We will then show that it is possible to
generate the parameters $a_i$ and $o_i$ in $O(s + m\log n)$ space, and
therefore by Theorem~\ref{thm:modulespace} we may maintain the sketch in this
space.

Finally, we will show that, using the streams described in the correctness
section, it is possible to recover a solution to any stream problem solved by
$\alg$ using the sketch.

\subsubsection{Defining the Parameters}
\label{sec:defineMpartial}
Along with the parameters $a_i$ and $o_i$, we also define ``prefix vectors''
$\pi_i$ for $i = 0, \dots, n$ and ``covering streams'' $\tau_i^x$ (for $x \in
\prod_{j = 1}^{i - 1}\Z_{a_j} \times \Z \times \lbrace 0 \rbrace^{n - i}$ and
$i = 1, \dots, n$) to be used in the recursive construction and in the later
proof of correctness.

These will be defined as the output of the following procedure, which proceeds
through the indices $i$ with backtracking.

Let $\pi_0$ be the empty stream. For $i = 1, \dots, n$:
\begin{itemize}
  \item We start by defining the covering streams $\tau_i^x$. Let $x_j$ be
    defined as the $j^\text{th}$ vector in  $x \in \prod_{j = 1}^{i -
    1}\Z_{a_j} \times \Z \times \lbrace 0 \rbrace^{n - i}$ in
    little-endian order. Then we define $\tau_i^{x_1} = \tau_i^0 = \pi_{i-1}$. For
    $j > 0$, we define $\tau_i^{x_j} = \tau_i^{x_{j-1}}\cdot\kappa(x_j - x_{j-1})$.

    Note that for any $j_1 < j_2$, as $\tau_i^{x_{j_1}}$ is a prefix of
    $\tau_i^{x_{j_2}}$ we may write $\tau_i^{x_{j_2}} =
    \tau_i^{x_{j_1}}\cdot\alpha$ for some stream $\alpha$ and $\freq \alpha$ will
    be equal to $x_{j_2} - x_{j_1}$.

  \item Let $j$ be the smallest integer such that there exists $j' < j$ such that
    $\alg(\tau_i^{x_j}) = \alg(\tau_i^{x_{j'}})$. Choose $a_i$, $o_i$ so that $x_j
    - x_{j'} = a_ie_i - o_i$. Note that $a_i \ge 0$ as $x_j$ is later than $x_{j'}$
    in little-endian order. If $a_i > 0$, set $\pi_i = \tau_i^{x_{j'}}$ and move on
    to the next $i$.

    If $a_i = 0$, let $i'$ be the largest index such that $(x_j - x_{j'})_{i'} >
    0$. Choose $a_{i'}$ and $o_{i'}$ so that $x_j - x_{j'} = a_{i'}e_{i'} -
    o_{i'}$, and set $\pi_{i'} = \tau_i^{x_{j'}}$, overwriting the old values of
    $a_{i'}$, $o_{i'}$, and $\pi_{i'}$. Then roll $i$ back to $i' + 1$ and continue
    from there.  
\end{itemize}

\begin{lemma}
  This procedure will terminate after a finite number of steps.
\end{lemma}
\begin{proof}
  After each iteration, either $i$ increases or $i$ is set to $i'+1$ with
  $a_{i'}$ reduced from its previous value. As the $a_i$ take values in the
  positive integers, the second can only happen finitely many times, and so the
  procedure will eventually terminate.
\end{proof}

\subsubsection{Space}
\label{sec:partialreductionspace}
\begin{lemma}
  $\prod_{i=1}^n a_i \le 2^s$.
\end{lemma}
\begin{proof}
  Consider the procedure from Section \ref{sec:defineMpartial}. In the final iteration
  (that is, when $a_n$ is defined rather than $i$ rolling back to some earlier
  index), $j$ was the smallest integer such that there existed $j'$ such that
  $\alg(\tau_n^{x_j}) = \alg(\tau_n^{x_j'})$, and $a_n = (x_j - x_{j'})_n$.

  As $j$ was the smallest integer such that this held, this implies that
  $\alg(\tau_n^{x_0}),\dots,\alg(\tau_n^{x_{j-1}})$ were all distinct states. As
  the sequence $x_k$ comes from iterating through the vectors in
  $\prod_{i=1}^{n-1}\Z_{a_i} \times \Z$ in little-endian order, $j$ is at least
  $\prod_{i=1}^{n-1}a_i \times (x_j)_n$. So as $a_n \le (x_j)_n$, there are at
  least $\prod_{i=1}^n a_i$ distinct states of $\alg$, and so the result follows.
\end{proof}
\noindent
Recall that $m = |\lbrace i \in \lbrack n : a_i > 1\rbrace|$.
\begin{corollary}
  \label{cor:msizepartial}
  $m \le s$
\end{corollary}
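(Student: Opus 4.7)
The plan is to mirror the proof of Corollary~\ref{cor:msizetotal} essentially verbatim, since the only facts we need from the construction are that the previous lemma (which is an analogue of its counterpart in the total-function case) gives $\prod_{i=1}^n a_i \le 2^s$, and that every $a_i$ is a strictly positive integer. Granting these, the corollary follows by observing that any index $i$ counted by $m$ contributes a factor of at least $2$ to the product, so $2^m \le \prod_{i=1}^n a_i \le 2^s$, and hence $m \le s$.

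The only content that needs to be pointed out is that every $a_i$ is indeed a positive integer at the end of the procedure from Section~\ref{sec:defineMpartial}. This is true by the design of the procedure: whenever the tentative assignment would give $a_i = 0$, the procedure does not record a zero value but instead rolls back, picking the largest index $i'$ with $(x_j - x_{j'})_{i'} > 0$ and overwriting $a_{i'}$ with this strictly positive value. In the other branch the recorded $a_i = (x_j - x_{j'})_i$ is positive by construction of that branch. Since the outer loop only advances past index $i$ immediately after a strictly positive value has been recorded (either at $i$ itself or, in the rollback case, at $i' < i$ so that the loop then restarts at $i' + 1$), at termination each of $a_1, \ldots, a_n$ carries a strictly positive value.

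The main (and mild) obstacle is just being careful that the rollback mechanism cannot leave any $a_i$ in an unset or zero state; once that bookkeeping is in place, the inequality is immediate from the preceding lemma. No new ideas beyond those already used for Corollary~\ref{cor:msizetotal} are required.
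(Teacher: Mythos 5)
Your proposal is correct and matches the paper's argument: the paper likewise derives $m \le s$ from the preceding lemma $\prod_{i=1}^n a_i \le 2^s$ together with the observation that the rollback mechanism guarantees every $a_i$ is a positive integer, so each of the $m$ indices with $a_i > 1$ contributes a factor of at least $2$. No difference in approach worth noting.
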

\begin{proof}
  This follows from the fact that the procedure that generates the $a_i$ will
  always roll back if it would generate an $a_i$ equal to 0, and therefore all
  the $a_i$ are positive integers.
\end{proof}

\begin{lemma}
  \label{lem:modulicalcpartial}
  We may calculate all the moduli $a_i$, while generating the stream $\pi_n$,
  in $O(S + s + m\log n)$ space.  
  \end{lemma}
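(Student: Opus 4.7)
The plan is to execute the backtracking procedure of Section~\ref{sec:defineMpartial} in place, at every moment storing only (a)~the parameters $a_k, o_k$ already determined, (b)~a single cached state $\alg(\pi_{i-1})$ for the current index $i$, and (c)~a constant number of transient simulation copies of $\alg$ together with a mixed-radix odometer. By Corollary~\ref{cor:msizepartial} and the bound $\prod_k a_k \le 2^s$, part~(a) fits in $O(s + m \log n)$ via the sparse $(k, a_k)$ and $(k, o_k)$ representation used in the proof of Lemma~\ref{lem:modulicalctotal}; part~(b) fits in $s$ space since $\alg$'s between-update states have this size; part~(c) is what we need to argue about carefully.

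At each index $i$, the core subroutine finds the smallest $j_2$ such that $\alg(\tau_i^{x_{j_2}}) = \alg(\tau_i^{x_{j_1}})$ for some $j_1 < j_2$, using two copies of $\alg$ that both start from the cached $\alg(\pi_{i-1})$---so $\pi_{i-1}$ never needs to be regenerated from scratch to run this subroutine. An ``outer'' copy advances through $j_2 = 1, 2, \ldots$ by applying the one-shot update $\kappa(x_{j_2} - x_{j_2-1})$, with $x_{j_2}$ produced by incrementing an odometer over $\prod_{k < i}\Z_{a_k} \times \Z$ in little-endian order. For each fixed $j_2$, an ``inner'' copy is restarted from $\alg(\pi_{i-1})$ and walked through $j_1 = 1, 2, \ldots, j_2 - 1$, comparing against the stored outer state after each step; a match directly yields $a_i e_i - o_i = x_{j_2} - x_{j_1}$, or triggers the backtracking branch when the $i$-th coordinate of the difference is zero. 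The odometer fits in $O(s + m \log n)$ space because the largest $j_2$ reached before a guaranteed collision is bounded by $\alg$'s state count $2^s$, and the two running copies use $O(S)$ space during updates and $O(s)$ between them.

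The hard part will be handling backtracking. When the procedure sets $a_i = 0$ and jumps back to index $i' + 1$ after overwriting $a_{i'}, o_{i'}$, the cached $\alg(\pi_{i'})$ is no longer valid under the corrected parameters. The plan is to discard the cache and recompute from scratch: process $k = 1, 2, \ldots, i'$ sequentially starting from $\alg(\emptyset)$, running the same collision-finding subroutine at each $k$ (now using the up-to-date stored $a_k, o_k$) to determine the $j_1$ that extends $\alg(\pi_{k-1})$ to $\alg(\pi_k)$. The essential point is that this recomputation is sequential, not recursive: at any moment only one cached state, one pair of transient simulation copies, and one odometer are live, so the total space stays within $O(S + s + m \log n)$ no matter how many backtracking events occur over the course of the outer procedure.

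Finally, once all $a_i, o_i$ have been fixed by the above, $\pi_n$ is emitted in one additional pass: process $k = 1, \ldots, n$ using the same collision-finding routine one more time to recover the correct $j_1$ at each step, and emit $\kappa(x_2 - x_1) \cdot \kappa(x_3 - x_2) \cdots \kappa(x_{j_1} - x_{j_1-1})$ to the output stream while simultaneously feeding the same updates into the maintained copy to advance it from $\alg(\pi_{k-1})$ to $\alg(\pi_k)$. No backtracking can occur in this pass since the parameters are now known to be correct, and the space usage matches the $O(S + s + m \log n)$ bound of the earlier phase.
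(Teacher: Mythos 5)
Your core collision-finding routine matches the paper's: store the non-trivial $(a_k,o_k)$ sparsely, cache $\alg(\pi_{i-1})$, and run two copies of $\alg$ over the covering streams $\tau_i^{x_j}$, enumerating the $x_j$ with a little-endian odometer whose range is bounded by the $2^s$ distinct states. Up to that point the argument is sound and is essentially the paper's.

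The backtracking treatment, however, contains a genuine error. When the procedure backtracks from index $i$ to $i'+1$, it sets $\pi_{i'} := \tau_i^{x_{j'}}$ --- a stream that runs \emph{through} the covering streams of indices $i',\dots,i$ up to the collision point. Two things follow. First, no recomputation is needed at all: $\alg(\pi_{i'})$ is exactly the collision state $\alg(\tau_i^{x_{j'}}) = \alg(\tau_i^{x_j})$ that your simulation is holding at the moment the backtrack is triggered, so you simply relabel it as the new cache and continue from $i'+1$. Second, and more importantly, the recomputation you propose does not produce the right object. Re-running the per-index collision subroutine at $k=1,\dots,i'$ ``using the up-to-date $a_k,o_k$'' is not well-posed --- that subroutine is what \emph{determines} $a_k$, and run deterministically from $\pi_{i'-1}$ it finds the same first collision as before, yielding the \emph{old} $a_{i'}$ and the old $\pi_{i'}=\tau_{i'}^{x_{j''}}$, not the overwritten values and not $\tau_i^{x_{j'}}$. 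Continuing from that wrong state would derail every subsequent index. The same flaw infects your final pass for emitting $\pi_n$: the final $\pi_k$ for a backtracked index $k$ is a long covering stream through higher indices, not the prefix found by a fresh collision search at index $k$ under the final moduli, so your one-pass emission produces a different stream, and the claim that ``no backtracking can occur in this pass'' is unsupported. The correct way to regenerate $\pi_n$ (or any $\alg(\pi_k)$) in small space is to replay the \emph{entire} deterministic procedure from index $1$, backtracks included, emitting only the updates that survive to the final collision points; this is the sense in which the paper's Lemma~\ref{lem:overflowcalcpartial} ``runs the procedure until the final time where it changes $a_i$.'' With that fix the space accounting you give goes through unchanged.
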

\begin{proof}
  To execute the procedure that generates the $a_i$, we need to remember the
  values of all $a_j$ for $j < i$ (which we can store in $O(s + m\log n)$ space,
  as at most $m$ are greater than $1$ and their magnitudes sum to at most $2^s$),
  we need to remember $\alg(\pi_{i-1})$ (which takes $O(s)$ space) and then we
  need to find the pair $j > j'$ such that $\alg(\tau_i^{x_j}) =
  \alg(\tau_i^{x_{j'}})$. 

  Given the moduli $(a_j)_{j=1}^i$, we can generate the elements of the streams
  $\tau_i^{x_j}$ (from after $\pi_{i-1}$) on the fly in $O(S + s)$ space, as
  they just require marching through the elements of $\prod_{j=1}^{i-1}
  \Z_{a_i} \times \Z$ in little-endian order while executing the transition
  function of $\alg$ on each update, and the number of elements we go through
  is at most the number of distinct states of $\alg$.

  Therefore, we can find the pair in $O(S + s)$ space by running two copies of
  $\alg$ and feeding them the streams $\tau_i^{x_j}$ until we find a collision.
\end{proof}

\begin{lemma}
  \label{lem:overflowcalcpartial}
  For any $i$, $o_i$ can be calculated in $O(S + s + m\log n)$ space.
\end{lemma}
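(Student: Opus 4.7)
The plan is to mirror the proof of Lemma~\ref{lem:overflowcalctotal} from the total-functions setting, adapting it to the covering-stream construction of Section~\ref{sec:defineMpartial}. First I would observe that, as in the total-functions case, each $o_i$ lies in $\prod_{j=1}^{i-1}\Z_{a_j}\times\{0\}^{n-i+1}$, so it is supported on at most $m$ coordinates whose magnitudes are bounded by the corresponding moduli $a_j$. Storing $o_i$ as a list of index-value pairs therefore costs $O(s+m\log n)$ bits, since $\sum_j \log a_j \le s$ by the bound $\prod_j a_j \le 2^s$ and each of the (at most $m$) indices fits in $O(\log n)$ bits.

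Next I would compute $o_i$ by re-running the parameter-generation procedure described in Section~\ref{sec:defineMpartial}, re-using the space bound already established in Lemma~\ref{lem:modulicalcpartial}. Concretely, I would run the procedure all the way through but only retain the vector $o_i$ associated with the \emph{last} iteration that writes to index $i$ (the procedure may overwrite $a_i$ and $o_i$ several times due to roll-backs, but after termination the final write is the one we want). Throughout this execution we carry only the list of current moduli $(a_j)_{j=1}^{n}$ in $O(s+m\log n)$ space, the state $\alg(\pi_{i'-1})$ for whatever prefix vector is currently being extended in $O(s)$ space, and two running copies of $\alg$ used to detect the collision pair $(j,j')$ with $\alg(\tau_{i'}^{x_j})=\alg(\tau_{i'}^{x_{j'}})$, generating the updates of $\tau_{i'}^{x_k}$ on the fly in little-endian order without materializing the whole stream.

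Once the procedure reaches the final iteration that fixes $a_i$ and $o_i$, we have in hand the colliding pair $(x_j,x_{j'})$, and $o_i = a_i e_i - (x_j - x_{j'})$ can be written down directly from the two vectors that are already represented in $O(s+m\log n)$ space. The entire computation is dominated by the cost of Lemma~\ref{lem:modulicalcpartial}, namely $O(S + s + m\log n)$, and the extra storage for $o_i$ itself is absorbed into the same bound. The only step that warrants any care is making sure we do not accidentally need to store multiple covering streams or multiple overwritten versions of $o_i$ simultaneously; but since we only ever need the \emph{final} write, it suffices to keep overwriting a single scratch copy whose total size never exceeds $O(s+m\log n)$, so no new obstacle arises beyond what was already handled in Lemma~\ref{lem:modulicalcpartial}.
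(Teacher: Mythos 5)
Your proposal is correct and matches the paper's own proof: both re-run the parameter-generation procedure within the space bound of Lemma~\ref{lem:modulicalcpartial}, stop at (or retain only) the final iteration that writes $a_i$ and $o_i$, and read off $o_i = a_ie_i - (x_j - x_{j'})$ from the colliding pair, with the $O(s + m\log n)$ storage bound for $o_i$ itself following from its membership in $\prod_{j=1}^{i-1}\Z_{a_j}\times\{0\}^{n-i+1}$.
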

\begin{proof}
  First note that, as each $o_i$ is in
  $\prod_{j=1}^{i-1}\Z_{a_j}\times\Z^{n-i+1}$, they can be \emph{stored} in
  $O(s + m\log n)$ space by storing $(j,(o_i)_j)$ pairs as above.

  To calculate $o_i$, we may first calculate all the $a_i$ as above, and then run
  the procedure until the final time where it changes $a_i$. At that point we may
  read off $o_i$ (as we know $x_j$ and $x_{j'}$, as we tracked them while
  generating the streams $\tau_i^{x_j}$ and $\tau_i^{x_{j'}}$).
\end{proof}

\subsubsection{Correctness}
\begin{lemma}
\label{lem:negativeloop}
Let $\alpha$, $\beta$ be any pair of streams. Then there are infinitely many $l
\in \N$ such that $\alg(\alpha \cdot \beta^l) = \alg(\alpha \cdot
\beta^{2^s})$.
\end{lemma}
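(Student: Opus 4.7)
The plan is to use the pigeonhole principle on the state space of $\alg$, which has size at most $2^s$ since $\alg$ uses $s$ space between updates.

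First I would consider the sequence of states $\alg(\alpha \cdot \beta^l)$ for $l = 0, 1, \dotsc, 2^s$. This is a list of $2^s + 1$ states drawn from a set of at most $2^s$ possible states, so by pigeonhole there exist integers $0 \le l_1 < l_2 \le 2^s$ with
\[
\alg(\alpha \cdot \beta^{l_1}) = \alg(\alpha \cdot \beta^{l_2}).
\]

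Next I would exploit the determinism of $\alg$: because the state after processing a stream depends only on the starting state and the sequence of updates, appending the same suffix $\beta^k$ to two identical intermediate states yields identical final states. Concretely, for every $k \ge 0$,
\[
\alg(\alpha \cdot \beta^{l_1 + k}) = \alg(\alpha \cdot \beta^{l_2 + k}),
\]
so the sequence $\bigl(\alg(\alpha \cdot \beta^l)\bigr)_{l \ge l_1}$ is periodic with period $p := l_2 - l_1 \ge 1$.

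Finally, since $l_2 \le 2^s$ we have $2^s \ge l_1$, so $2^s$ lies in the periodic regime. Iterating the periodicity relation gives $\alg(\alpha \cdot \beta^{2^s + kp}) = \alg(\alpha \cdot \beta^{2^s})$ for every $k \in \N$, producing infinitely many values of $l$ with $\alg(\alpha \cdot \beta^l) = \alg(\alpha \cdot \beta^{2^s})$, as required. There is no real obstacle here — the only subtlety is making sure that $2^s$ itself lies past the start of the cycle, which is guaranteed because the collision is found within the first $2^s + 1$ terms.
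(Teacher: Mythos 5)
Your proof is correct and takes essentially the same approach as the paper: pigeonhole on the at most $2^s$ states to find a collision among $l = 0, \dotsc, 2^s$, then use determinism of $\alg$ to conclude eventual periodicity and that the index $2^s$ lies in the periodic regime. If anything, your write-up makes explicit the periodicity step that the paper's proof leaves somewhat implicit.
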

\begin{proof}
Consider the sequence of states $q_l = \alg(\alpha\cdot\beta^l)$. As there are
only $2^s$ distinct states, there is some state that recurs infinitely many
times, and that state must appear for some $l \le 2^s$. So let this $l =
2^s-k$. Each time this state appears, $\alg(\alpha\cdot\beta^{2^s})$ appears $k$
states later. So $\alg(\alpha \cdot \beta^{2^s})$ also appears infinitely many
times.
\end{proof}

\begin{theorem}
  \label{thm:approxcorrect}
  For all $x \in \Z^n$, there is a stream $\sigma$ such that
  $\freq\sigma = x$ and: \[
    \alg(\sigma) = \alg(\pi_n \cdot \overline{\pi_n}\cdot \kappa(o_1 -
    a_1e_1)^{2^s} \dots \kappa(o_n - a_ne_n)^{2^s} \cdot \kappa(\phi(x)))
  \]
\end{theorem}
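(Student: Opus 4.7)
The plan is to construct $\sigma$ by inserting ``canceling'' loops into (a modification of) the stream on the right-hand side. The construction in Section~\ref{sec:defineMpartial} supplies, for each $i$, a stream $\psi_i$ satisfying $\tau_i^{x_j} = \pi_i \cdot \psi_i$, with $\alg(\pi_i \cdot \psi_i) = \alg(\pi_i)$ and $\freq \psi_i = x_j - x_{j'} = a_i e_i - o_i$. By determinism, $\psi_i^k$ is a loop at $\pi_i$ for every $k \ge 0$. Moreover, the construction guarantees that $\pi_{i-1}$ is always a prefix of $\pi_i$, so there exist streams $\delta_i$ with $\pi_i = \pi_{i-1} \cdot \delta_i$, letting me insert $\psi_i$-loops at the nested prefix points inside $\pi_n$.

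First I would observe that $x - \phi(x)$ lies in the sublattice of $\Z^n$ generated by $\{a_i e_i - o_i\}_{i=1}^n$. This follows from the fact that $\phi$ is a homomorphism with $\phi(a_i e_i - o_i) = 0$, together with the order computation $[\Z^n : \ker \phi] = |M| = \prod a_i$ matching the index of this sublattice (which is $\prod a_i$ since the vectors form a lower-triangular system with diagonal entries $-a_i$, hence linearly independent). Writing $x - \phi(x) = \sum_i c_i (a_i e_i - o_i)$ for integers $c_i$ (possibly negative), I would then apply Lemma~\ref{lem:negativeloop} iteratively to pick natural numbers $l_1, \ldots, l_n$, each with $l_i \ge \max(0, -c_i)$, such that for every $i$,
\[
  \alg(\pi_n \cdot \overline{\pi_n} \cdot \omega_1^{l_1} \cdots \omega_i^{l_i}) = \alg(\pi_n \cdot \overline{\pi_n} \cdot \omega_1^{2^s} \cdots \omega_i^{2^s}),
\]
where $\omega_i = \kappa(o_i - a_i e_i)$. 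The $l_i$ are chosen greedily: at step $i$, apply Lemma~\ref{lem:negativeloop} with $\alpha$ equal to the prefix already built and $\beta = \omega_i$; infinitely many valid choices make the lower bound $l_i \ge -c_i$ easy to meet.

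Setting $k_i = l_i + c_i \ge 0$, I would then define
\[
  \sigma = \pi_1 \cdot \psi_1^{k_1} \cdot \delta_1 \cdot \psi_2^{k_2} \cdot \delta_2 \cdots \delta_{n-1} \cdot \psi_n^{k_n} \cdot \overline{\pi_n} \cdot \omega_1^{l_1} \cdots \omega_n^{l_n} \cdot \kappa(\phi(x)).
\]
The frequency check is routine bookkeeping: $\pi_n$ and $\overline{\pi_n}$ cancel, and the remaining contributions give $\phi(x) + \sum_i (k_i - l_i)(a_i e_i - o_i) = \phi(x) + \sum_i c_i(a_i e_i - o_i) = x$. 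The state check is a short induction: after $\pi_1 \psi_1^{k_1} \delta_1 \cdots \psi_i^{k_i}$ the algorithm is in state $\alg(\pi_i)$ (by determinism and the loop property at each level), so after the entire opening block the state is $\alg(\pi_n)$; appending $\overline{\pi_n}$ gives $\alg(\pi_n \cdot \overline{\pi_n})$; appending $\omega_1^{l_1} \cdots \omega_n^{l_n}$ yields $\alg(\pi_n \overline{\pi_n} \omega_1^{2^s} \cdots \omega_n^{2^s})$ by the choice of the $l_i$; and appending $\kappa(\phi(x))$ gives the desired equality.

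The main subtlety is coordinating the $l_i$ with the $k_i$: each $l_i$ must simultaneously be large enough to make $k_i = l_i + c_i$ nonnegative and to satisfy the hypothesis of Lemma~\ref{lem:negativeloop} applied to the already-fixed prefix $\pi_n \overline{\pi_n} \omega_1^{l_1} \cdots \omega_{i-1}^{l_{i-1}}$. Because Lemma~\ref{lem:negativeloop} produces infinitely many valid $l_i$ for any fixed prefix, choosing them one at a time in order $i = 1, 2, \ldots, n$ resolves the circularity.
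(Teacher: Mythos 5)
Your construction is correct and uses the same essential ingredients as the paper's proof: the loops $\rho_i$ (your $\psi_i$) at the prefixes $\pi_i$ with $\freq\rho_i = a_ie_i - o_i$, the compensating blocks $\kappa(o_i - a_ie_i)^{l_i}$ after $\overline{\pi_n}$, and Lemma~\ref{lem:negativeloop} to synchronize each exponent $l_i$ with the reference exponent $2^s$ while keeping $l_i$ large enough that the loop counts $k_i = l_i + c_i$ stay nonnegative. Where you genuinely diverge is in how the coefficients $c_i$ are obtained: the paper runs an induction on the largest coordinate $i$ with $x_i \ge a_i$ or $x_i<0$, peeling off one application of the recursive definition of $\phi$ at a time and threading the invariants on $y$ and $z$ through the induction, so the membership $x - \phi(x) \in \langle a_ie_i - o_i\rangle$ is never stated explicitly. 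You instead front-load this as an algebraic fact, proving $\ker\phi = N$ by combining $N \subseteq \ker\phi$ (from $\phi(a_ie_i - o_i)=0$) with the index computation $[\Z^n:N] = \prod a_i = |M| = [\Z^n:\ker\phi]$ (the last equality using that $\phi$ is surjective because it is the identity on $M$), and then verify the frequency and state claims in one shot. This is a cleaner separation of the algebra from the stream combinatorics and buys a non-inductive verification; the paper's induction has the mild advantage of producing the $c_i$ constructively as the quantities $\lfloor x_i/a_i\rfloor$ arising in the recursion, which is closer to what the recovery procedure actually computes. One cosmetic slip: your displayed $\sigma$ interleaves the segments as $\pi_1\cdot\psi_1^{k_1}\cdot\delta_1\cdots\delta_{n-1}\cdot\psi_n^{k_n}$, but with your definition $\pi_i = \pi_{i-1}\cdot\delta_i$ the segment following $\psi_i^{k_i}$ should be $\delta_{i+1}$ (so the block ends $\cdots\delta_n\cdot\psi_n^{k_n}$); this is an off-by-one in labeling only and does not affect the frequency or state computations.
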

\begin{proof}
For each $i \in \lbrack n\rbrack$, let $\psi_i$ be such that $\pi_i =
\pi_{i-1}\cdot \psi_i$ (recall that each $\pi_i$ is a prefix of the next), and
let $\rho_i$ be the stream found in the construction of $M$ such that
$\alg(\pi_i \cdot \rho_i) = \alg(\pi_i)$ and $\freq \rho_i = a_ie_i - o_i$. For
$y \in \N^n$, let $\xi_y$ be the following stream: \[
\psi_1 \cdot \rho_i^{y_1} \dots \psi_n \cdot \rho_n^{y_n}
\]
Then for all $y \in \N^n$, \[
\alg(\xi_y) = \alg(\psi_1 \dots \psi_n) = \alg(\pi_n)
\]
while $\freq \xi_y = \freq \pi_n + \sum_{i =1}^n y_i(a_ie_i - o_i)$. Next, for
$y \in \N^n$, let \[
\chi_{y} = \kappa(o_1 - a_1e_1)^{y_1} \dots \kappa(o_n - a_ne_n)^{y_n}
\]
so $\freq \chi_y = -\sum_{i =1}^n y_i(a_ie_i - o_i)$. We will prove
the theorem for a $\sigma$ of the form
\[
\sigma = \xi_y \cdot \overline{\pi_n} \cdot \chi_z \cdot \kappa(\phi(x))
\]
for carefully chosen $y$ and $z$.  Note that
\begin{align}
  \freq \sigma &= \freq \xi_y - \freq \pi_n + \freq \chi_z + \phi(x)\notag\\
  &= \phi(x) + \sum_{i =1}^n (y_i - z_i)(a_ie_i - o_i).\label{eq:freqsigma}
\end{align}
In particular, we will choose $z$ such that
\[ \alg(\pi_n \cdot \overline{\pi_n} \cdot \kappa(o_1 - a_1e_1)^{z_1}
  \dots \kappa(o_i - a_ie_i)^{z_i}) = \alg(\pi_n \cdot \overline{\pi_n}
  \cdot \kappa(o_1 - a_1e_1)^{2^s} \dots \kappa(o_i - a_ie_i)^{2^s})
\] for each $i \in \lbrack n\rbrack$.

We show that the theorem holds for such a $\sigma$ and $z$ by
induction on $i$, the largest non-negative integer such that
$0 \le x_j < a_j$ for all $j > i$.

Suppose $i = 0$. Then $x = \phi(x)$, so we can take $\sigma = \xi_{y} \cdot
\overline{\pi_n} \cdot \chi_z \cdot \kappa(\phi(x))$, where both $y$ and $z$
are the vectors with $2^s$ in every coordinate. Then
\begin{align*}
\alg(\sigma) &= \alg(\pi_n \cdot \overline{\pi_n} \cdot \chi_z \cdot \kappa(\phi(x))\\
&= \alg(\pi_n \cdot \overline{\pi_n} \cdot \kappa(o_1 - a_1e_1)^{2^s} \dots \kappa(o_n - a_ne_n)^{2^s} \cdot \phi(x))
\end{align*}
and by~\eqref{eq:freqsigma}, $\freq \sigma = \phi(x) = x$.  Finally,
the condition on $z$ is trivially satisfied, as $z_i = 2^s$ for each
$i$.

Now suppose $i > 0$, and the result holds for all $x$ with smaller $i$. Write $x = x' + x_ie_i + x''$, where $x'$ is zero on all indices at least
  $i$ and $x''$ is zero on all indices no greater than $i$. Then
  \begin{align*}
    \phi(x) &= \phi(x' + \lfloor x_i/a_i\rfloor o_i) + (x_i \bmod a_i)e_i + x''\\
    &= \phi(x' + \lfloor x_i/a_i\rfloor o_i + (x_i \bmod a_i)e_i + x'')
  \end{align*}
and by the inductive hypothesis there exists a $\sigma' = \xi_{y'} \cdot \overline{\pi_n} \cdot \chi_{z'} \cdot \kappa(\phi(x))$ such that \[
    \freq \sigma' = x' + \lfloor x_i/a_i\rfloor o_i + (x_i \bmod a_i)e_i + x''
\]
and \[
\alg(\sigma') = \alg(\pi_n \cdot \overline{\pi_n}\cdot \kappa(o_1 -
    a_1e_1)^{2^s} \dots \kappa(o_n - a_ne_n)^{2^s} \cdot \kappa(\phi(x)))
\]
with $z'$ such that \[
\alg(\pi_n \cdot \overline{\pi_n} \cdot \kappa(o_1 - a_1e_1)^{z_1'} \dots
\kappa(o_j - a_je_j)^{z_j'}) = \alg(\pi_n \cdot \overline{\pi_n} \cdot \kappa(o_1 -
a_1e_1)^{2^s} \dots \kappa(o_j - a_je_j)^{2^s})
\] for each $j \in \lbrack n\rbrack$.

Now, by Lemma~\ref{lem:negativeloop}, there are infinitely many $l \in \N$ such
that
\begin{align*}
&\alg(\pi_n \cdot \overline{\pi_n} \cdot \kappa(o_1 - a_1e_1)^{2^s} \dots
  \kappa(o_{i-1} - a_{i-1}e_{i-1})^{2^s}\cdot\kappa(o_i - a_ie_i)^l) \\
  =& \alg(\pi_n \cdot \overline{\pi_n} \cdot \kappa(o_1 -
a_1e_1)^{2^s} \dots \kappa(o_{i-1} - a_{i-1}e_{i-1})^{2^s}\cdot\kappa(o_i - a_ie_i)^{2^s})
  \end{align*}
so let $l$ be such that this holds and $l \ge z_i' -\lfloor x_i/a_i\rfloor$. We will
define $z$ to be $z'$ at every coordinate except that $z_i = l$. We will define
$y$ to be $y'$ except with $y_i = y_i' + l + \lfloor x_i/a_i\rfloor - z_i'$, so
$y$ is still in $\N^n$. 

Now let $\sigma = \xi_y \cdot \overline{\pi_n} \cdot \chi_z \cdot
\kappa(\phi(x))$. We will show this satisfies all the conditions required by
the inductive hypothesis. First, we show that $z$ obeys the desired property. For all $j \in \lbrack n\rbrack$, if $j < i$ it holds by the inductive hypothesis, as $z_j = z_j'$ for all $j < i$. Then, if $j = i$, 
\begin{align*}
\alg(\pi_n \cdot \overline{\pi_n} \cdot \kappa(o_1 - a_1e_1)^{z_1} \dots
\kappa(o_j - a_je_j)^{z_j}) &= \alg(\pi_n \cdot \overline{\pi_n} \cdot \kappa(o_1 - a_1e_1)^{2^s} \dots
\kappa(o_{i-1} - a_{i-1}e_{i-1})^{2^s}) \cdot \kappa(o_{i} - a_{i}e_i)^{l})\\
&= \alg(\pi_n \cdot \overline{\pi_n} \cdot \kappa(o_1 -
a_1e_1)^{2^s} \dots \kappa(o_i - a_ie_i)^{2^s})
\end{align*}
using the $j < i$ property and our choice of $l$.  For $j > i$, the
result again holds by the inductive hypothesis, as it holds for
$j = i$ and $z_j = z'_j$ for all $j > i$.

Now we show that $\alg(\sigma)$ takes the correct value.
\begin{align*}
\alg(\sigma) &= \alg(\xi_y \cdot \overline{\pi_n} \cdot \chi_z \cdot
\kappa(\phi(x)))\\
&= \alg(\pi_n \cdot \overline{\pi_n} \cdot \chi_z \cdot
\kappa(\phi(x)))\\
&= \alg(\pi_n \cdot \overline{\pi_n} \cdot \kappa(o_1 -
    a_1)^{2^s} \dots \kappa(o_n - a_n)^{2^s} \cdot \kappa(\phi(x)))
\end{align*}
by the property we just proved for $z$.

Finally we need to prove that $\freq \sigma = x$. The difference
between $\sigma$ and $\sigma'$ is that we replaced $\xi_{y'}$ with
$\xi_{y}$ and $\chi_{z'}$ with $\chi_z$, and $y, z$ each differ from
$y', z'$ only in coordinate $i$.  Therefore by~\eqref{eq:freqsigma},
\begin{align*}
  \freq \sigma &=  \freq \sigma' + (y_i - y_i' + z'_i - z_i)(a_ie_i - o_i)\\
               &= (x' + \lfloor x_i/a_i\rfloor o_i + (x_i \bmod a_i)e_i + x'') + (l + \lfloor x_i/a_i\rfloor - z_i + z'_i - l)(a_ie_i - o_i)\\
               &= x' + (\lfloor x_i/a_i\rfloor a_i + (x_i \bmod a_i))e_i + x''\\
&=  x' +  x_ie_i + x''\\
&= x
\end{align*}
completing the proof.
\end{proof}

\subsubsection{Sketching-Turnstile Equivalence}
\restate{thm:detreductionpartial}
\begin{proof}
  Let $\alg$ be the original algorithm. The new algorithm will be to
  construct $M$ and $\phi$ as above, and as we receive updates to the
  input vector $x$, maintain $\phi(x)$. By the Lemmas in
  Section~\ref{sec:partialreductionspace}, the conditions of
  Theorem~\ref{thm:modulespace} are satisfied, so this will require
  $O(S + s + m\log n) \le O(S + s \log n)$ space to compute (as
  $m \le s$ by Corollary~\ref{cor:msizepartial}).

  Then, at the end of the stream, we will input
  $\sigma^* := \pi_n \cdot \overline{\pi_n}\cdot \kappa(o_1 -
  a_1e_1)^{2^s} \dots \kappa(o_n - a_ne_n)^{2^s} \cdot \kappa(\phi(x))
  $ to $\alg$, and output whatever $\alg$ recovers from the resulting
  state (as we can compute $\pi_n$ we can also compute
  $\overline{\pi_n}$). By Theorem \ref{thm:approxcorrect}, there is a
  stream $\sigma$ with $\freq \sigma = x$ such that
  $\alg(\sigma^*) = \alg(\sigma)$, so as $\alg$ would have output a
  correct answer for $\sigma$ it will output the same correct answer
  when given $\sigma^*$.

  This recovery algorithm takes
  $O(S + s + m\log n) \le O(S + s\log n)$ space, as by
  Lemma~\ref{lem:modulicalcpartial} we can generate $\pi_n$ in that
  space (and therefore $\overline{\pi_n}$), even though we could not
  \emph{store} the whole stream. Similarly, we can generate the
  streams $\kappa(o_i - a_i)^{2^s}$ by generating $a_i$ and $o_i$ and
  using an $s$-bit counter to insert it the correct number of times.
  Other than computing the stream, we simply maintain $\alg$ under the
  stream $\sigma^*$ and apply the recovery algorithm, both of which
  use $S$ space by assumption.
\end{proof}

\bibliographystyle{alpha}
\bibliography{refs}

\end{document}